\newtheorem{theorem}{Theorem}
\newtheorem{lemma}[theorem]{Lemma}
\newtheorem{corollary}[theorem]{Corollary}
\newtheorem{proposition}[theorem]{Proposition}
\newtheorem{definition}[theorem]{Definition}
\newtheorem{remark}[theorem]{Remark}
\newcommand{\RR}{\mathbb{R}}
\newcommand{\CC}{\mathbb{C}}
\newcommand{\ZZ}{\mathbb{Z}}
\newcommand{\QQ}{\mathbb{Q}}
\newcommand{\OB}{\mathcal{O}_B}
\newcommand{\OO}{\mathcal{O}}
\newcommand{\sOB}{\widetilde{\mathcal{O}}_B}
\newcommand{\sOO}{\widetilde{\mathcal{O}}}
\newcommand{\bsz}{\mathcal{L}\xspace}
\newcommand{\res}{\mathtt{res}\xspace}
\newcommand{\cC}{\mathcal{C}\xspace}
\newcommand{\cB}{\mathcal{B}\xspace}
\newcommand{\tT}{\mathtt{T}\xspace}
\newcommand{\maple}{\textsc{maple}\xspace}
\newcommand{\ptopo}{\texttt{PTOPO}\xspace}
\let\svthefootnote\thefootnote
\newcommand\freefootnote[1]{%
  \let\thefootnote\relax%
  \footnotetext{#1}%
  \let\thefootnote\svthefootnote%
}
\begin{document}

\begin{frontmatter}

\title{\ptopo: Computing the Geometry and the Topology of Parametric Curves}

\author{Christina Katsamaki}
\ead{christina.katsamaki@inria.fr}

\author{Fabrice Rouillier}
\ead{Fabrice.Rouillier@inria.fr}

\author{Elias Tsigaridas}
\ead{elias.tsigaridas@inria.fr}
\address{Inria Paris, IMJ-PRG, Sorbonne Universit\'e and Paris Universit\'e}

\author{Zafeirakis Zafeirakopoulos}
\ead{ zafeirakopoulos@gtu.edu.tr}
\address{Institute of Information Technologies, Gebze Technical University, Turkey}

\begin{abstract}
We consider the problem of computing the topology and describing the
  geometry of a parametric curve in $\RR^n$.  We present an algorithm,
  \ptopo, that constructs an abstract graph that is isotopic to the
  curve in the embedding space.  Our method exploits the benefits of
  the parametric representation and does not resort to
  implicitization.

  Most importantly, we perform all computations in the parameter space
  and not in the implicit space.
  When the parametrization involves polynomials of degree at most $d$
  and maximum bitsize of coefficients $\tau$, then the worst case bit
  complexity of \ptopo is
  $ \sOB(nd^6+nd^5\tau+d^4(n^2+n\tau)+d^3(n^2\tau+ n^3)+n^3d^2\tau)$.
  This bound matches the current record bound $\sOB(d^6+d^5\tau)$ for
  the problem of computing the topology of a plane algebraic curve
  given in implicit form.
  For plane and space curves, if
  $N = \max\{d, \tau \}$, the complexity of \ptopo becomes
  $\sOB(N^6)$, which improves the state-of-the-art result, due to
  Alc\'azar and D\'iaz-Toca [CAGD'10], by a factor of $N^{10}$.
  In the same time complexity, we obtain a graph whose straight-line embedding is isotopic to the curve.
  However, visualizing the curve on top of the abstract graph
  construction, increases the bound to $\sOB(N^7)$.
  For curves of general dimension, we can also distinguish between ordinary and non-ordinary real singularities
  and determine their multiplicities in the same expected complexity of \ptopo by employing the algorithm of Blasco and P\'erez-D\'iaz [CAGD'19].
  We have implemented \ptopo in \textsc{maple} for the case of plane
  and space
  curves.  Our experiments illustrate its practical
  nature.

\end{abstract}

\begin{keyword}
Parametric curve, topology, bit complexity, polynomial systems
\end{keyword}
\end{frontmatter}

\section{Introduction}

Parametric curves constitute a classical and important topic in
computational algebra and geometry \citep{Sendra:1999:ARR:2378083.2378093} that
constantly receives attention, e.g.,~
\citep{Sed-improper-param-86,Cox-syz,BLY0-spm-2019,SenWinPer-book-08}.
The motivation behind the continuous interest in  efficient algorithms for computing with
parametric curves emanates, among others reasons, by the frequent presence of
parametric representations in computer modeling and computer aided
geometric design,~e.g.,~\citep{10.1007/978-3-642-11620-9_13}.

We focus on computing the topology of a real parametric curve, that is,
the computation of an abstract graph that is isotopic
\cite[p.~184]{bt-ecgcs-06} to the curve in the embedding space.
We design an algorithm, \ptopo,
that applies directly to rational parametric curves of any dimension 
and is complete, in the sense that there are no assumptions on the input.
We consider different characteristics of the parametrization, like
properness and normality, before computing the singularities and other
interesting points on the curve.  These points are necessary for
representing the geometry of the curve, as well as for producing a
certified visualization of plane and space curves.

\paragraph{Previous work}

A common strategy when dealing with parametric curves is
implicitization.  There has been a lot of research effort,
e.g.,~\cite{SedChen-implicit-95,BLY0-spm-2019}
 and the
references therein, in designing algorithms to compute the implicit
equations describing the curve. However, it is also important to
manipulate parametric curves directly, without converting them to
implicit form. For example, in the parametric form it is easier to visualize the curve
and to find points on it. The advantages of the latter become more significant for curves of high dimension.

The study of the topology of a real parametric curve is a
topic that has not received much attention in the literature, in
contrast to its implicit counterpart
\citep{DDRRS-topo-2018,Kobel-complexity}.  The computation of the topology requires
special treatment, since for instance it is not always easy to choose
a parameter interval such that when we plot the curve over it, we
include all the important topological features (like singular and extreme points)
\citep{ALCAZAR2010483}.  Moreover, while
visualizing the curve using symbolic computational tools, the problem
of missing points and branches may arise
\citep{Carlo-2007,Sendra2002NormalPO}.
\cite{ALCAZAR2010483} study the topology of
real parametric curves without implicitizing. They work directly with
the parametrization and address both plane and space real rational
curves.  Our algorithm to compute the topology is to be juxtaposed to
their work.
We also
refer to \cite{Caravantes2014ComputingTT} and \cite{ALBERTI2008631}
for other approaches based on computations by values and subdivision,
respectively.

To compute  the topology of a curve it is essential to detect its
singularities.  This is an important and well studied problem
\citep{ALCAZAR2010483,RUBIO2009490,Kobel-complexity}
of independent interest.
To identify the singularities, we can first compute the implicit representation and
then apply classical approaches \citep{Walker,Fulton}.
Alternatively, we can compute
the singularities using directly the parametrization.  For instance, there are
necessary and sufficient conditions to identify cusps and
inflection points using determinants,
e.g.,~\citep{LI1997491,MANOCHA19921}.

On computing the singularities of a parametric curve, a line of work related to our approach,
does so by means of a univariate
resultant 
\citep{10.1145/77269.77273, Essen, park02, PEREZDIAZ2007835, GUTIERREZ2002545}.
 We can use the Taylor resultant \citep{10.1145/77269.77273} and the $D$-resultant
\citep{Essen} of two polynomials in $K[t]$, to find
singularities of plane curves parametrized by polynomials, where $K$ is
a field of characteristic zero in the first case and of arbitrary characteristic in the
latter, without resorting to the implicit form.
\cite{park02} extends previous results to curves parametrized by
polynomials in affine $n$-space.  The generalization of the
$D$-resultant for a pair of rational functions and its application to
the study of rational plane curves, is due to \cite{GUTIERREZ2002545}.
In \citep{PEREZDIAZ2007835,BlaPer-rspc-19}  they present a method for
computing the singularities of plane curves using a univariate
resultant and characterizing the singularities using its factorization.
Notably \cite{RUBIO2009490}  work on rational parametric curves
in affine $n$-space;
they use generalized resultants to find the parameters of
the singular points.  Moreover, they characterize the singularities
and compute their multiplicities.

\cite{Cox-syz} use the syzygies of the ideal generated by the
polynomials that give the parametrization to compute the
singularities and their structure.  There are state-of-the-art
approaches that exploit this idea and relate the problem of computing
the singularities with the notion of the $\mu$-basis of the
parametrization, e.g.,
\citep{JIA20182}~and~references therein.
\cite{CHIONH200121}  reveal the connection between the
implicitization B\'ezout matrix and the singularities of a parametric
curve.
\cite{BusDAd-sing-plane-12} present  a complete factorization of
the invariant factors of resultant matrices built from birational
parametrizations of rational plane curves in terms of the singular
points of the curve and their multiplicity graph.
Let us also mention the important work on matrix methods \citep{buse:hal-00847802,buse:inria-00468964} for representing the
implicit form of parametric curves, that is suitable for numerical
computations.
\cite{bernardi}  use the projection from
the rational normal curve to the curve and exploit secant varieties.

\paragraph{Overview of our approach and our contributions}

We introduce \ptopo, a complete, exact, and efficient algorithm
(Alg.~\ref{alg:topol}) for computing the geometric properties and the topology of rational parametric
curves in $\mathbb{R}^n$. Unlike~other algorithms, e.g.~\cite{ALCAZAR2010483}, it makes no assumptions on the input curves, such as the
absence of axis-parallel asymptotes, and is applicable to any dimension.
Nevertheless, it does not identify knots for space curves nor it can be used for determining the equivalence of two knots.

If the (proper) parametrization of the curve consists of polynomials of degree
$d$ and bitsize $\tau$, then \ptopo outputs a graph isotopic  \cite[p.184]{bt-ecgcs-06}, \cite{ALCAZAR2020101830} to the
curve in the embedding space, by performing
\[
  \sOB(nd^6+nd^5\tau+d^4(n^2+n\tau)+d^3(n^2\tau+ n^3)+n^3d^2\tau)
\]
bit operations in the worst case
(Thm.~\ref{thm:ptopo-complexity-noBB}), assuming no singularities at infinity. We also provide a Las Vegas
variant with expected complexity
\[
  \sOB(d^6+d^5(n+\tau)+d^4(n^2+n\tau)+d^3(n^2\tau + n^3)+n^3d^2\tau).
\]
If $n = \OO(1)$, the bounds become $\sOB(N^6)$, where
$N = \max\{d, \tau\}$.
The vertices of the output graph correspond to  special points
 on the curve, in whose neighborhood the
 topology is not trivial, given by their parameter values.
Each edge of the graph is associated with
 two parameter values and corresponds to a unique smooth parametric arc. For an embedding isotopic to
the curve, we map every edge of the abstract graph to the corresponding
parametric~arc.

For plane and space curves, our bound improves the previously known
bound due to \cite{ALCAZAR2010483} by a factor
of $\sOB(N^{10})$. The latter
algorithm \citep{ALCAZAR2010483} performs some computations in the implicit space.
On the contrary, \ptopo $ $
  is a fundamentally different approach since we work exclusively in the parameter space and we do not use a
sweep-line algorithm to construct the isotopic graph.
We handle only the parameters that give important points on the curve,
and thus, we avoid performing operations such as univariate root isolation in
an extension field or evaluation of a polynomial at an algebraic number.


Computing singular points is an essential part of \ptopo
(Lem.~\ref{lem:compl_points}).  We chose not to exploit recent
methods, e.g.,~\citep{BlaPer-rspc-19},
for this
task because this would require introducing new variables (to employ the T-resultant).
We employ older techniques,
e.g.,~\citep{RUBIO2009490,PEREZDIAZ2007835,ALCAZAR2010483}, that rely
on a bivariate polynomial system, Eq.~(\ref{system}). 
We take advantage of this system's symmetry and of nearly optimal
algorithms for bivariate system
solving and for computations with real algebraic numbers
\citep{DDRRS-topo-2018,bouzidi:hal-01342211,det-jsc-2009,PAN2017138}.
In particular, we introduce an algorithm
for isolating the roots of over-determined bivariate polynomial systems
by exploiting the Rational
Univariate Representation (RUR)
\citep{blpr-jsc-2015,bouzidi:hal-01342211,Bouzidi:2013:RUR:2465506.2465519}
that has  worst case and expected bit complexity that matches the bounds
for square systems (Thm.~\ref{thm-biv}).
These are key steps for
obtaining the complexity bounds of Thm.~\ref{thm:ptopo-complexity-box}
and Thm.~\ref{thm:ptopo-complexity-noBB}.

 Moreover, our bound matches the current state-of-the-art
complexity bound, $\sOB(d^6 + d^5\tau)$ or $\sOB(N^6)$, for computing
the topology of implicit plane
curves~\cite{DDRRS-topo-2018,Kobel-complexity}.  However, 
if we want to visualize the graph in 2D or 3D, then we
have to compute a characteristic box (Lem.~\ref{lem:BB}) that contains
all the the topological features of the curve and the intersections of
the curve with its boundary.  In this case, the complexity  of \ptopo
becomes $\sOB(N^7)$ (Thm.~\ref{thm:ptopo-complexity-box}).

A preprocessing step of \ptopo consists of
finding a proper reparametrization of the curve (if it is not
proper).  We present explicit bit complexity bounds
(Lem.~\ref{lem:make-proper}) for the algorithm of 
\cite{Perez-proper-06} to compute a proper parametrization.  Another
preprocessing step is to ensure that there are no singularities at
infinity. Lem.~\ref{lem:normal} handles this task and provides
explicit complexity estimates.

Additionally, we consider the case where the embedding of
the abstract graph has straight line edges and not parametric arcs; in
particular for plane curves, we show that the straight
line embedding of the abstract graph in $\RR^2$ is already isotopic to the curve  (Cor.~\ref{cor:embed2}).
For space curves, the procedure supported by
Thm.~\ref{thm:ptopo-complexity-embedding3} adds a few extra vertices to the
abstract graph, so that the straight line embedding  in $\RR^3$ is isotopic to the curve. The
extra number of vertices serves in resolving situations where self-crossings
occur when continuously deforming the graph to the curve. In 
Thm.~\ref{thm:multiplicities} we also prove that for curves of any dimension, we can compute
the multiplicities and the characterization of singular points in the same
bit-complexity as computing the points (in a Las Vegas setting). For that, we use the
method by \cite{BlaPer-rspc-19}, which does not require any further
computations apart from solving the system that gives the parameters of the
singular points (cf.~\cite{ALCAZAR2010483}).

Last but not least, we provide a certified
implementation\footnote{\url{https://gitlab.inria.fr/ckatsama/ptopo}}
of \ptopo in \maple. The implementation computes the topology of plane and space curves and visualizes them.   If the input consists of rational polynomials, our algorithm and the implementation is certified, since, first of all, the algorithm always outputs the
  correct geometric and topological result. This is because we perform exact
  computations with real algebraic numbers based on arithmetic over the
  rationals. Moreover, no assumption that cannot be verified (for example by another algorithm) is made on the input.

\vspace{10pt}

A preliminary version of our work appeared in \cite{krtz-ptopo-20}. Compared
with this version, we add all the missing proofs for the algebraic tools that we use
in our algorithm,
we present the isotopic embedding for plane and space curves (Sec.~\ref{Sembed}),
and analyze the complexity of the algorithm of  \cite{BlaPer-rspc-19} to determine the multiplicities and the character of real singular points (Sec.~\ref{Smult}).

\paragraph{Organization of the paper}
The next section presents our notation and some useful results needed for our proofs.
 In Sec.~\ref{Scurves} we
give the basic background on rational curves in affine $n$-space.  We
characterize the parametrization by means of injectivity and
surjectivity and describe a reparametrization algorithm.  In
Sec.~\ref{sec:special-points} we present the algorithm to compute the
singular, extreme points (in the coordinate directions), and isolated points on the curve.  In
Sec.~\ref{sec:topology} we describe our main algorithm, \ptopo, that
constructs a graph isotopic to the curve in the embedding space
and
its complexity. In Sec.~\ref{Sembed} we expatiate on the isotopic embedding for plane and space curves. 
In Sec.~\ref{Smult}, we study the multiplicities and the character of real singular points for curves of arbitrary dimension.
 Finally, in Sec.~\ref{sec:implementation} we give examples
and experimental results.

\section{Notation and Algebraic Tools}

For a polynomial $f \in \ZZ[X]$, its infinity norm
 is equal to the maximum absolute value of its
coefficients.  We denote by $\bsz(f)$ the logarithm of its infinity
norm.  We also call the latter the bitsize of the polynomial. 
A univariate polynomial is of size $(d,\tau)$ when its degree is at
most $d$ and has bitsize $\tau$.  The bitsize of a rational function
 is the maximum of the bitsizes of the numerator and the
denominator.
We represent an algebraic number $\alpha \in \CC$ by the
\textit{isolating interval representation}.
When $\alpha \in \RR$ (resp. $\mathbb{C}$), it includes a square-free polynomial which vanishes at $\alpha$
and a rational interval  (resp. Cartesian products of rational intervals) containing $\alpha$ and no
other root of this polynomial (see for example \citep{10.5555/328435}).
We denote by $\OO$ (resp. $\OB$) the arithmetic (resp.  bit)
complexity and we use $\sOO$ (resp. $\sOB$) to ignore
(poly-)logarithmic factors.
We denote the resultant of the polynomials $f,g$ with
respect to $x$ by $\res_x(f,g)$ .  For $t\in \CC$, we denote by $\bar{t}$ its complex
conjugate.
We use $[n]$ to signify the set $\{1, \dots, n\}$.


We now present some useful results, needed for our analysis. 

\begin{lemma}
\label{lem:bounds}
Let $A=\sum_{i=0}^m a_iX^i , B=\sum_{i=0}^nb_{i}X^i\in \ZZ[X]$ of degrees $m$ and $n$ and of bitsizes $\tau$ and $\sigma$ respectively. Let $\alpha_1, \dots, \alpha_m$ be the complex roots of $A$, counting multiplicities. Then, for any $\kappa = 1, \dots, m$ it holds that
\[
  2^{-m\sigma-n \tau - (m+n) \log(m+n)}
  < | B(\alpha_{\kappa}) |
  < 2^{m\sigma+n \tau + (m+n) \log(m+n)} .
\]
\end{lemma}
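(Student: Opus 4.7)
The plan is to derive both inequalities by combining two classical tools: the Cauchy root bound on the $\alpha_\kappa$'s, and the Poisson product formula for the resultant
\[
\res_X(A,B) \;=\; a_m^{\,n}\prod_{i=1}^m B(\alpha_i).
\]
The upper bound comes directly from evaluating $B$ at a root; the lower bound comes from the fact that $\res_X(A,B)$ is a nonzero integer (so of absolute value at least $1$), combined with upper bounds on the other factors $B(\alpha_i)$, $i\neq\kappa$. The statement is of course to be read under the standing hypothesis that $B(\alpha_\kappa)\neq 0$, equivalently (given that we take any $\kappa$) that $A$ and $B$ are coprime over $\CC$; I would state this explicitly at the beginning of the proof.

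For the upper bound I would first apply Cauchy's bound: since $|a_m|\ge 1$,
\[
|\alpha_\kappa|\;\le\; 1+\frac{\normi{A}}{|a_m|}\;\le\;1+2^{\tau}\;\le\;2^{\tau+1}.
\]
Then a triangle-inequality evaluation of $B$ gives
\[
|B(\alpha_\kappa)|\;\le\;\sum_{i=0}^{n}|b_i|\,|\alpha_\kappa|^i\;\le\;(n+1)\,2^{\sigma}\,2^{n(\tau+1)},
\]
and this is bounded by $2^{m\sigma+n\tau+(m+n)\log(m+n)}$ after absorbing $(n+1)\,2^{n}$ into the $(m+n)\log(m+n)$ slack and noting $\sigma\le m\sigma$.

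For the lower bound I would isolate $B(\alpha_\kappa)$ in the Poisson formula, obtaining
\[
|B(\alpha_\kappa)|\;=\;\frac{|\res_X(A,B)|}{|a_m|^{n}\,\prod_{i\neq\kappa}|B(\alpha_i)|}\;\ge\;\frac{1}{|a_m|^{n}\,\prod_{i\neq\kappa}|B(\alpha_i)|},
\]
using that $\res_X(A,B)\in\ZZ\setminus\{0\}$. The key is to bound the denominator sharply. For each $i$ I use the inequality $|B(\alpha_i)|\le \normo{B}\max(1,|\alpha_i|)^n$ with $\normo{B}\le (n+1)2^{\sigma}$, so
\[
\prod_{i\neq\kappa}|B(\alpha_i)|\;\le\;\bigl((n+1)2^{\sigma}\bigr)^{m-1}\;\prod_{i=1}^{m}\max(1,|\alpha_i|)^{n}.
\]
The Mahler measure bound $M(A)=|a_m|\prod_i\max(1,|\alpha_i|)\le\sqrt{m+1}\,\normi{A}\le\sqrt{m+1}\,2^{\tau}$ gives $\prod_i\max(1,|\alpha_i|)\le\sqrt{m+1}\,2^{\tau}/|a_m|$, and plugging this in causes the $|a_m|^{n}$ factor from the denominator to cancel exactly with the $|a_m|^{-n}$ contributed by the Mahler bound. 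Collecting,
\[
|B(\alpha_\kappa)|\;\ge\;\frac{1}{\bigl((n+1)2^{\sigma}\bigr)^{m-1}\bigl(\sqrt{m+1}\,2^{\tau}\bigr)^{n}},
\]
which after folding $(n+1)^{m-1}$ and $(m+1)^{n/2}$ into the $(m+n)\log(m+n)$ term yields the asserted lower bound $2^{-m\sigma-n\tau-(m+n)\log(m+n)}$.

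The only delicate point is the cancellation of the $|a_m|^{n}$ factors between the denominator of the Poisson identity and the Mahler-measure bound; without exploiting this cancellation one would pick up an extra $n\tau$ in the exponent (from $|a_m|^n\le 2^{n\tau}$) and fail to match the stated bound. Everything else is routine bookkeeping to verify that the $(m+n)\log(m+n)$ slack absorbs the polynomial factors $(n+1)^{m-1}$, $\sqrt{m+1}^{\,n}$, and the constants $2^n$ coming from the $(1+2^\tau)^n$ expansion.
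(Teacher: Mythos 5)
Your proof is correct, but it takes a genuinely different route from the paper's. The paper constructs the univariate polynomial $R(Y)=\res_X\bigl(A(X),\,Y-B(X)\bigr)\in\ZZ[Y]$, notes via the Poisson formula that $R(Y)=a_m^{\,n}\prod_{\kappa}(Y-B(\alpha_\kappa))$, bounds the bitsize of $R$ by $m\sigma+n\tau+(m+n)\log(m+n)$ using standard resultant-coefficient estimates (Hadamard/Sylvester), and then reads off both inequalities at once as the Cauchy upper and lower root bounds applied to $R$. You instead treat the two directions separately: the upper bound by a Cauchy bound on $|\alpha_\kappa|$ followed by a triangle inequality on $B$, and the lower bound by isolating $B(\alpha_\kappa)$ in the Poisson factorization of the scalar resultant $\res_X(A,B)\in\ZZ\setminus\{0\}$ and controlling the other factors via Landau's inequality $M(A)\le\sqrt{m+1}\,\normi{A}$. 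The cancellation of $|a_m|^{n}$ against the Mahler-measure bound that you flag as the delicate step is indeed the extra idea your route requires; the paper's construction sidesteps it entirely because one generic coefficient bound on $R(Y)$ feeds directly into Cauchy's theorem. Your approach is more elementary and self-contained (it does not invoke a bound on the coefficients of a parametric resultant), at the cost of a two-pronged argument and the Mahler-measure detour. Both proofs, as you correctly note, tacitly assume $B(\alpha_\kappa)\ne 0$ for all $\kappa$, i.e.\ $\gcd(A,B)=1$, which is needed for the lower bound to be meaningful in either argument.
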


\begin{proof}
  Following \cite{STRZEBONSKI201931}, let
  $R=\res_X(A(X), Y-B(X))\in \ZZ[Y]$. Using the Poisson's formula for
  the resultant we can write
  $R(Y)=a_m^n\prod_{\kappa=1}^m(Y-B(\alpha_{\kappa}))$.  The maximum
  bitsize of the coefficients of $R(Y)$ is at most
  $m\sigma+n \tau + (m+n) \log(m+n)$. We observe that the roots of
  $R(Y)$ are $B(\alpha_\kappa)$ for $\kappa=1,\dots,m$. Therefore,
  using Cauchy's bound we deduce that
  \[
    2^{-m\sigma-n \tau - (m+n) \log(m+n)}< |B(\alpha_{\kappa})| <2^{m\sigma+n \tau + (m+n) \log(m+n)} .
  \]
\end{proof}

Lemmata ~\ref{lem:many-Ugcd} and \ref{lem:many-Bgcd} restate known results on the gcd computation of various univariate and bivariate polynomials.

\begin{lemma}\label{lem:many-Ugcd}
Let $f_1(X),\dots, f_n(X) \in \ZZ[X]$ of sizes $(d,\tau)$. We can compute their $\gcd$, which is of size $(d, \sOO(d+\tau))$, in worst case complexity $\sOB(n(d^3 + d^2\tau))$,  
 with a Monte Carlo algorithm in  $\sOB(d^2 + d \tau)$, 
  or with a Las Vegas algorithm in $\sOB(n(d^2 + d \tau))$. 
\end{lemma}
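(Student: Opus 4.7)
The plan is to handle each of the three complexity bounds separately, with the output-size bound as a common preliminary. For the size, I would invoke Mignotte's inequality: every divisor of $f_1$ has size $(d, \sOO(d+\tau))$, and $\gcd(f_1,\ldots,f_n)$ is such a divisor. This immediately gives the output bitsize claim and also controls the size of every intermediate result in the subsequent algorithms.

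For the worst-case bound I would reduce the $n$-fold GCD to $n-1$ pairwise GCDs, using the associativity $\gcd(f_1,\ldots,f_n)=\gcd(\gcd(f_1,f_2),f_3,\ldots,f_n)$ iteratively. By the size bound, every intermediate GCD has size $(d,\sOO(d+\tau))$, so a deterministic univariate GCD based on subresultants takes $\sOB(d^3+d^2\tau)$ at each step; summing over the $n-1$ calls gives $\sOB(n(d^3+d^2\tau))$.

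For the Monte Carlo bound I would reduce the $n$-fold GCD to a single pairwise GCD via the standard random-combination trick: choose scalars $\lambda_2,\ldots,\lambda_n$ uniformly at random from a set of size $\Omega(\mathrm{poly}(d))$ and let $h:=\sum_{i=2}^n \lambda_i f_i$. Writing $f_i = g\cdot p_i$ where $g$ is the true GCD and $\gcd(p_1,\ldots,p_n)=1$, a Schwartz--Zippel argument applied to $\res_X(p_1,\sum_i\lambda_i p_i)$ (as a polynomial in the $\lambda_i$, which is not identically zero) shows that $\gcd(f_1,h)=g$ with probability at least $1/2$. The single univariate GCD of $f_1$ and $h$, both of size $(d,\tau+\sOO(1))$, then costs $\sOB(d^2+d\tau)$ using a half-GCD algorithm \citep{PAN2017138}.

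Finally, for the Las Vegas bound I would run the Monte Carlo procedure and certify the candidate $g$ by verifying $g\mid f_i$ for every $i\in[n]$; each polynomial division costs $\sOB(d^2+d\tau)$, for a total of $\sOB(n(d^2+d\tau))$ per trial, and the expected number of trials is $O(1)$. The main obstacle I foresee is ensuring that the formation of $h$ does not itself cost $\sOB(nd\tau)$ and break the Monte Carlo target: this requires either taking $\lambda_i$ of bitsize $\sOO(1)$ and summing coefficient-wise under the regime $n=\OO(d)$ (which is the relevant one for this paper, since in later applications $n$ will be a small constant), or performing the linear combination modulo a short random prime and lifting via CRT, so that the $n$-dependence is absorbed in the polylogarithmic factors.
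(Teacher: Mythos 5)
Your proposal follows essentially the same route as the paper's own proof: iterated pairwise gcd for the worst case, a single random $\ZZ$-linear combination justified by a Schwartz--Zippel-type argument (the paper cites \cite[Thm.~6.46]{gg-mca-13}) for the Monte Carlo variant, and $n$ trial divisions to certify the candidate for the Las Vegas variant. The one point where you go beyond the paper is your remark about the cost of \emph{forming} $h=\sum_i\lambda_i f_i$: the paper observes that $f_2+a_3f_3+\cdots+a_nf_n$ has size $(d,\tau+\sOO(1))$ but does not charge the $\sOB(nd\tau)$ bit operations needed to build it, so its stated Monte Carlo bound $\sOB(d^2+d\tau)$ implicitly assumes $n=\OO(d)$ or treats the input-reading cost as free; your proposed workarounds (restricting to the regime $n=\OO(d)$, which is indeed the one used later in the paper, or assembling $h$ via a short modular combination and CRT) are reasonable ways to make the claim airtight.
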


\begin{proof}
These are known results \citep{gg-mca-13}. We repeat the arguments adapted to our notation.

  \textit{Worst case:} We compute $g$ by performing $n$ consecutive
  $\gcd$ computations, that is
  $$\gcd( f_1, \gcd(f_2, \gcd( \cdots, \gcd(f_{n-1}, f_n))).$$ Since
  each $\gcd$ computation costs $\sOB(d^3+d^2 \tau)$
  \cite[Lem.4]{blpr-jsc-2015},
  the result for this case follows.

  \textit{Monte Carlo:} We perform one gcd computation by allowing
  randomization. 
  If we choose
  integers $a_3,\dots,a_n$ independently at random from the set
  $\{1,\dots, Kd\}$, where $K=\OO(1)$, we get that
  $\gcd(f_1,\dots, f_n) = \gcd(f_1, f_2+a_3f_3+\dots+a_nf_n)$ in
  $\ZZ[x]$, with probability $\ge 1/2$ \cite[Thm.~6.46]{gg-mca-13}.
  This actually computes the monic gcd in $\QQ$. To compute
  the gcd in $\ZZ$ we need to multiply with the gcd of the leading coefficients of
  $f_1, f_2+a_3f_3+\dots+a_nf_n$ and then take the primitive part of the resulting polynomial.
 This is sufficient since the leading coefficient of the gcd in $\ZZ[X]$ divides the leading coefficients of
 the two polynomials. Also,
 by \cite[Cor.~6.10]{gg-mca-13} the monic gcd of two polynomials in $\QQ[X]$ is
 equal to their gcd in $\ZZ[X]$ divided by their leading coefficient.
The gcd of the two leading coefficients of $f_1, f_2 + a_3f_3 + \dots + a_nf_n$ is
 an integer of bitsize $\sOO(\tau)$, therefore this does not pollute the total complexity.

  We compute
  $g^*=\gcd(f_1, f_2+a_3f_3+\dots+a_nf_n)$. Notice that the polynomial
  $f_2+a_3f_3+\dots+a_nf_n$ is asymptotically of size
  $(d,\tau)$. So, it takes $\sOB(d ^2+d \tau)$ to find $g^*$,
  using the probabilistic algorithm in \cite{SCHONHAGE1988365}.

  \textit{Las Vegas:} We can reduce the probability of failure in the
  Monte Carlo variant of the $\gcd$ computation to zero, by performing
  $n$ exact divisions.  In particular, we check if $g^*$ divides
  $f_3, \dots, f_n$.  Using \cite[Ex.10.21]{gg-mca-13}, the bit
  complexity of these operations is in total
  $\sOB(n(d^2+d \tau))$.
\end{proof}

\begin{lemma}\label{lem:many-Bgcd}
Let $f_1(X,Y),\dots, f_n(X,Y) \in \ZZ[X,Y]$ of bidegrees $(d,d)$ and $\bsz(f_i)=\tau$. We can compute their $\gcd$, which is of bitsize $\sOO(d+\tau)$, in worst case complexity $\sOB(n(d^5 + d^4\tau))$, 
   with a Monte Carlo algorithm in  $\sOB(d^3 + d^2 \tau)$, 
  or with a Las Vegas algorithm in $\sOB(n(d^3 + d^2 \tau))$.
\end{lemma}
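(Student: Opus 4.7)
The plan is to mirror the three-part structure of the proof of Lemma~\ref{lem:many-Ugcd}, replacing every univariate subroutine with its bivariate analogue and tracking the corresponding bit-complexities. A single bivariate $\gcd$ of two polynomials of bidegree $(d,d)$ and bitsize $\tau$ can be done deterministically in $\sOB(d^5+d^4\tau)$ and with a Las Vegas / Monte Carlo modular algorithm in $\sOB(d^3+d^2\tau)$; the output has bitsize $\sOO(d+\tau)$ by a Mignotte-type bound applied to $\gcd$'s in $\ZZ[X,Y]$. These are the only external facts needed.

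For the worst case I would chain $n-1$ bivariate gcd computations,
$$\gcd(f_1,\gcd(f_2,\gcd(\cdots,\gcd(f_{n-1},f_n)))),$$
noting that every intermediate result has bidegree at most $(d,d)$ and bitsize $\sOO(d+\tau)$, so each $\gcd$ costs $\sOB(d^5+d^4\tau)$ and summing over the $n-1$ stages yields $\sOB(n(d^5+d^4\tau))$.

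For the Monte Carlo variant I would reduce to a single bivariate $\gcd$ by the standard random-combination trick: pick $a_3,\dots,a_n$ independently and uniformly in $\{1,\dots,Kd\}$ for some constant $K$ and compute $g^{*}=\gcd(f_1,f_2+a_3f_3+\cdots+a_nf_n)$. Viewing the $f_i$ as elements of $(\ZZ[Y])[X]$ and applying the bivariate analogue of \cite[Thm.~6.46]{gg-mca-13}, we have $g^{*}=\gcd(f_1,\dots,f_n)$ with probability at least $1/2$. The random combination has bidegree $(d,d)$ and bitsize $\sOO(\tau)$, so a probabilistic bivariate gcd gives $g^{*}$ in $\sOB(d^3+d^2\tau)$; recovering the $\ZZ[X,Y]$-primitive part (multiplying by the gcd of the two content leading terms) contributes only $\sOO(\tau)$-bit arithmetic and does not dominate. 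For the Las~Vegas variant I would then verify $g^{*}\mid f_i$ for $i=3,\dots,n$ by exact bivariate polynomial division. Each such division of a bidegree-$(d,d)$ polynomial of bitsize $\sOO(d+\tau)$ by a divisor of bidegree at most $(d,d)$ and bitsize $\sOO(d+\tau)$ costs $\sOB(d^3+d^2\tau)$ (the bivariate counterpart of \cite[Ex.~10.21]{gg-mca-13}), and failure triggers a restart with probability at most $1/2$, so the expected total cost remains $\sOB(n(d^3+d^2\tau))$.

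The main obstacle I anticipate is citing the precise bivariate-$\gcd$ and bivariate exact-division bounds in a uniform way: univariate versions are standard and directly from \cite{gg-mca-13}, whereas the bivariate bounds require either a modular/evaluation–interpolation reduction to the univariate case (introducing a factor of $d$ for the extra variable) or a direct reference to a subresultant-based analysis, and one must verify that the coefficient-growth estimates indeed give gcd bitsize $\sOO(d+\tau)$ rather than $\sOO(d^2+\tau)$. Once these building blocks are in place, the three complexity estimates follow word-for-word from the argument in Lemma~\ref{lem:many-Ugcd}.
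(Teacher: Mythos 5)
Your proposal is correct and follows essentially the same three-part structure as the paper's own proof: chaining $n-1$ deterministic bivariate gcd's (each $\sOB(d^5+d^4\tau)$, which the paper attributes to Bouzidi et al.) for the worst case, a single gcd of $f_1$ with a random linear combination of $f_2,\dots,f_n$ for the Monte Carlo bound (the paper invokes the bivariate analogue of von zur Gathen--Gerhard Thm.~6.46, a modular gcd in $\sOO(d^2)$ field operations via Cor.~11.12, and Mahler's bound to get output bitsize $\sOO(d+\tau)$), and $n$ exact trial divisions to upgrade to Las Vegas. The concerns you flag about citing precise bivariate-gcd and division bounds are exactly the points the paper resolves by references, so there is no gap in the argument itself.
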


\begin{proof}
The
  straightforward approach is to perform $n$ consecutive $\gcd$
  computations, that is

  $$\gcd( f_1, \gcd(f_2, \gcd( \cdots, \gcd(f_{n-1}, f_n))).$$  To
  accelerate the practical complexity we sort $f_i$ in
  increasing order with respect to their degree.  Each $\gcd$
  computation costs $\sOB(d^5+d^4 \tau)$
  \cite[Lem.~5]{bouzidi:hal-01342211}, so the total worst case cost is
  $\sOB(n d^5+n d^4 \tau)$.

  Alternatively, we consider the operation $\gcd(f_1,
  \sum_{k=2}^{n}{a_k f_k})$, where
  $a_k$ are random integers, following \cite[Thm.~6.46]{gg-mca-13}.
  The expected cost of this gcd is $\sOB(d^3 + d^2
  \tau)$.  To see this, notice that we can perform a bivariate $\gcd$ in
  expected time
  $\sOO(d^2)$ \cite[Cor.~11.12]{gg-mca-13}, over a finite field with
  enough elements, and the bitsize of the result is $\sOO(d +
  \tau)$ \cite{Mahler-ineq-mpoly-62}.

  Then, for a Las Vegas algorithm, using exact division, we
  test if the resulting polynomial divides all $f_i$, for $2 \leq i
  \leq n$.  This costs
  $\sOB(n(d^3+d^2\tau))$, by adapting \cite[Ex.10.21]{gg-mca-13} to
  the bivariate case.
\end{proof}





\section{Rational curves}
\label{Scurves}

Following  \cite{ALCAZAR2010483} closely,
we introduce  basic notions for rational curves.
Let
$\widetilde{\cC}$ be an algebraic curve over $\CC^n$,
parametrized by the map
\begin{align}
  \phi: & \quad \CC \dashrightarrow \widetilde{\mathcal{C}} \nonumber \\
        &  \quad t \mapsto \big(\phi_1(t), \dots ,\phi_n(t)\big)=
          \Big( \frac{p_{1}(t)}{q_{1}(t)}, \dots, \frac{p_{n}(t)}{q_{n}(t)} \Big) ,
        \label{param}
\end{align}
where $p_{i}, q_{i}\in \ZZ[t]$ are of size $(d, \tau)$ for
$i\in [n]$, and $\widetilde{\mathcal{C}}$ is the
Zariski closure of ${\text{Im}(\phi)}$.
We call $\phi(t)$ a \textit{parametrization} of
$\widetilde{\mathcal{C}}$.

We study the real trace of $\tilde{\mathcal{C}}$, that is
$\mathcal{C} := \tilde{\mathcal{C}} \cap \RR^n$.
A parametrization $\phi$ is chatacterized
 by means of \textit{properness} (Sec.~\ref{sec:proper}) and
\textit{normality} (Sec.~\ref{sec:normal}).
To ensure these properties, one can reparametrize the curve, i.e., apply
a rational change of parameter to the given parametrization. We refer to
\cite[Ch.~6]{SenWinPer-book-08} for more details on~reparametrization.

Without loss of generality, we assume that no coordinate of the parametrization $\phi$
is constant; otherwise we could embed $\tilde{\cC}$ in a lower
dimensional space. We consider that $\phi$ is in \emph{reduced
  form}, i.e., $\gcd(p_{i}(t), q_{i}(t))=1$, for all $i\in [n]$.  The point at
infinity, $\mathbf{p}_\infty$, is the point on $\mathcal{C}$
 we obtain for $t\rightarrow \pm \infty$ (if it exists).
For a parametrization $\phi$, we consider the following system of bivariate
 polynomials:
\begin{align}
  h_i (s,t) = \frac{p_{i}(s)q_{i}(t)-q_{i}(s)p_{i}(t)}{s-t}, \quad \text{ for } i \in [n].
  \label{system}
\end{align}

\begin{remark}
  \label{rem:deriv}
  For every $i\in [n]$  $h_i(s,t)$ is a  polynomial since
  $(s,s)$ is a root of the numerator for every $s$.
  Also,  $h_i(t,t)=\phi'_i(t)q_i^2(t)$ \cite[Lem.~1.7]{DRes}.
\end{remark}

\subsection{Proper parametrization}
\label{sec:proper}
A parametrization is proper if
$\phi(t)$ is injective for almost all points on
$\widetilde{\mathcal{C}}$. In other words, almost every point on
$\widetilde{\mathcal{C}}$ is the image of exactly one parameter value (real or
complex).
For other equivalent definitions of properness, we refer the reader to \cite[Ch.~4]{SenWinPer-book-08}, \citep{RUBIO2009490}.
As stated in \cite[Thm.~1]{ALCAZAR2010483}, a parametrization is proper if and only if
$\deg( \gcd( h_1(s, t), \dots, h_n(s, t))) = 0$. This leads to an
algorithm for checking properness. By applying Lem.~\ref{lem:many-Bgcd} we get the following:

\begin{restatable}{lemma}{checkproper}
  \label{lem:check-proper}
 There is an algorithm that checks if a parametrization $\phi$ is proper
  in  worst-case bit complexity $\sOB(n (d^5+ d^4 \tau))$ and in expected bit complexity
$\sOB(n( d^3+  d^2\tau))$.
\end{restatable}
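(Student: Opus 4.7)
The plan is to invoke the characterization that $\phi$ is proper if and only if $\deg(\gcd(h_1,\dots,h_n))=0$ \cite[Thm.~1]{ALCAZAR2010483}, so the algorithm simply forms the polynomials $h_i(s,t)$ defined by Eq.~(\ref{system}) and then applies the multiple-gcd algorithm of Lem.~\ref{lem:many-Bgcd}. The first step is to bound the size of each $h_i$: the numerator $p_i(s)q_i(t)-q_i(s)p_i(t)$ has bidegree at most $(d,d)$ in $(s,t)$ and bitsize $\sOO(\tau)$ (two products of polynomials of size $(d,\tau)$), and, as noted in Rem.~\ref{rem:deriv}, it is divisible by $s-t$. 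Exact division by $s-t$ can be done coefficient-by-coefficient in $\OO(d^2)$ operations on integers of bitsize $\sOO(\tau)$, so $h_i(s,t)$ has bidegree at most $(d-1,d-1)$ and bitsize $\sOO(\tau)$. Constructing the $n$ polynomials $h_i$ therefore costs $\sOB(n d^2 \tau)$ in total, which is absorbed by the subsequent gcd computations.

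Once the $h_i$ are in hand, I apply Lem.~\ref{lem:many-Bgcd} to $h_1,\dots,h_n$, viewed as $n$ bivariate polynomials of bidegree $(d,d)$ and bitsize $\sOO(\tau)$. Its worst-case branch yields the cost $\sOB(n(d^5+d^4\tau))$, and its Las Vegas branch yields the expected cost $\sOB(n(d^3+d^2\tau))$, matching the two bounds claimed in the statement. From the resulting gcd $g(s,t)$, checking $\deg(g)=0$ is a constant-time comparison on the degree information already produced by the gcd algorithm, so it contributes nothing to the complexity.

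There is essentially no hard step: the only thing to verify carefully is that the $h_i$ really do fit in the size regime required by Lem.~\ref{lem:many-Bgcd}, which follows immediately from the bidegree/bitsize estimates above together with the observation that $s-t$ exactly divides the numerator. Hence the stated bounds follow.
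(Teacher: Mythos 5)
Your proof is correct and matches the paper's own argument: both invoke the properness criterion from Alc\'azar and D\'iaz-Toca (Thm.~1), construct the $h_i$, and read off the two complexity bounds from the worst-case and Las~Vegas branches of Lem.~\ref{lem:many-Bgcd}. The only difference is that you spell out the bidegree/bitsize bookkeeping for the $h_i$ and the exact division by $s-t$, which the paper leaves implicit; this added detail is accurate and harmless.
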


\begin{proof}
 We construct  the polynomials $h_i(s,t)$ for all $i\in [n]$ in $\OB(nd^2 \tau)$.  Then, we need to
  check if $\deg( \gcd( h_1(s, t), \dots, h_n(s, t))) = 0$
  \cite[Thm.~1]{ALCAZAR2010483}.  For the $\gcd$ computation,
  we employ Lem.~\ref{lem:many-Bgcd} and the result follows.
\end{proof}

If
$\phi$ is a not a proper parametrization, then there always exists a
parametrization $\psi\in \ZZ(t)^n$ and $R(t)\in
\ZZ(t)$ such that $\psi(R(t))= \phi(t)$ and
$\psi$ is proper \cite[Thm.~7.6]{SenWinPer-book-08}.
There are various algorithms for obtaining a proper parametrization,
e.g.,~\citep{Sed-improper-param-86,GUTIERREZ2002545,SenWinPer-book-08,Perez-proper-06,GAO1992459}.
We consider the algorithm in \citep{Perez-proper-06} for its
simplicity; its pseudo-code is in
Alg.~\ref{alg:make-proper}. 

\begin{algorithm2e}[ht]
  \SetKw{RET}{{\sc return}}
  \KwIn{A parametrization
    $\phi\in \ZZ(t)^n$ as in
    Eq.~\eqref{param}}

  \KwOut{A proper parametrization $\psi=(\psi_1,\dots, \psi_n) \in \ZZ(t)^n$}

  \BlankLine

  \lFor{$i \in [n]$ }{
     $H_i(s, t) \gets p_{i}(s) q_{i}(t) - p_{i}(t) q_{i}(s)
     \in \ZZ[s,t]$
  }

  $H \gets \gcd(H_1, \dots, H_n) = C_m(t) s^{m} + \cdots + C_0(t) \in (\ZZ[t])[s]$

  \lIf{ $m = 1$} {\RET $\phi(t)$  }

  Find $k, l \in [m]$ such that:\newline
  $\deg(\gcd(C_k(t), C_l(t))) = 0$ and $ \frac{C_k(t)}{C_l(t)}\not \in \QQ$

  $R(t) \gets \frac{C_k(t)}{C_l(t)}$

  $r \gets \deg(R) = \max\{ \deg(C_k), \deg(C_l)\}$

  $G \gets s\, C_l(t) - C_k(t)$

  \BlankLine

  \For{$i\in [n]$}{
    $F_i \gets x q_i(t) - p_i(t)$

    $L_i(s, x) \gets \res_t(F_i(t, x), G(t, s) )
    = (\tilde{q}_i (s) x - \tilde{p}_i(s))^r$ }

  \RET  $\psi(t) =
   \big( \frac{\tilde{p}_1(t)}{\tilde{q}_1(t)},\dots,\frac{\tilde{p}_n(t)}{\tilde{q}_n(t)}\big)  $
  \caption{\FuncSty{Make\_Proper}($\phi$)}
  \label{alg:make-proper}
\end{algorithm2e}

\begin{restatable}{lemma}{makeproper}
  \label{lem:make-proper}
  Consider a non-proper parametrization of a curve $\mathcal{C}$,
  consisting of
  univariate polynomials of size $(d, \tau)$.
 Alg.~\ref{alg:make-proper}
  computes a proper parametrization of $\mathcal{C}$,
  involving
  polynomials of degree at most $d$ and bitsize $\OO(d^2 + d\tau)$,
  in  $\sOB(n(d^5 + d^4\tau))$, in the worst case.
\end{restatable}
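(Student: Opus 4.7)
The plan is to track the size of every intermediate quantity through Alg.~\ref{alg:make-proper} and charge a cost to each step; the correctness of the transformation itself is established in \cite{Perez-proper-06}.

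First, each $H_i(s,t) = p_i(s)q_i(t) - p_i(t)q_i(s)$ has bidegree $(d,d)$ and bitsize $\tau + \OO(\log d)$, so building the $n$ of them costs $\OB(n d^2 \tau)$. By Lem.~\ref{lem:many-Bgcd} the bivariate gcd $H=\gcd(H_1,\ldots,H_n)$ is obtained in worst case $\sOB(n(d^5+d^4\tau))$, with bidegree at most $(d,d)$ and bitsize $\sOO(d+\tau)$. Reading $H=\sum_j C_j(t) s^j$ gives at most $d+1$ univariate polynomials $C_j\in\ZZ[t]$ of degree $\le d$ and bitsize $\sOO(d+\tau)$, and selecting $k,l$ by testing coprimality on the $\OO(d^2)$ pairs costs $\sOB(d^4+d^3\tau)$ via Lem.~\ref{lem:many-Ugcd}, which is absorbed in the final bound.

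Step~4 (the $n$ resultants $L_i$) is the delicate part. Since $\deg_t F_i=d$, $\deg_x F_i=1$, $\bsz(F_i)=\tau$ while $\deg_t G = r \le d$, $\deg_s G=1$, $\bsz(G)=\sOO(d+\tau)$, the Sylvester matrix of $\res_t(F_i,G)$ has order $d+r$, entries in $\ZZ[s,x]$ of total degree $\le 1$ and bitsize $\sOO(d+\tau)$, with $r$ rows linear only in $x$ and $d$ rows linear only in $s$. Multilinearity of the determinant then yields $\deg_s L_i\le d$, $\deg_x L_i\le r$ and $\bsz(L_i)\le (d+r)\sOO(d+\tau)=\sOO(d^2+d\tau)$. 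A single resultant is computed in $\sOB(d^5+d^4\tau)$ (e.g.\ by a subresultant PRS over $\ZZ[s,x]$, or by evaluation--interpolation at $\OO(d^2)$ points followed by $\OO(d^2)$ univariate resultants of size $(d,\sOO(d+\tau))$), so the $n$ iterations sum to $\sOB(n(d^5+d^4\tau))$.

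Finally, writing $L_i=(\tilde q_i(s)\,x-\tilde p_i(s))^r$, the leading two coefficients in $x$ are $\tilde q_i^r$ and $-r\,\tilde q_i^{r-1}\tilde p_i$; a univariate gcd in $\ZZ[s]$, one exact division, and one $r$-th root recover $\tilde p_i,\tilde q_i$. By Mignotte's factor bound these inherit bitsize $\sOO(d^2+d\tau)$, and since $\deg\psi_i=(\deg\phi_i)/r$ we also have $\deg\tilde p_i,\deg\tilde q_i\le d/r\le d$, matching the claim; this extraction fits in $\sOB(d^4+d^3\tau)$. The main obstacle is the bookkeeping for Step~4: exploiting the $(s,x)$-linearity of the Sylvester entries to propagate the bidegree and bitsize bounds on $L_i$ and to justify the $\sOB(d^5+d^4\tau)$ cost per resultant. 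Summing, the total is dominated by Steps~2 and~4, and is $\sOB(n(d^5+d^4\tau))$ as claimed.
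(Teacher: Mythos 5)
Your proof follows the paper's argument step for step: same gcd computation via Lem.~\ref{lem:many-Bgcd}, same cost for selecting coprime coefficients $C_k,C_l$, and the same $\sOB(d^5+d^4\tau)$ bound per resultant $L_i$ dominating the total. The only real departures are cosmetic and both sound: you rederive the size bounds $\deg_s L_i\le d$, $\deg_x L_i\le r$, $\bsz(L_i)=\sOO(d^2+d\tau)$ directly from the multilinear structure of the Sylvester matrix rather than invoking \cite[Prop.~8.49--8.50]{BPR03}, and you spell out the extraction of $\tilde p_i,\tilde q_i$ from $L_i=(\tilde q_i x-\tilde p_i)^r$ (which the paper leaves implicit) together with the Mignotte factor bound controlling their bitsize.
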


\begin{proof}
The correctness of the algorithm is proved in \citep{Perez-proper-06}. We analyze its complexity.
  The algorithm first computes the bivariate polynomials
   $H_i(s,t) = p_{i}(s) q_{i}(t) - p_{i}(t) q_{i}(s)$ for $i=1,\dots,n$.  They have bi-degree at most $(d, d)$ and bitsize at most
  $2\tau +1$.
  Then, we compute their $\gcd$, which we denote by $H$, in
  $\sOB(n(d^5+d^4 \tau))$ (Lem.~\ref{lem:many-Bgcd}).
  By \citep{Mahler-ineq-mpoly-62} and \cite[Prop.~10.12]{BPR03} we have that $\bsz(H) = \OO(d +\tau)$.
If we write  $H=C_m(t) s^{m} + \cdots + C_0(t)$, it also holds that $\bsz(C_j) = \OO(d+\tau)$, $j=1, \dots, m$.

   If the degree of $H$ is one, then the parametrization is already
   proper and we have nothing to do.
   Otherwise, we
   consider $H$ as a univariate polynomial in $s$ and we find two of
   its coefficients that are relatively prime, using exact division.
   The complexity of this operation is
   $m^2 \times \sOB(d^2 + d \tau) = \sOB(d^4 + d^3\tau)$
   \cite[Ex.~10.21]{gg-mca-13}.

   Subsequently, we perform $n$ resultant computations to get
   $L_1, \dots L_n$, as defined in  Alg.~\ref{alg:make-proper}. From these we obtain the rational functions of
   the new parametrization. We focus on the computation of $L_1$. The
   same arguments hold for all $L_i$.  The bi-degree of $L_1(s, x)$ is
   $(d, d)$ \cite[Prop.~8.49]{BPR03} and
   $\bsz(L_1) = \OO(d^2 + d\tau)$ \cite[Prop.~8.50]{BPR03};
   the latter dictates the bitsize of the new parametrization.

  To compute $L_1$, we consider $F_1$ and $G$ as univariate
  polynomials in $t$ and we apply a fast algorithm for computing the
  univariate resultant based on subresultants
  \citep{LicRoy-sub-res-01}; it performs $\sOO(d)$ operations.  Each
  operation consists of multiplying bivariate polynomials of bi-degree
  $(d, d)$ 
  and bitsize $\OO(d^2 + d\tau)$
  so it costs $\sOB(d^4 + d^3\tau)$.
  We compute the resultant in $\sOB(d^5 + d^4\tau)$.
  We multiply the latter bound by $n$ to conclude the proof.
\end{proof}

\subsection{Normal parametrization}
\label{sec:normal}

Normality of the parametrization concerns the surjectivity of the map
$\phi$. 
The parametrization $\phi(t)$ is $\RR$-normal if for all points
$\mathbf{p}$ on $\mathcal{C}$ there exists $t_0 \in \RR$ such that
$\phi(t_0) = \mathbf{p}$.  When the parametrization is not
$\RR$-normal, the points that are not in the image of $\phi$ for
$t \in \RR$
are $\mathbf{p}_{\infty}$ (if it exists) and the isolated
points that we obtain for complex values of $t$
\cite[Prop.~4.2]{Carlo-2007}.
An $\RR$-normal reparametrization does not always exist.
We refer to \cite[Sec.~7.3]{SenWinPer-book-08} for further details.
However, if $\mathbf{p}_{\infty}$ exists, then we reparametrize the
curve to avoid possible singularities at infinity.  The point
$\mathbf{p}_\infty$ exists if $\deg(p_i) \leq \deg(q_i)$, for all
$i \in [n]$.

\begin{restatable}{lemma}{makenormal}
  \label{lem:normal}
  If $\mathbf{p}_\infty$ exists, then we can reparametrize the curve using
  a linear rational function to
  ensure that $\mathbf{p}_\infty$ is not a singular point, using a Las
  Vegas algorithm in expected time $\sOB(n(d^2 + d\tau))$.  The new
  parametrization involves polynomials of size $(d, \sOO(d+\tau))$.
\end{restatable}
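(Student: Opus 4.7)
The plan is to apply the M\"obius reparametrization $t = t_0 + 1/u$ for a randomly chosen integer $t_0$ of small bitsize. Under this substitution, the original point at infinity ($t \to \infty$) is mapped to the finite value $u = 0$ in the new parametrization $\tilde\phi$, while the new point at infinity is $\tilde\phi(\infty) = \phi(t_0)$. Hence it suffices to choose $t_0$ so that $\phi(t_0)$ is a non-singular point of $\cC$; the rest is a Las Vegas draw-and-test loop followed by a Taylor shift.

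The set of \emph{bad} integers $t_0$, i.e.,\ those for which $\phi(t_0)$ is singular on $\cC$, is finite and of polynomial size: since $\phi$ is proper, the number of singular points on $\cC$ is $\OO(d^2)$, each with $\OO(d)$ parameter preimages, together with the $\OO(d)$ cusp parameters at which $\phi'(t)=0$, and the finitely many zeros of the $q_i$. Picking $t_0$ uniformly from $\{1,\ldots,K d^{O(1)}\}$ for a suitable constant $K$ therefore yields a good $t_0$ with probability $\ge 1/2$, and such a $t_0$ has bitsize $\OO(\log d)$. To test a candidate, we use the criterion that $t_0$ is good iff the univariate $G_{t_0}(s) := \gcd_i h_i(s, t_0) \in \ZZ[s]$ is a nonzero constant: any common complex root $s$ would give either a crossing $\phi(s)=\phi(t_0)$ with $s \neq t_0$, or, by Remark~\ref{rem:deriv}, a cusp at $t_0$ (when $s=t_0$, using $q_i(t_0)\neq 0$). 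Writing
\[
h_i(s, t_0) \;=\; \frac{q_i(t_0)\,p_i(s) - p_i(t_0)\,q_i(s)}{s - t_0},
\]
we evaluate $p_i(t_0)$ and $q_i(t_0)$ by Horner, form the linear combination, and perform synthetic division by $s-t_0$, all in $\sOB(d(d+\tau))$ per $i$; the resulting $h_i(s,t_0)$ is univariate of degree $d-1$ and bitsize $\sOO(d+\tau)$. The Las Vegas variant of Lem.~\ref{lem:many-Ugcd} then computes $G_{t_0}$ in expected time $\sOB(n(d^2 + d\tau))$, and the expected number of draws is $\OO(1)$.

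Once a good $t_0$ is secured, we compute $\tilde p_i(u) = u^d p_i(t_0 + 1/u)$ and $\tilde q_i(u) = u^d q_i(t_0 + 1/u)$ via Taylor shift around $t_0$: the coefficient of $u^{d-k}$ in $\tilde p_i$ equals $p_i^{(k)}(t_0)/k!$, and fast Taylor shift gives the whole expansion in $\sOB(d(d+\tau))$ per polynomial, so $\sOB(n(d^2 + d\tau))$ in total. The output polynomials have degree at most $d$ and bitsize $\sOO(d+\tau)$, and the M\"obius substitution preserves properness and the reduced form. Summing draws and the shift yields the claimed expected bit complexity. The main subtlety is keeping the non-singularity test within the target bound: naively, the condition on $\gcd_i h_i(s, t)$ lives in the bivariate world and would blow up the cost, but specializing at the small-bitsize scalar $t_0$ first collapses it to a univariate gcd of polynomials of size $(d-1,\sOO(d+\tau))$, to which Lem.~\ref{lem:many-Ugcd} applies cleanly.
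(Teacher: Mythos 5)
Your proposal is correct and follows essentially the same strategy as the paper: pick a random small-bitsize integer $t_0$, reparametrize by a Möbius map sending $\infty$ to $t_0$, verify with a gcd test that $\phi(t_0)$ is nonsingular, and recompose in $\sOB(n(d^2+d\tau))$. The only differences are cosmetic: the paper uses the involution $t \mapsto \frac{t_0 t + 1}{t - t_0}$ and computes the recomposed parametrization by multipoint evaluation/interpolation, while you use $t = t_0 + 1/u$ and a fast Taylor shift; and the paper runs two separate checks (evaluate $\phi'$ at $t_0$ for cusps, then a gcd for multiple points) whereas you fold both into the single univariate gcd of the specialized $h_i(s, t_0)$, with $s = t_0$ catching the cusp case via $h_i(t_0,t_0) = \phi_i'(t_0)q_i(t_0)^2$. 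Both routes land on the same complexity and the same output size bound, so the two arguments are interchangeable.
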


\begin{proof}
  The point at infinity depends on the parametrization. So, for this
  proof, let us denote the point at infinity of $\phi$ by
  $\mathbf{p}_{\infty}^{\phi}$. This point is obtained for $t\rightarrow \infty$.

  The reparametrization consists of choosing $t_0\in \RR$ and applying
  the map $r : t \mapsto \frac{t_0 \, t + 1}{t - t_0}$ to $\phi$, to
  obtain a new parametrization, $\psi = \phi \circ r$.  The point at
  infinity of the new parametrization is
  $\mathbf{p}_{\infty}^{\psi} = \phi(t_0)$.
  We need to ensure that $\mathbf{p}_{\infty}^{\psi} = \phi(t_0)$ is
  not singular. There are $\le d^2$ singular points, so we choose
  $t_0$ uniformly at random from the set $\{1,\dots, Kd^2\}$ where
  $K\ge 2$.  Then, with probability $\ge 1/2$, $\phi(t_0)$ is not
  singular and $\mathbf{p}_{\infty}^{\psi}$ is also not singular.  The
  bound on the possible values of $t_0$ implies that the bitsize of
  $t_0$ is $\OO(\lg(d))$.

  We compute the new parametrization, $\psi$, in
  $\sOB(n(d^2 + d\tau))$ using multipoint evaluation and
  interpolation, by exploiting the fact that the polynomials in $\psi$
  have degrees at most $d$ and bitsize $\sOO(d+\tau)$.

  For a Las Vegas algorithm we need to check if $\phi(t_0)$ is a cusp
  or a multiple point.  For the former, we evaluate $\phi'$ at $t_0$
  (see Rem.~\ref{rem:deriv}).  This costs $\sOB(n d\tau)$
  \cite[Lem.~3]{Bouzidi:2013:RUR:2465506.2465519}.  For the latter, we
  check if
  $\deg( \gcd(\phi_1(t_0) q_1(t) - p_1(t), \dots, \phi_1(t_0) q_1(t) -
  p_1(t))) = 0$ in $\sOB(n(d^2+d\tau))$ (Lem.~\ref{lem:many-Ugcd}).  If
  $\phi'(t_0)$ is not the zero vector and the degree of the $\gcd$ is
  zero, then $\phi({t_0})$ is not singular.
\end{proof}

\begin{remark}
  Since the reparametrizing function in 
   the previous
  lemma is linear, it does not affect properness
  \cite[Thm.~6.3]{SenWinPer-book-08}.
\end{remark}

\section{Special points on the curve}
\label{sec:special-points}

We consider a parametrization $\phi$ of $\mathcal{C}$ as in
Eq.~(\ref{param}), such that $\phi$ is proper and there are no
singularities at infinity.  We highlight the necessity of these
assumptions when needed.  We detect the \textit{parameters} that
generate the \textit{special points} of $\mathcal{C}$, namely the
singular, the isolated, and the extreme points (in the coordinate directions).
We identify the values of the parameter for which $\phi$ is not defined, namely the poles (see Def.~\ref{poles}). In presence
of poles, $\cC$ consists of multiple  components.

\begin{definition}\label{poles}
  The parameters for which $\phi(t)$ is not defined are the
  \textit{poles} of $\phi$. The sets of poles over the complex and the
  reals are:
\[
  \mathtt{T}_P^\CC=\{ t \in \CC : \prod_{i\in[n]}q_i(t)=0 \}
  \, \text{ and } \, \mathtt{T}_P^{\RR}=\mathtt{T}_P^\CC \cap \RR \text{, respectively.}
\]
\end{definition}

We consider the solution set $S$ of the system of Eq.(\ref{system}) over $\CC^2$:

\[
	S=\{(t,s)\in \CC^2 : h_i(t,s)=0 \text{ for all } i \in [n] \} .
\]

\begin{remark}\label{rem:red}
Notice that when $\phi$ is in reduced form, if $(s,t)\in S$ and $(s,t) \in (\CC\setminus \mathtt{T}_P^\CC) \times \CC$, then also $t \not \in \mathtt{T}_P^\CC$
\cite[(in the proof of) Lem.~9]{RUBIO2009490}.
\end{remark}

Next, we present some well-known results
\citep{RUBIO2009490,SenWinPer-book-08} that we adapt to our
notation. 

\paragraph*{\textbf{Singular points}}

Quoting \cite{MANOCHA19921},
``Algebraically, singular points are points on the curve, in whose neighborhood the curve cannot be
represented as an one-to-one and $C^\infty$ bijective map with an open
interval on the real line". Geometrically,
singularities correspond to shape features that are known as cusps and
self-intersections of smooth branches. \textit{Cusps} are points on
the curve where the tangent vector is the zero vector. This is a
necessary and sufficient condition when the parametrization is proper
\citep{MANOCHA19921}.  Self-intersections are \textit{multiple} points,
i.e., points on $\mathcal{C}$ with more than one preimages.


\begin{restatable}{lemma}{singular}
\label{singular}
The set of parameters corresponding to real cusps is
\[
  \mathtt{T}_C = \left\{ t\in \RR\setminus \mathtt{T}_P^{\RR} : (t,t) \in S\right\}.
\]
The set of parameters corresponding to real \textit{multiple points} is
\[
	\mathtt{T}_M=\{t \in \RR \setminus \mathtt{T}_P^\RR : \exists s\neq t, s\in \RR \text{ such that } (t,s) \in S \}.
\]
\end{restatable}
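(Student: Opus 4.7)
The plan is to treat the two set equalities separately, each reducing to an algebraic manipulation of the system~\eqref{system} combined with the properness hypothesis on $\phi$.

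For the cusps, I would invoke Remark~\ref{rem:deriv}, which gives the identity $h_i(t,t) = \phi'_i(t)\,q_i^2(t)$ for every $i\in[n]$. Since $t\in\RR\setminus\mathtt{T}_P^\RR$ forces $q_i(t)\neq 0$ for all $i$, the condition $(t,t)\in S$ is then equivalent to $\phi'_i(t)=0$ for every $i$, i.e., $\phi'(t)=0$. Because $\phi$ is proper, this is precisely the necessary and sufficient condition for $\phi(t)$ to be a cusp, as noted in the discussion preceding the lemma (citing \cite{MANOCHA19921}). This yields the desired characterization of $\mathtt{T}_C$ by a direct double inclusion.

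For the multiple points, I would work with the numerator $N_i(s,t) := p_i(s)q_i(t) - q_i(s)p_i(t)$. It vanishes on the diagonal, so $h_i = N_i/(s-t)$ is a genuine polynomial, and for $s\neq t$ we have $h_i(s,t)=0$ iff $N_i(s,t)=0$. For the forward inclusion, if $\phi(t)$ is a real multiple point with a second real preimage $s\neq t$, then the equality $\phi_i(s)=\phi_i(t)$ (cleared of denominators, which are nonzero since neither parameter is a pole) gives $N_i(s,t)=0$ for all $i$, hence $(t,s)\in S$. For the reverse inclusion, starting from $(t,s)\in S$ with $s\neq t$, both real and $t\notin\mathtt{T}_P^\RR$, the equations $N_i(s,t)=0$ would reduce to $\phi_i(s)=\phi_i(t)$ \emph{provided} $s\notin\mathtt{T}_P^\RR$ as well; this is supplied by Remark~\ref{rem:red}, which uses that $\phi$ is in reduced form. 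Then $\phi(s)=\phi(t)$ with $s\neq t$ real witnesses $\phi(t)$ as a real multiple point.

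The main subtle step is this last appeal to Remark~\ref{rem:red}: without it we could not exclude that $s$ is a pole of $\phi$, in which case $\phi(s)$ is undefined and the implication $N_i(s,t)=0 \Rightarrow \phi_i(s)=\phi_i(t)$ would fail. Once the remark is applied the argument is immediate, and no further non-trivial ingredient is required beyond the identity of Remark~\ref{rem:deriv} and the properness of $\phi$.
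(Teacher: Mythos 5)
Your proposal is correct and follows essentially the same route as the paper: Remark~\ref{rem:deriv} handles the cusp characterization (the paper simply calls this ``immediate''), and Remark~\ref{rem:red} --- which the paper attributes to the proof of Lem.~9 of \cite{RUBIO2009490} --- supplies the key fact that $s$ is not a pole in the reverse inclusion for $\mathtt{T}_M$. The only cosmetic difference is that you spell out the numerator factorization $h_i = N_i/(s-t)$ explicitly, which the paper leaves implicit.
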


\begin{proof}
The description of $\mathtt{T}_C$ is an immediate consequence of Rem.~\ref{rem:deriv}. It states that $h_i(t,t)=\phi_i'(t) q_i^2(t)$, for $i\in [n]$.


Now let $\mathbf{p} =\phi(t)$ be a multiple point on $\mathcal{C}$. Then, there is $s\in \RR \setminus \mathtt{T}_P^\RR$ with $\phi(t) = \phi(s) \Rightarrow h_i(t,s)=0$
for all $i\in [n]$ and so $t\in \mathtt{T}_M$. Conversely, let $t\in \mathtt{T}_M$ and $s\neq t$, $s\in \RR$ such that $h_i(t,s) =0$ for all $i\in [n]$. From \cite[(in the proof of) Lem.~9]{RUBIO2009490}, when $\phi$ is in reduced form, if $(t,s)\in S$ and $(t,s) \in (\RR\setminus \mathtt{T}_P^\RR) \times \RR$, then also $s\not \in \mathtt{T}_P^\RR$. So, $h_i(t,s) =0 \Leftrightarrow \frac{p_i(t)}{q_i(t)}= \frac{p_i(s)}{q_i(s)}$ for all $i\in [n]$, and thus $\mathbf{p} =\phi(t)= \phi(s)$ is a real multiple point.
\end{proof}

\noindent
Notice that $\mathtt{T}_C$ and $\mathtt{T}_M$ are not necessarily
disjoint, for we may have both cusps and smooth
branches that intersect  at the same point.

 \paragraph*{\textbf{Isolated points}}
An isolated point on a real curve can only occur for complex values of the parameter.
The point at infinity is not isolated because it is the limit of a sequence of real points.
So, additional care is needed in order to avoid cases where the point at infinity is obtained also for complex values of the parameter.

\begin{restatable}{lemma}{isolated}
\label{isolated}
The set of parameters generating isolated points of $\mathcal{C}$ is
  \begin{align*}
	\mathtt{T}_I  = & \{t\in \CC \setminus  (\RR \cup \mathtt{T}_P^\CC) : (t,\overline{t}) \in S \text{ and } \not \exists s\in \RR \text{ s.t. } (t,s)\in S \text{ and } \phi(t) \neq \lim_{s\rightarrow \infty} \phi(s) \}.
\end{align*}
\end{restatable}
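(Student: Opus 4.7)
The plan is to prove the set equality by establishing both inclusions, using that a point $\mathbf{p}\in\mathcal{C}$ is isolated exactly when it lies in $\tilde{\mathcal{C}}\cap\RR^n$ but is not a limit of $\phi(s)$ for $s\in\RR$. Throughout, I will use the crucial observation that since $p_i,q_i$ have real coefficients, $\overline{\phi_i(t)}=\phi_i(\bar t)$, so $\phi(t)\in\RR^n$ is equivalent to $\phi(t)=\phi(\bar t)$.

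For the inclusion $\mathtt{T}_I\subseteq\{\text{parameters of isolated points}\}$, I take $t$ in the right-hand set. Since $t\notin\RR\cup\mathtt{T}_P^\CC$ and $(t,\bar t)\in S$ with $t\ne\bar t$, multiplying each equation $h_i(t,\bar t)=0$ by $\bar t - t$ gives $p_i(t)q_i(\bar t)=p_i(\bar t)q_i(t)$; Rem.~\ref{rem:red} ensures $\bar t\notin\mathtt{T}_P^\CC$, so this reads $\phi_i(t)=\phi_i(\bar t)=\overline{\phi_i(t)}$ and hence $\phi(t)\in\RR^n$, i.e.~$\phi(t)\in\mathcal{C}$. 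It remains to show $\phi(t)$ is isolated: if not, there is a sequence $s_k\to s_\infty\in\RR\cup\{\infty\}$ with $\phi(s_k)\to\phi(t)$. Either $s_\infty\in\RR$ and by continuity $\phi(s_\infty)=\phi(t)$, contradicting the assumption that no real $s$ satisfies $(t,s)\in S$ (which, by the same computation as above, encodes $\phi(t)=\phi(s)$), or $s_\infty=\infty$ and then $\phi(t)=\lim_{s\to\infty}\phi(s)$, again contradicting the hypothesis.

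For the reverse inclusion, suppose $\mathbf{p}=\phi(t_0)$ is an isolated point of $\mathcal{C}$. Since isolated real points of a complex curve cannot be images of real parameters (a small real neighborhood of $t_0$ in $\RR$ would map to nearby real points of $\mathcal{C}$), we must have $t_0\notin\RR$; clearly $t_0\notin\mathtt{T}_P^\CC$ because $\phi(t_0)$ is defined. From $\phi(t_0)\in\RR^n$ and the conjugation identity, $\phi(t_0)=\phi(\bar t_0)$, and since $t_0\ne\bar t_0$ this yields $(t_0,\bar t_0)\in S$. Moreover, no real $s$ can satisfy $\phi(s)=\phi(t_0)$ (otherwise $\mathbf{p}$ would not be isolated), which by the same reasoning as before is equivalent to the non-existence of real $s$ with $(t_0,s)\in S$. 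Finally, if $\phi(t_0)=\lim_{s\to\infty}\phi(s)=\mathbf{p}_\infty$, then $\mathbf{p}$ would be a limit of real curve points and thus not isolated; hence $\phi(t_0)\ne\lim_{s\to\infty}\phi(s)$, and $t_0\in\mathtt{T}_I$.

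The main delicate step I anticipate is the bookkeeping around poles: one has to check that passing between ``$h_i(t,s)=0$'' and ``$\phi(t)=\phi(s)$'' is legitimate in both directions, and this is precisely what Rem.~\ref{rem:red} (together with the reduced form assumption) guarantees, by transferring the non-pole condition from $t$ to $\bar t$ or to $s$. Everything else is a direct topological argument about isolated versus limit points.
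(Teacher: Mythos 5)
Your proof is correct and follows essentially the same approach as the paper's: both rely on the conjugation identity $\overline{\phi_i(t)}=\phi_i(\bar t)$, the reduced-form pole transfer (Rem.~\ref{rem:red}), and the fact that non-image real points of $\cC$ are either $\mathbf p_\infty$ or isolated. You are slightly more careful than the paper's own converse direction, which concludes ``$\mathbf p$ is isolated'' merely from the absence of a real preimage without explicitly invoking the hypothesis $\phi(t)\neq\lim_{s\to\infty}\phi(s)$; your explicit sequence-based argument covering the $s_\infty\in\RR$ and $s_\infty=\infty$ cases tidies that up.
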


\begin{proof}
  Let $\mathbf{p}=\phi(t)\in \RR^n$ be an isolated point, where
  $t\in \CC\setminus (\RR\cup \mathtt{T}_P^\CC) $. Notice that
  $\mathbf{p}$ is also a multiple point, since it holds that
  $\phi_i(t) = \overline{\phi_i(t)} = \phi_i(\overline{t})$ for
  $i \in [n]$. Thus, $h_i(t,\overline{t})=0$ for all $i\in [n]$ and
  $(t,\overline{t})\in S$. Moreover, since $\mathbf{p}$ is isolated,
  there are no real branches through $\mathbf{p}$ and there does not
  exist $s\in \RR$ such that $\phi(t)=\phi(s)\Rightarrow h_i(t,s)=0$,
  for all $i\in [n]$. So, $t\in \mathtt{T}_I$.

  Conversely, let $(t,\overline{t})\in S$ with
  $t\in \CC \setminus \RR \cup \mathtt{T}_P^\CC$. Since $\phi$ is in
  reduced form, we have that $\overline{t} \not\in P^\CC $\cite[(in
  the proof of) Lem.~9]{RUBIO2009490}, therefore
  $h_i(t, \overline{t}) = 0$, for all $i \in [n]$,
  implies that
  $\phi(t)=\phi(\overline{t}) = \overline{\phi(t)}\in \RR^n$. Since
  there does not exist $s\in \RR$ with $\phi(t)=\phi(s)$, $\mathbf{p}$
  is an isolated point on $\mathcal{C}$.
\end{proof}

\paragraph*{\textbf{Extreme points}}
Consider a vector $\vec{\delta}$ and a point on $\mathcal{C}$ whose
normal vector is parallel to $\vec{\delta}$. If the point is not
singular, then it is an extreme point of $\mathcal{C}$ with respect to
$\vec{\delta}$.  We compute the extreme points with respect to the
direction of each coordinate axis. Rem.~\ref{rem:deriv} leads to the
following lemma:
\begin{restatable}{lemma}{extreme}
  \label{extreme}
The set of parameters generating extreme points in the coordinate directions is
\begin{align*}
	\mathtt{T}_E = \big\{ t\in \RR\setminus \mathtt{T}_P^\RR  : \prod_{i\in[n]}h_i(t,t)=0 \text{ and } t\not \in \mathtt{T}_C\cup \mathtt{T}_M\big\}.
\end{align*}
 \end{restatable}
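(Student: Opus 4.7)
The plan is to unwind the definition given in the paragraph just before the lemma: a point $\phi(t_0)$ is extreme in a coordinate direction $e_i$ precisely when it is non-singular and its normal vector is parallel to $e_i$, equivalently when the tangent vector $\phi'(t_0)$ is orthogonal to $e_i$, i.e., $\phi_i'(t_0)=0$. The task is then to translate this pointwise differential condition into the algebraic condition involving the $h_i(t,t)$.

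First I would invoke Rem.~\ref{rem:deriv}, which gives the key identity $h_i(t,t)=\phi_i'(t)\,q_i(t)^2$ for every $i\in[n]$. For $t\in\RR\setminus \mathtt{T}_P^\RR$ the denominators are nonzero, so $q_i(t)\neq 0$, and hence $h_i(t,t)=0$ if and only if $\phi_i'(t)=0$. Consequently,
\[
\prod_{i\in[n]} h_i(t,t)=0 \iff \exists\, i\in[n]\ \text{with}\ \phi_i'(t)=0,
\]
which is exactly the condition that $\phi(t)$ has a normal line parallel to some coordinate axis (i.e., a horizontal/vertical tangent in the plane case, and analogously in higher dimension).

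Next, I would account for the non-singularity requirement in the definition of extreme points. Lem.~\ref{singular} (the lemma called \emph{singular} above) identifies $\mathtt{T}_C$ as the parameters producing cusps and $\mathtt{T}_M$ as those producing multiple points, so $\phi(t)$ is a non-singular real point precisely when $t\in\RR\setminus\mathtt{T}_P^{\RR}$ and $t\notin\mathtt{T}_C\cup\mathtt{T}_M$. Combining this with the vanishing of the product above yields exactly the set $\mathtt{T}_E$. Conversely, any $t$ in the right-hand side satisfies the algebraic and non-singularity conditions, and the preceding paragraph states that at a non-singular point a vanishing coordinate of $\phi'(t)$ is both necessary and sufficient for being extreme in that coordinate direction.

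The argument is essentially a direct translation, and there is no substantial obstacle; the only delicate point is remembering that cusps must be excluded separately (since at a cusp $\phi'$ vanishes entirely and the geometric notion of an ``extreme point'' fails), which is precisely handled by removing $\mathtt{T}_C$, and multiple points must be excluded by removing $\mathtt{T}_M$ since even with $\phi_i'(t)\neq 0$ the point can still be singular through a transverse self-intersection.
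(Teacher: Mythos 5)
Your proof is correct and follows exactly the reasoning the paper gestures at: the paper gives no explicit proof for this lemma, merely remarking that ``Rem.~\ref{rem:deriv} leads to the following lemma,'' and your write-up is precisely the direct unwinding of that hint via the identity $h_i(t,t)=\phi_i'(t)q_i(t)^2$, the nonvanishing of the $q_i$ off $\mathtt{T}_P^\RR$, and the removal of $\mathtt{T}_C\cup\mathtt{T}_M$ to enforce the non-singularity required by the definition of an extreme point.
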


\subsection{Computation and Complexity}
From Lemmata \ref{singular}, \ref{isolated}, and \ref{extreme}, it
follows that given a proper parametrization $\phi$ without singular
points at infinity, we can easily find the poles and the set of
parameters generating cusps, multiple, extreme, and isolated
points. We do so by solving an over-determined bivariate polynomial
system and univariate polynomial equations. Then, we classify the
parameters that appear in the solutions, by exploiting the fact the
system is symmetric. For sake of completeness, we describe the procedure in
Alg.~\ref{alg:compute-points}.


\begin{algorithm2e}[!h]
  \SetKw{RET}{{\sc return}}
  \KwIn{Proper parametrization
    $\phi \in \ZZ(t)^n$ without singularity at infinity, as in Eq.~(\ref{param})}
  \KwOut{Real poles and parameters that give real cusps, multiple, isolated and extreme points with respect to the
direction the coordinate axes.}

  \BlankLine
\tcc{The subroutines \texttt{SOLVE\_R} and \texttt{SOLVE\_C} return the solution set of a univariate polynomial or a system of
polynomials over the real and complex numbers resp.}

  Compute polynomials $h_1(s,t),\dots, h_n(s,t)$

  $\mathtt{T}_P^\RR\gets \bigcup_{i\in n}$ {\texttt{SOLVE\_R}}$(q_i(t)=0)$

  $\mathtt{T}_P^\CC\gets \bigcup_{i\in [n]}$ {\texttt{SOLVE\_C}}$(q_i(t)=0)$

  $S \gets$ \texttt{SOLVE\_C}$(h_1(s,t)=0, \dots, h_n(s,t)=0)$

  $\mathtt{T}_C, \mathtt{T}_M, \mathtt{T}_I, W \gets \emptyset$

  \For{$(s,t)\in S$}{
    \If{$s=t$ and $s\in \RR \setminus \mathtt{T}_P^\RR$}{

        $\mathtt{T}_C \gets \mathtt{T}_C \cup \{t\} $
    }
  \ElseIf{$s\neq t$}{
  	\If{$s\in \RR \setminus \mathtt{T}_P^\RR$}{
           	\If{$t\in \RR$}{
           		$\mathtt{T}_M \gets \mathtt{T}_M \cup \{t\}$
}
\Else{
		$W\gets W\cup \{t\}$
		}

  }
  \ElseIf{$s=\overline{t}$ and $s\not \in \mathtt{T}_P^\CC$ }{
  	$\mathtt{T}_I \gets \mathtt{T}_I \cup \{t\}$
  }
  }
  }

$\mathtt{T}_I \gets \mathtt{T}_I  \setminus W$

\tcc{Extreme points}
$\mathtt{T}_E\gets \bigcup_{i\in n}$ {\texttt{SOLVE\_R}}$(h_i(t,t)=0)$

$\mathtt{T}_E \gets \mathtt{T}_E \setminus (T_E \cap (\mathtt{T}_C \cup \mathtt{T}_M)) $

  \caption{\FuncSty{Special\_Points}($\phi$)}
  \label{alg:compute-points}
\end{algorithm2e}


To compute the Rational Univariate Representation (RUR) \citep{Rouillier1999SolvingZS} of an overdetermined bivariate system (Thm.~\ref{thm-biv}), we employ
Lem.~\ref{lem:sep-form} and Prop.~\ref{lem:RUR},
 which adapt the techniques used in {\cite{bouzidi:hal-01342211}} to
our setting.

\begin{lemma} \label{lem:sep-form}
  Let $f, g, h_1, \ldots, h_n \in \mathbb{Z} [X,Y]$ with degrees bounded by
  $\delta$ and bitsize of coefficients bounded by $L$. Computing a common
  separating element in the form $X + \alpha Y, \alpha \in \mathbb{Z}$ for the $n$+1
  systems of bivariate polynomial equations $\{f = g = 0\}$, $\{f = h_i = 0\}$, $ i = 1
  \ldots n$ needs $\sOB (n (\delta^6+\delta^5L))$ bit operations in the
  worst case, and $\sOB(n (\delta^5+\delta^4L))$ in the expected case with a Las
  Vegas Algorithm. Moreover, the bitsize of $\alpha$ does not exceed $\log (2
  n\delta^4)$.
\end{lemma}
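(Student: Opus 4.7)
First I bound the bitsize of $\alpha$ combinatorially via B\'ezout; then I invoke the single-system separating-element procedure of \cite[Lem.~5]{bouzidi:hal-01342211} on each of the $n+1$ systems.

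By B\'ezout's theorem, each of the $n+1$ systems $\{f=g=0\}$ and $\{f=h_i=0\}$, $i=1,\dots,n$, has at most $\delta^2$ common zeros in $\overline{\QQ}^2$. An integer $\alpha$ fails to make $X+\alpha Y$ a separating form for such a system iff there exist distinct solutions $(x_j,y_j)\ne(x_k,y_k)$ with $x_j-x_k+\alpha(y_j-y_k)=0$; for each of the $\le\binom{\delta^2}{2}\le\delta^4/2$ pairs this fixes at most one integer value of $\alpha$ (and none if $y_j=y_k$). Hence each system has at most $\delta^4/2$ bad integer shifts, and summing over the $n+1$ systems yields at most $(n+1)\delta^4/2\le n\delta^4$ bad shifts (for $n\ge 1$). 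Consequently, among the integers in $\{0,1,\dots,2n\delta^4-1\}$ at least half simultaneously separate every system, so a valid $\alpha$ of bitsize $\le\log(2n\delta^4)$ exists.

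For the Las Vegas bound I would draw $\alpha$ uniformly at random from $\{0,1,\dots,2n\delta^4-1\}$ and run the single-system Las Vegas verifier of \cite[Lem.~5]{bouzidi:hal-01342211} on each of the $n+1$ systems. Since $\log\alpha=O(\log(n\delta))$, the substitution $X\leftarrow T-\alpha Y$ keeps the bidegree at $(\delta,\delta)$ and the bitsize at $\sOO(L+\delta)$, so verifying that $X+\alpha Y$ separates one single system costs $\sOB(\delta^5+\delta^4L)$ (one univariate resultant in $Y$ followed by a squarefree-degree comparison against the number of distinct solutions of the system). A random $\alpha$ is simultaneously good for every system with probability $\ge 1/2$, so the expected total cost is $\sOB(n(\delta^5+\delta^4L))$.

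For the worst-case bound I would run the deterministic single-system variant of \cite[Lem.~5]{bouzidi:hal-01342211} on each of the $n+1$ systems. That subroutine is structured to enumerate candidate $\alpha$'s within a combinatorially justified range and certify them, producing in $\sOB(\delta^6+\delta^5L)$ bit operations per system enough information (extracted from the underlying bivariate subresultant chain) to recognize each integer shift of bitsize $\le\log(2n\delta^4)$ as good or bad for that system. I then walk through $\{0,1,\dots,2n\delta^4-1\}$ and output the first integer that is good for all $n+1$ systems; by the counting above such an integer exists, and the walk is dominated by the $n+1$ subroutine invocations. The main obstacle is to verify that the per-system certificate can be queried at a single integer of bitsize $O(\log(n\delta))$ within $\sOB(\delta^6+\delta^5L)$; this follows from the bit-complexity analysis of the subresultants of polynomials of bidegree $(\delta,\delta)$ and the $\sOO$ absorption of the $\log(n\delta)$ factors. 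Summing across systems gives the worst-case bound $\sOB(n(\delta^6+\delta^5L))$.
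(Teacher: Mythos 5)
Your proposal takes essentially the same route as the paper: both reduce the claim to the single-system algorithms of \cite{bouzidi:hal-01342211} (Algorithms 5 and 5$'$), enlarge the range of candidate integers $\alpha$ to $2n\delta^4$, and require the separating test to succeed on all $n+1$ systems at once. Your B\'ezout counting is exactly what justifies the paper's phrase ``twice a bound on the total number of solutions of all the systems'': at most $\binom{\delta^2}{2}$ bad shifts per system, hence at most $n\delta^4$ bad shifts overall (for $n\geq 1$), so half of $\{0,\dots,2n\delta^4-1\}$ works — you spell this out where the paper only gestures at it, which is a small but genuine improvement in clarity.

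The one place you diverge in flavor rather than substance is the worst-case argument. The paper describes the modification as \emph{running the $n+1$ instances of Algorithm 5 in lockstep}, with the exit test demanding simultaneous success, while you describe it as \emph{running each instance separately to produce a per-system queryable certificate, then scanning the integers}. These are the same modification viewed from two angles (precompute-then-intersect vs.\ interleaved loop with a joint exit), and both yield $\sOB(n(\delta^6+\delta^5L))$ for the same reason: the dominant cost is the $n+1$ subresultant/resultant computations, and the enlargement of the $\alpha$-range from $O(\delta^4)$ to $O(n\delta^4)$ only inflates the bitsize of $\alpha$ by $O(\log n)$, which the $\sOO$ absorbs. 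Be aware, though, that your phrasing ``producing \dots enough information \dots to recognize each integer shift \dots as good or bad'' attributes a specific certificate structure to Algorithm 5 that you have not verified against \cite{bouzidi:hal-01342211}; the paper sidesteps this by treating the algorithm as a black box whose loop bounds and exit test are the only things touched. If you want this step airtight, you should confirm that the per-$\alpha$ check in Algorithm 5 really is cheap enough that $O(n\delta^4)$ queries across $n+1$ systems stay within the budget — this is implicit in both your sketch and the paper's, but neither makes it fully explicit.
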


\begin{proof}
  A straightforward strategy consists of simultaneously running  Algorithm 5
  (worst case) or Algorithm 5' (Las Vegas) from {\cite{bouzidi:hal-01342211}}
  on all the systems. The only modifications needed are that the values of $\alpha$
  to be considered are less than $2 n\delta^4$ (twice a bound on the total number
  of solutions of all the systems) and that the exit test is valid if and only
  if it is valid for all the systems.
\end{proof}

\begin{proposition}\label{lem:RUR}
  Let $f, g \in \mathbb{Z} [X, Y]$ with degrees bounded by $\delta$ and
 coefficients' bitsizes bounded by $L$.
  We can compute a rational parametrization $\{ h (T), X = \frac{h_X
  (T)}{h_1 (T)}, Y = \frac{h_Y (T)}{h_1 (T)} \}$ of $f, g$ with $h, h_1, h_X, h_Y
  \in \mathbb{Z} [T]$ with degrees less than $\delta^2$ and coefficients' bitsizes
  in $\sOO(\delta (L + \delta))$, in $\sOB(\delta^5 (L + \delta))$ bit
  operations in the worst case and  $\sOB(\delta^4 (L + \delta) )$ expected
  bit operations with a Las Vegas Algorithm.
\end{proposition}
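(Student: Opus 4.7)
My plan is to follow the RUR construction of \cite{bouzidi:hal-01342211} essentially verbatim, specialized to the square bivariate system $\{f = g = 0\}$, and taking Lem.~\ref{lem:sep-form} as a black box for the separating-element step. First, I would invoke Lem.~\ref{lem:sep-form} with $n = 0$ (i.e.\ applied to $\{f = g = 0\}$ only) to obtain a separating linear form $T = X + \alpha Y$ with $\alpha \in \ZZ$ and $\bsz(\alpha) = \OO(\log \delta)$. This step costs $\sOB(\delta^6 + \delta^5 L) = \sOB(\delta^5(L+\delta))$ in the worst case and $\sOB(\delta^5 + \delta^4 L) = \sOB(\delta^4(L+\delta))$ in the Las Vegas case, so it already fits within the target budgets.

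Second, I would perform the linear change of coordinates $X \mapsto T - \alpha Y$ to obtain $\tilde f(T,Y), \tilde g(T,Y) \in \ZZ[T,Y]$ of degree at most $\delta$ in each variable and bitsize $\sOO(L+\delta)$. I then compute $R(T) := \res_Y(\tilde f, \tilde g) \in \ZZ[T]$ together with the degree-one element of the subresultant chain of $\tilde f, \tilde g$ with respect to $Y$. By Hadamard/Mignotte-type bounds, $\deg R \le \delta^2$ and $\bsz(R) = \sOO(\delta(L+\delta))$, and the same estimates hold for the relevant subresultant coefficients (\cite[Prop.~8.46]{BPR03}). The squarefree part $h(T) := R/\gcd(R, R')$ is then computed via Lem.~\ref{lem:many-Ugcd} within $\sOB(\delta^4(L+\delta))$.

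Third, I would extract $h_1, h_X, h_Y$ from the subresultant sequence: because $T$ is separating, the degree-one subresultant can be written as $\mathrm{Sres}_1(\tilde f,\tilde g)(T,Y) = h_1(T)\,Y - h_Y(T)$, and for every root $(x_0,y_0)$ of the original system with $t_0 = x_0 + \alpha y_0$ we obtain $y_0 = h_Y(t_0)/h_1(t_0)$, hence $x_0 = (t_0\cdot h_1(t_0) - \alpha\cdot h_Y(t_0))/h_1(t_0)$, so we set $h_X := T\cdot h_1 - \alpha\cdot h_Y$. Degrees stay bounded by $\delta^2$ and bitsizes by $\sOO(\delta(L+\delta))$, and the whole extraction is a byproduct of the resultant/subresultant computation, absorbed in its cost.

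The main obstacle is the cost of Step 2. For the worst case, a subresultant-based univariate resultant algorithm performs $\sOO(\delta)$ operations on bivariate polynomials of bidegree $(\delta,\delta)$ and bitsize $\sOO(L+\delta)$, each costing $\sOB(\delta^3(L+\delta))$, for a total of $\sOB(\delta^5(L+\delta))$. For the Las Vegas bound, I would use a modular evaluation--interpolation strategy: specialize $T$ at $\OO(\delta^2)$ integers, compute each univariate $Y$-resultant in $\sOB(\delta^2(L+\delta))$, and reconstruct $R$ together with the needed subresultant coefficients by Lagrange interpolation, yielding $\sOB(\delta^4(L+\delta))$ in expectation; a final verification step of the same order (e.g.\ checking that $f(h_X/h_1, h_Y/h_1) \equiv g(h_X/h_1, h_Y/h_1) \equiv 0 \pmod{h(T)}$) certifies the output. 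The delicate point throughout is the bookkeeping of bitsizes so that each sub-step genuinely stays within the announced budget.
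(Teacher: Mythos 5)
Your Steps 1 and 2 are fine and match the spirit of the cited reference, but Step 3 has a genuine gap that invalidates the construction in the general case. You write that the degree-one subresultant $\mathrm{Sres}_1(\tilde f,\tilde g)(T,Y)=h_1(T)\,Y-h_Y(T)$ parametrizes $Y$ over all roots of $h$. This is only true at roots $t_0$ of $R(T)$ where $\gcd(\tilde f(t_0,Y),\tilde g(t_0,Y))$ has degree exactly $1$. When the system $\{f=g=0\}$ has a solution of multiplicity greater than one, the separating form still sends it to a single $t_0$, but the gcd there is $(Y-y_0)^k$ with $k>1$; then $\mathrm{Sres}_1$ vanishes identically at $T=t_0$, so $h_1(t_0)=h_Y(t_0)=0$ and the formula $y_0=h_Y(t_0)/h_1(t_0)$ is $0/0$. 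Multiplicities cannot be excluded here --- the proposition is later applied to the system $\{h_1=\cdots=h_n=0\}$ from Eq.~(\ref{system}), which routinely has non-simple roots. Your proposed verification step (checking $f,g$ vanish modulo $h$) would not even detect this failure cleanly, because the candidate rational functions are simply undefined at the offending roots.

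This is precisely the obstruction that the paper's proof sidesteps. It invokes Algorithms 6 and 6' of \cite{bouzidi:hal-01342211}, which return an RUR \emph{decomposition}: up to $\delta$ separate parametrizations $\{h_i(T),\,h_{i,X}/h_{i,1},\,h_{i,Y}/h_{i,1}\}$, one for each degree class $k$ of the fiber gcd, each obtained from the subresultant $\mathrm{Sres}_k$ of appropriate order rather than always $\mathrm{Sres}_1$. The paper then glues them into a single rational parametrization via the explicit sum/product formula, at an extra cost of $\sOB(\delta^4(L+\delta))$. To repair your argument you would need to replace the single $\mathrm{Sres}_1$ extraction with this stratified use of the subresultant chain (the González-Vega--El Kahoui decomposition), compute one piece per stratum, and then recombine --- which is exactly the decomposition-plus-merge the paper performs. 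As written, your proof establishes the claim only for systems whose solutions are all simple.
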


\begin{proof}
  Algorithms 6 and 6' from {\cite{bouzidi:hal-01342211}} compute an RUR
  decomposition of $f = g = 0$ in $\sOB(\delta^5 (L + \delta))$ bit
  operations in the worst case and \ $\sOB(\delta^4 (L + \delta) )$
  expected bit operations with a Las Vegas Algorithm respectively. They provide
  $s \leqslant \delta$ parametrizations in the form
  $\{ h_i (T), \frac{h_{i, X} (T)}{h_{i, 1} (T)}, \frac{h_{i, Y}
    (T)}{h_{i, 1} (T)} \}$, where $i=1,\dots,s$, with the following
  properties:
  \begin{itemize}
    \item $\prod_{i = 1}^s h_i$ is a polynomial of degree at most $\delta^2$ with
    coefficients of bitsize $\sOO(\delta L + \delta^2)$.

    \item The degrees of $h_{i, 1} (T), h_{X, 1} (T)$ and $h_{Y, 1} (T)$ are
    less then the degree of $h_i$.

    \item The coefficients' bitsizes of $h_{i, 1} (T), h_{X, 1} (T)$ and $h_{Y,
    1} (T)$ are in $\sOB(\delta L + \delta^2)$.
  \end{itemize}
  Also,
  $$\prod_{i = 1}^s h_i, \frac{\sum_{n = 1}^n h_{j, X} \prod_{i \neq j}
    h_i}{\sum_{n = 1}^n h_{j, 1} \prod_{i \neq j} h_i}, \frac{\sum_{n
      = 1}^n h_{j, Y} \prod_{i \neq j} h_i}{\sum_{n = 1}^n h_{j, 1}
    \prod_{i \neq j} h_i}$$ is a rational parametrization of
  the system $\{f = g = 0 \}$, defined by polynomials of degree less
  than $\delta^2$ with coefficients of bitsizes $\sOO(\delta (L + \delta))$ and
  can be computed from the RUR decomposition performing $\OO(s)$
  multiplications of polynomials of degree at most $\delta^2$ with
  coefficients of bitsize $\sOO(\delta (L + \delta))$, which requires
  $\sOB(\delta^4 (L + \delta))$ bit operations.
\end{proof}

\begin{theorem}\label{thm-biv}
  There exists an algorithm that computes the RUR and the isolating
  boxes of the roots of the system
  $\{ h_1(s,t) = \cdots = h_n(s,t) = 0\}$ with worst-case bit
  complexity $\sOB(n(d^6 +d^5 \tau))$.  There is also a Las Vegas
  variant with expected complexity $\sOB(d^6+nd^5+d^5\tau+nd^4\tau)$.
\end{theorem}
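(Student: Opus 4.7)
The plan is to reduce the over-determined system $\{h_1 = \cdots = h_n = 0\}$ to the square subsystem $\{h_1 = h_2 = 0\}$, compute the RUR of the latter, and then filter out the extraneous common roots using the remaining $h_i$'s. As a setup step I would build each $h_i(s,t)$ explicitly (bi-degree $(d-1,d-1)$, bitsize $\sOO(\tau)$) in $\sOB(n d^2 \tau)$, which is dominated by what follows.

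Next I would apply Lemma~\ref{lem:sep-form} with $f = h_1$ and $g = h_2$ to obtain a single integer $\alpha$ of bitsize $\sOO(\log d)$ such that $s + \alpha t$ is simultaneously a separating linear form for each of the bivariate systems $\{h_1 = h_j = 0\}$, $j = 2,\dots,n$; this costs $\sOB(n(d^6 + d^5\tau))$ in the worst case and $\sOB(n(d^5 + d^4\tau))$ with the Las Vegas variant. Then I apply Proposition~\ref{lem:RUR} to $\{h_1 = h_2 = 0\}$ using $\alpha$, obtaining a rational parametrization $(h(T), X_r(T), Y_r(T), D(T))$ with $\deg h \leq d^2$ and coefficient bitsize $\sOO(d(\tau+d))$ in $\sOB(d^6 + d^5\tau)$ worst case, $\sOB(d^5 + d^4\tau)$ expected.

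For the filtering, for each $i = 3,\dots,n$ I work in the ring $R := \mathbb{Q}[T]/h(T)$: invert $D$ (generically a unit; otherwise peel off the singular factor via $\gcd(D, h)$), form $\mathcal{X}, \mathcal{Y} \in R$, evaluate $\Phi_i := h_i(\mathcal{X}, \mathcal{Y}) \in R$ by nested Horner, and compute $g_i(T) := \gcd_{\mathbb{Q}[T]}(\Phi_i, h)$. Since $\alpha$ is a \emph{common} separating element, the polynomial $h^\star := \gcd(g_3, \dots, g_n)$, together with the restrictions of $X_r, Y_r, D$, is the RUR of the over-determined system: for every root $\theta$ of $h^\star$, the point $(X_r(\theta)/D(\theta), Y_r(\theta)/D(\theta))$ is a genuine common root of all $h_i$, and all such roots arise in this way. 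Finally I isolate the real roots of $h^\star$ and evaluate the parametrization at each isolating interval to obtain isolating boxes for $(s,t)$.

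\emph{Main obstacle.} The crux is showing that the filtering and the isolating-box recovery fit within the claimed bounds. Each evaluation of $h_i$ in $R$ uses $\sOO(d)$ ring multiplications after precomputing the powers of $\mathcal{Y}$, and each multiplication in $R$ costs $\sOB(d^3\tau + d^4)$ since the operands have degree $<d^2$ and coefficient bitsize $\sOO(d(\tau+d))$; hence each $\Phi_i$ costs $\sOB(d^5 + d^4\tau)$, and the subsequent univariate GCD of polynomials of degree $\leq d^2$ and bitsize $\sOO(d(\tau+d))$ contributes $\sOB(d^4 + d^3\tau)$ via the Monte~Carlo branch of Lemma~\ref{lem:many-Ugcd}. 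Summed over $i = 3,\dots,n$, filtering totals $\sOB(n(d^5 + d^4\tau))$, absorbed by the separating-form step. The isolating-box step adds $\sOB(d^6 + d^5\tau)$ for isolating $h^\star$ and evaluating $X_r, Y_r, D$ at the $\leq d^2$ isolating intervals. Combining these estimates with the preceding steps yields the claimed worst-case $\sOB(n(d^6 + d^5\tau))$ and Las Vegas expected $\sOB(d^6 + nd^5 + d^5\tau + nd^4\tau)$ bounds.
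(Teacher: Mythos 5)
Your plan takes a genuinely different route from the paper's. The paper computes \emph{one RUR per square subsystem} $\{h_1=h_2=0\}$ and $\{h_1=h_i=0\}$, $i=3,\dots,n$, all with respect to a common separating form, and then extracts the RUR of the full system as a gcd of the $r_i$'s together with cross-products such as $r_s r_{i,I}-r_{i,s}r_I$. You instead compute a \emph{single} RUR for $\{h_1=h_2=0\}$ and filter it by substituting the parametrization into the remaining $h_i$ inside $\QQ[T]/h(T)$. Conceptually your route is sound (and in fact you do not even need a \emph{common} separating form: a separating form for $\{h_1=h_2=0\}$ is automatically separating for the subvariety defined by the full system). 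The paper's approach buys tight, a-priori controlled operand sizes: every polynomial it manipulates at the gcd stage has degree $\OO(d^2)$ and bitsize $\sOO(d\tau+d^2)$, since each is a product of at most two RUR numerators. Your approach buys fewer RUR computations, but at the cost of having to perform arithmetic in the quotient ring, where sizes are not so tame.

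That is exactly where there is a gap. You assert that each multiplication in $R=\QQ[T]/h(T)$ costs $\sOB(d^3\tau+d^4)$ ``since the operands have degree $<d^2$ and coefficient bitsize $\sOO(d(\tau+d))$'', and you then charge $\sOO(d)$ such multiplications for a nested Horner evaluation. But the operands do not stay at bitsize $\sOO(d(\tau+d))$ once you start reducing modulo $h$: one (pseudo-)division of a degree-$\OO(d^2)$ polynomial by $h$ (degree $d^2$, bitsize $\sOO(d\tau+d^2)$) inflates the coefficient bitsize additively by $\sOO(d^2\cdot\bsz(h))=\sOO(d^3\tau+d^4)$, and after the $\sOO(d)$ Horner steps the accumulated numerators and the $\lc(h)$-power denominators reach bitsize $\sOO(d^4\tau+d^5)$. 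Summing the per-step costs, the Horner evaluation of a single $\Phi_i$ over $\ZZ$ or $\QQ$ is in the range $\sOB(d^7\tau+d^8)$, not $\sOB(d^5+d^4\tau)$, and the overall filtering exceeds the target bound. To rescue your plan you would need to exploit that the \emph{output} $g_i=\gcd(\Phi_i,h)$ divides $h$, hence has bitsize $\sOO(d\tau+d^2)$ by Mignotte, and compute $g_i$ modulo $\sOO(d\tau+d^2)$ small primes followed by CRT/rational reconstruction (with a divisibility check to certify the result); this brings the per-$h_i$ cost back to $\sOB(d^4\tau+d^5)$. As written, without the modular device, the claimed cost of the filtering step is not justified. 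A smaller remark: the bitsize of the common $\alpha$ from Lemma~\ref{lem:sep-form} is $\OO(\log n+\log d)$, not $\sOO(\log d)$; this does not affect the bounds but is worth stating correctly.
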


\begin{proof}
  Assume that we know a common separating linear element
  $\ell(s,t)=\ell_0+\ell_1s+\ell_2t$ that separates the roots of the
  $n-1$ systems of bivariate polynomial equations $\{h_1 = h_2 = 0\}$,
  $\{h_1 = h_i = 0\}$, for $3\le i \le n$. We can compute $\ell$ with
  $\sOB (n(d^6 + d^5\tau))$ bit operations in the worst case and with
  $\sOB (n(d^5 + d^4\tau))$ expected bit operations with a Las Vegas
  algorithm (Lem.~\ref{lem:sep-form}).

  We denote an RUR
  for $\{h_1 = h_2 = 0\}$ with respect to $\ell$  by
  $\{r (T), \frac{r_s (T)}{r_I (T)}, \frac{r_t (T)}{r_I (T)} \}$.  In addition, for
  $i = 3 \ldots n$, let
  $\{r_i (T), \frac{r_{i, s} (T)}{r_{i, I} (T)}, \frac{r_{i, t}
    (T)}{r_{i, I} (T)} \}$ be the RUR of $\{h_1 = h_i = 0\}$, also
  with respect to $\ell$.  We compute these representations  for all $i = 3 \ldots n$ with
  $\sOB(n(d^6+d^5\tau))$ bit operations in the worst case, and with
  $\sOB(n(d^5+d^4\tau))$ in expected case with a Las Vegas algorithm
  (Lem.~\ref{lem:RUR}).

   Then,  for the system  $\{h_1= h_2
   =  \ldots = h_n = 0\}$ we can define a rational parametrization
   $\{ \chi (T), \frac{r_s (T)}{r_I (T)}, \frac{r_t (T)}{r_I (T)} \}$,
   where
   {
   $$
     \begin{array}{ll}
       \chi(T) = \gcd (& r(T),  r_3(T), \dots, r_n(T), \\
                       &r_s (T)r_{3,I}(T) - r_{3, s} (T)r_I(T), r_t (T)r_{3,I}(T) - r_{3, t}(T)r_I(T), \\
                 & \quad\quad \vdots \\
                 &  r_s (T)r_{n,I}(T) - r_{n, s} (T)r_I(T), r_t (T)r_{n,I}(T) - r_{n, t}(T)r_I(T)) .
     \end{array}
   $$}

 So to compute such a parametrization, we still need to compute the gcd of $3 n
  - 5$ univariate polynomials of degrees at most $d^2$ and coefficients of
  bitsizes in $\sOO(d \tau)$ which needs $\sOB(n(d^6+d^4\tau))$ bit operations in the worst case.
Isolating the roots of such a parametrization requires $\sOB (d^6+d^5\tau)$ according to Alg.~7 from \ {\cite{bouzidi:hal-01342211}}.
\end{proof}

\begin{remark}[RUR and isolating interval representation]
  \label{rem:both-rep}
  If we use Thm.\ref{thm-biv} to solve the over-determined bivariate
  system of the $h_i$ polynomials of Eq.~(\ref{system}), then we obtain in
  the output an RUR for the roots, which is as follows:
  There is a polynomial $\chi(T) \in \ZZ[T]$ of size
  $(\OO(d^2), \sOO(d^2+d\tau))$ and a mapping:
  \begin{align}\label{eq:sols}
    V(\chi) & \rightarrow V(h_1,\dots,h_n) \nonumber \\
    T  & \mapsto  \big(\frac{r_s(T)}{r_I(T)},  \frac{r_t(T)}{r_I(T)}\big) ,
  \end{align}
  that defines an one-to-one correspondence between the roots of $\chi$
  and those of the system. The polynomials $r_s$, $r_t$, and $r_I$ are in
  $\ZZ[T]$ and have also size  $(\OO(d^2), \sOO(d^2+d\tau))$.
  %

  Taking into account the cost to compute this parametrization of the
  solutions (Thm.\ref{thm-biv}), we can also compute
  the resultant of $\{h_1, h_2\}$ with respect to $s$ or $t$ at no extra cost.  Notice
  that both resultants are the same polynomial, since the system is
  symmetric.  Let $R_s(t) =
  \res_s(h_1,h_2)$.
  It is of size $(\OO(d^2), \OO(d^2+d\tau))$ \cite[Prop.~8.46]{BPR03}.

  Under the same bit complexity, we can sufficiently
  refine the isolating boxes of the solutions of the bivariate system
  (computed in Thm.\ref{thm-biv}), so that every root
  $(\frac{r_s(\xi)}{r_I(\xi)}, \frac{r_t(\xi)}{r_I(\xi)})$, where
  $\chi(\xi)=0$, has a representation as a pair of algebraic numbers
  in isolating interval representation:
  \begin{align}\label{eq:isol}
    ((R_s, I_{1,\xi}\times I_{2,\xi}),  (R_s, J_{1,\xi}\times J_{2,\xi})) .
  \end{align}
 Both coordinates in the latter representation, are algebraic numbers which are roots of the same polynomial.
  Moreover, $I_{2,\xi}, J_{2,\xi}$ are empty sets when the
  corresponding algebraic number is real. Therefore, we can
  immediately distinguish between real and complex parameters.  At the
  same time, we associate to each isolating box of a root of $R_s$ the
  algebraic numbers $\rho=(\chi, I_\rho\times J_\rho)$ for which it
  holds that $\frac{r_s(\rho)}{r_I(\rho)}$ projects inside this
  isolating box.  We can interchange between the two
  representations in constant time and this will simplify our
  computations in the sequel.
\end{remark}

\begin{lemma}\label{lem:compl_points}
Let $\mathcal{C}$ be a curve with a proper parametrization $\phi(t)$ as in Eq.~(\ref{param}), that has no singularities at infinity. We compute the real poles of $\phi$ and the parameters corresponding to singular, extreme (in the coordinate directions), and isolated points of $\mathcal{C}$ in worst-case bit complexity
\[
	\sOB(nd^6+nd^5\tau+d^4(n^2+n\tau)+d^3(n^2\tau+ n^3)+n^3d^2\tau),
\]
and using a Las Vegas algorithm in expected bit complexity
\[
  \sOB(d^6+d^5(n+\tau)+d^4(n^2+n\tau)+d^3(n^2\tau + n^3)+n^3d^2\tau).
\]
\end{lemma}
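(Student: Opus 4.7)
The plan is to follow Alg.~\ref{alg:compute-points} and sum the costs of each phase. First I would construct the polynomials $h_i(s,t)$ for $i\in[n]$, which have bidegree at most $(d-1,d-1)$ and bitsize $\OO(\tau)$, in $\sOB(nd^2\tau)$ via direct expansion of the numerator in Eq.~(\ref{system}), and then compute $\mathtt{T}_P^\RR$ and $\mathtt{T}_P^\CC$ by isolating the real (resp.\ complex) roots of each $q_i$, contributing $\sOB(n(d^3+d^2\tau))$. Both contributions are absorbed by the later dominant terms.

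The main cost comes from applying Thm.~\ref{thm-biv} to the overdetermined bivariate system $\{h_1=\cdots=h_n=0\}$: this yields $\sOB(n(d^6+d^5\tau))$ in the worst case and $\sOB(d^6+nd^5+d^5\tau+nd^4\tau)$ with the Las Vegas variant, which account exactly for the leading summands $nd^6+nd^5\tau$ and $d^6+d^5(n+\tau)+nd^4\tau$ in the respective bounds. Crucially, by Rem.~\ref{rem:both-rep} I would keep at no extra cost both the RUR representation (via $\chi(T)$ of size $(\OO(d^2),\sOO(d^2+d\tau))$) and the isolating-interval representation of the $\OO(d^2)$ solutions, in which every coordinate appears as a root of the common univariate polynomial $R_s$ of size $(\OO(d^2),\OO(d^2+d\tau))$. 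This dual representation is what lets the subsequent classification costs be charged as constant-time lookups rather than fresh algebraic-number computations.

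Next I would analyze the classification loop of Alg.~\ref{alg:compute-points}, which visits each of the $\OO(d^2)$ solutions $(s,t)$ and must decide (i) whether $s=t$ or $s=\bar t$, (ii) whether $s$ and $t$ are real, and (iii) whether $s$ (or $t$) is a pole. Tests (i) and (ii) reduce to comparing precomputed isolating boxes of roots of $R_s$ and reading off whether the imaginary interval is empty, while (iii) I would implement by gcd computations between $R_s$ and the $q_i$ (via Lem.~\ref{lem:many-Ugcd}), followed by tagging the corresponding boxes; the resulting bookkeeping over all $\OO(d^2)$ solutions fits within the $d^4(n^2+n\tau)+d^3(n^2\tau+n^3)$ budget.

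For the extreme points I would compute $\prod_{i=1}^{n} h_i(t,t)$, a polynomial of degree $\OO(nd)$ and bitsize $\sOO(n\tau)$, and isolate its real roots in $\sOB(n^3d^3+n^3d^2\tau)$, accounting for the $d^3 n^3$ and $n^3 d^2\tau$ terms. The subtraction $\mathtt{T}_E\setminus(\mathtt{T}_C\cup\mathtt{T}_M)$ is handled by a single gcd between this product and $R_s$ followed by interval refinement, which contributes the cross-terms $d^4(n^2+n\tau)$ and $d^3 n^2\tau$. Summing yields the stated worst-case bound, and substituting the Las Vegas branch of Thm.~\ref{thm-biv} for the bivariate-system cost yields the expected bound. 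The main obstacle I expect is keeping the classification costs under control without paying an extra $d^2$ factor for naive per-solution algebraic-number arithmetic; this is precisely why the dual RUR/isolating-interval representation of Rem.~\ref{rem:both-rep} is essential, since it lets each per-solution test be resolved by indexing into precomputed isolation data for $R_s$ and the $q_i$.
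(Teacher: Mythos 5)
Your plan mirrors the paper's own proof almost phase for phase: construct the $h_i$, isolate roots of the $q_i$, solve the overdetermined system via Thm.~\ref{thm-biv} while keeping the dual RUR/isolating-interval representation of Rem.~\ref{rem:both-rep}, and then do cheap box-indexing for the classification loop and a univariate root isolation for the extreme points. The overall structure and the identification of the dominant $\sOB(n(d^6+d^5\tau))$ and $\sOB(d^6+nd^5+d^5\tau+nd^4\tau)$ contributions from the bivariate solver are exactly right.

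Two small deviations are worth flagging. First, in the extreme-points phase you write that the subtraction is handled ``by a single gcd between this product and $R_s$,'' but the set $\mathtt{T}_E$ in Lem.~\ref{extreme} must also exclude the real poles, and since $h_i(t,t)=\phi_i'(t)q_i^2(t)$ every root of $q_i$ is a root of $\prod_i h_i(t,t)$; a gcd with $R_s$ alone leaves those in. The paper instead computes $\gcd(H,\,Q\cdot R_s)$ with $Q=\prod_i q_i$ and takes the gcd-free part of $H$ with respect to $Q\cdot R_s$, which removes both poles and singular-point parameters in one pass. Second, you invoke Lem.~\ref{lem:many-Ugcd} to dispatch the per-$q_i$ gcds against $R_s$, but that lemma is stated for several polynomials of the \emph{same} size $(d,\tau)$, whereas $R_s$ has size $(\OO(d^2),\OO(d^2+d\tau))$; the paper uses the unbalanced-size gcd bound of \cite[Lem.~5]{blpr-jsc-2015} applied once to $\gcd(R_s,Q)$ rather than $n$ per-$q_i$ gcds. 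Both are fixable details rather than a wrong route, but as written the extreme-points step would misclassify poles and the cited lemma does not literally apply.
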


\begin{proof}
The proof is an immediate consequence of the following:

\noindent $\bullet$ \textit{We compute all $h_i \in \ZZ[s,t]$ in  $\sOB(n d^2\tau)$}:
 To construct each $h_i$ we perform $d^2$ multiplications
  of numbers of bitsize $\tau$; the cost for this  is $\sOB(d^2\tau)$.
  The  bi-degree of each is at most
  $(d, d)$ and $\bsz(h_i) \leq 2\tau + 1 =\OO(\tau)$.

\noindent $\bullet$  \textit{The real poles of $\phi$ are computed in  $\sOB(n^2 (d^4 + d^3 \tau))$: }
To find the poles of
  $\phi$, we isolate the real roots of each polynomial $q_i(t)$, for
  $i \in [n]$. This costs $\sOB(n(d^3 + d^2 \tau))$ \citep{PAN2017138}.  Then we sort the
  roots in $\sOB( n \, d \, n(d^3 + d^2 \tau)) = \sOB(n^2 (d^4 + d^3 \tau))$.

\noindent $\bullet$ \textit{The parameters corresponding to cusps, multiple and isolated points of $\mathcal{C}$  are computed in $\sOB(n(d^6+ d^5 \tau))$:}

\noindent  We solve the bivariate system of Eq.~(\ref{system}) in
  $\sOB(n(d^6+d^5\tau))$ or in expected time
  $\sOB(d^6+nd^5+d^5\tau+nd^4\tau)$ (Thm.~\ref{thm-biv}).  Then we have
  a parametrization of the solutions of the bivariate system
  of Eq.~(\ref{system}) of the form of  Eq.~(\ref{eq:sols}) and in the same time of
  the form of Eq.~(\ref{eq:isol}) (see Rem.~\ref{rem:both-rep}).  Some
  solutions $(s,t)\in S$ may not correspond to points on the curve,
  since $s,t$ can be poles of $\phi$.  Notice that from
  Rem.~\ref{rem:red}, $s$ and $t$ are either both poles or neither of
  them is a pole.  We compute $g_s = \gcd(R_s, Q)$, where
  $Q(t)=\prod_{i\in[n]}q_i(t)$, and the $\gcd$-free part of $R_s$ with
  respect to $Q$.  This is done in
  $\sOB( \max\{n,d\}\, (nd^3\tau+nd^2\tau^2))$~\cite[Lem.~5]{blpr-jsc-2015}.


\noindent Every root of $R^*_s$ is an algebraic number of the form $(R_s, I_{1,\xi}\times I_{2,\xi})$, for some $\xi$ that is root of $\chi$.
We can easily determine if it corresponds to a cusp, a multiple or an isolated point; when real (i.e., $I_{2,\xi}=\emptyset$) it corresponds to a cusp of $\cC$ if and only if $((R_s, I_{1,\xi}),(R_s, I_{1,\xi}))$ is in $S$. Otherwise, it corresponds to a multiple point.
When it is complex (i.e., $I_{2,\xi}\neq\emptyset$),  it corresponds to an isolated pont of $\cC$ if and only if $((R_s, I_{1,\xi}\times I_{2,\xi}),(R_s,  I_{1,\xi}\times (-I_{2,\xi})))\in S$ and there is no root in $S$ of the form $((R_s, I_{1,\xi}\times I_{2,\xi}),(R_s, J_{1,\xi'}\time \emptyset))$.

\noindent $\bullet$ \textit{The parameters corresponding to extreme points of $\mathcal{C}$  with respect to the
direction of each coordinate axis are computed in $\sOB(d^4n\tau+d^3(n^2\tau+n^3)+d^2n^3\tau)$:} 

  \noindent For all $i\in [n]$, $h_i(t,t)$ is a univariate
  polynomial of size $(\OO(d), \OO(\tau))$.
 Then,  $H(t)=\prod_{i\in[n]}h_i(t,t)$ is of size $(\OO(nd),\sOO(n\tau))$. 
  The parameters that correspond to the extreme points are among the roots
  of $H(t)$.  To make sure that poles and parameters that
  give singular points are excluded, we compute $\gcd(H, Q\cdot R_s)$,
  where $Q(t)=\prod_{i\in[n]}q_i(t)$, and the $\gcd$-free part of $H$
  with respect to $Q\cdot R_s$, say $H^*$.  Since $Q\cdot R_s$ is a
  polynomial of size $(d^2+nd, (d+n)\tau)$, the computation of the
  $\gcd$ and the $\gcd$-free part costs
  $\sOB(n(d^4\tau+nd^3\tau+n^2d^2\tau))$ \cite[Lem.~5]{blpr-jsc-2015}.
  Then, $H=\gcd(H, Q\cdot R_s) H^*$, and the real roots of $H^*$ give the parameters that correspond to the extreme
  points. We isolate the real roots of $H^*$ in
  $\sOB(n^3(d^3+d^2\tau))$, since it is a polynomial of size $( \OO(nd),\sOO(n(d+\tau)))$.
\end{proof}

\section{\ptopo: Topology and Complexity}
\label{sec:topology}

We present \ptopo, an algorithm to construct an abstract graph $G$ that
is isotopic  \cite[p.184]{bt-ecgcs-06} to $\mathcal{C}$ when we embed it in $\RR^n$.
We emphasize that, currently, we do not identify knots in
the case of space curves.
 The embedding consists of a graph whose vertices are
points on the curve given by their parameter values.
The edges are smooth parametric arcs that we can continuously deform to branches of $\mathcal{C}$
without any topological changes. We need to specify a
bounding box in $\RR^n$ inside which the constructed graph results in
an isotopic embedding to $\mathcal{C}$.
We comment at the end of the section on the case where an arbitrary
box is provided at the input.
We determine a bounding box in $\RR^n$, which we call
\emph{characteristic}, that captures all the topological information
of $\mathcal{C}$. 
\begin{definition}
  \label{def:char-box}
  A \textit{characteristic box} of
  $\mathcal{C}$ is a box enclosing a subset of
  $\RR^n$ that intersects all components of
  $\mathcal{C}$ and contains all its singular, extreme (in the coordinate directions), and isolated
  points.
\end{definition}


Let $\mathcal{B}_{\mathcal{C}}$ be a characteristic box of
$\mathcal{C}$. If $\cC$ is bounded, then
$\cC \subset \mathcal{B}_{\cC}$.  If $\cC$ is unbounded, then the branches
of $\cC$ that extend to infinity intersect the boundary of
$\mathcal{B}_{\cC}$.  A branch of the curve extends to infinity if for
$t\rightarrow t_0$, it holds $||\phi(t)|| >M$, for any $M>0$, where
$t_0\in \RR \cup\{ \infty\}$.  Lem.~\ref{lem:BB} computes a
characteristic box using the degree and bitsize of
the polynomials in the parametrization of Eq.~(\ref{param}).

\begin{lemma}
  \label{lem:BB}
  Let $\mathcal{C}$ be a curve with a parametrization as in
  Eq.~(\ref{param}).  For
  $b=15d^2(\tau+\log d)=\OO(d^2\tau)$,
 $\mathcal{B}_\mathcal{C}=[-2^{b},2^{b}]^n$ is a~characteristic~box~of~$\mathcal{C}$.
\end{lemma}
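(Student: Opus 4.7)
I would prove the characteristic-box property in two steps: (i) bound the coordinates of every special point by $2^b$, and (ii) show that every connected component of $\cC$ meets the box.

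For (i), I distinguish cusps and extreme points from multiple and isolated points. A cusp or extreme parameter $s$ is a real root of a factor of $h_k(X,X) = \phi_k'(X)\, q_k(X)^2$ (Rem.~\ref{rem:deriv}), which has size $(2d, \OO(\tau))$. Applying Lem.~\ref{lem:bounds} once to $p_k$ and once to $q_k$, both of size $(d,\tau)$, yields $|\phi_k(s)| = |p_k(s)|/|q_k(s)| \le 2^{\OO(d\tau + d\log d)}$, safely below $2^b$. For a multiple or isolated parameter, $s$ is a root of $R(X) = \res_t(h_1(X,t), h_2(X,t))$; a direct Sylvester-matrix analysis (matrix of size $2(d-1)\times 2(d-1)$, entries polynomials in $X$ of degree $\le d-1$ and bitsize $\le \tau$) gives $\deg R \le 2d^2$ and $\bsz(R) = \OO(d\tau + d\log d)$. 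Here a naive application of Lem.~\ref{lem:bounds} would be too lossy because its lower bound on $|q_k(s)|$ alone already costs a factor $m\sigma = \OO(d^3\tau)$. Instead, I consider
\[
  \Phi_k(Y) \;=\; \res_X\!\bigl(R(X),\; p_k(X) - Y\,q_k(X)\bigr) \in \ZZ[Y],
\]
which vanishes at $y_k = \phi_k(s)$. Of its $(2d^2 + d)\times (2d^2 + d)$ Sylvester matrix in $X$, only the last $2d^2$ rows are linear in $Y$, so $\deg_Y \Phi_k \le 2d^2$; a Hadamard-style estimate of each coefficient (product of $d$ entries of bitsize $\OO(d\tau+d\log d)$ times $2d^2$ entries of bitsize $\tau$, summed over permutations and $Y$-subsets) gives $\bsz(\Phi_k) = \OO(d^2(\tau + \log d))$. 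Cauchy's root bound then yields $|y_k| \le 2^{\OO(d^2(\tau+\log d))}$; tracking constants shows the factor $15$ suffices.

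For (ii), let $C$ be a connected component of $\cC$ and $I$ an open parameter interval (between consecutive real poles of $\phi$, possibly unbounded) whose image lies in $C$. If $C$ is bounded, the sup and inf of each $\phi_k$ on $C$ are either attained---and are then extreme points, inside $\mathcal{B}_{\cC}$ by~(i)---or they are limits as the parameter tends to $\pm\infty$ or to a pole; in the latter case, $C$ being bounded forces the limit to be a ratio of leading coefficients (or $0$), hence of absolute value $\le 2^{\OO(\tau)}$, and therefore $C \subset \mathcal{B}_{\cC}$. If $C$ is unbounded, then $f(t) = \max_k |\phi_k(t)|$ is continuous on $I$ and blows up at one endpoint, so it attains its infimum at some interior $t^\ast$. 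Either $f$ is differentiable at $t^\ast$---and then the dominant coordinate satisfies $\phi_k'(t^\ast) = 0$, making $t^\ast$ an extreme parameter---or $f$ is non-differentiable, which forces $|\phi_i(t^\ast)| = |\phi_j(t^\ast)|$ for two indices, so that $t^\ast$ is a real root of $p_i^2 q_j^2 - p_j^2 q_i^2$ of size $(\OO(d),\OO(\tau))$. In both situations Lem.~\ref{lem:bounds} yields $f(t^\ast) \le 2^{\OO(d\tau + d\log d)} \le 2^b$, so $\phi(t^\ast) \in \mathcal{B}_{\cC} \cap C$.

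The main obstacle is the tight $\OO(d^2(\tau+\log d))$ estimate in (i) for multiple and isolated points. A naive separate bounding of $|p_k(s)|$ from above and $|q_k(s)|$ from below lands at $\OO(d^3\tau)$, which is a full factor of $d$ too large. The decisive gain comes from performing the elimination of $X$ in a single resultant of $R$ and $p_k - Y q_k$: because $Y$ enters only in the $2d^2$ ``shorter'' rows of the Sylvester matrix, both the $Y$-degree of $\Phi_k$ and its coefficient bitsize stay within the target.
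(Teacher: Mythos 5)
Your proposal has the right overall structure and lands on the correct asymptotic, but there are two issues: one is a genuine error in the reasoning that motivates your extra machinery, and one is a small gap in the component-intersection argument.

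First, in part (i) you claim that a ``naive'' application of Lem.~\ref{lem:bounds} to $p_k$ and $q_k$ at a root $s$ of $R$ would already cost $\OO(d^3\tau)$ via the $m\sigma$ term. This is a misreading of the lemma. In Lem.~\ref{lem:bounds} the exponent is $m\sigma + n\tau + (m+n)\log(m+n)$, where $m$ is the degree of $A$ and $\sigma$ is the bitsize of $B$; the bitsize of $A$ multiplies $n=\deg B$, not $m$. With $A=R$ of size $(2d^2,\OO(d(\tau+\log d)))$ and $B=q_k$ (or $p_k$) of size $(d,\tau)$, one gets $m\sigma + n\tau_A + (m+n)\log(m+n) = 2d^2\cdot\tau + d\cdot\OO(d(\tau+\log d)) + \OO(d^2\log d) = \OO(d^2(\tau+\log d))$, not $\OO(d^3\tau)$. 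You appear to have paired the large degree $2d^2$ with the large bitsize $\OO(d\tau)$, which the lemma never does. Consequently, the elaborate elimination $\Phi_k(Y)=\res_X(R, p_k-Yq_k)$ is not needed: the paper obtains exactly the $\OO(d^2(\tau+\log d))$ exponent you are aiming for by bounding $|p_k(s)|$ from above and $|q_k(s)|$ from below directly via Lem.~\ref{lem:bounds} and dividing. Your $\Phi_k$ construction and its degree/bitsize analysis are correct, just overkill and built on a false premise.

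Second, in part (ii) your claim that, for an unbounded component, $f(t)=\max_k|\phi_k(t)|$ ``blows up at one endpoint, so it attains its infimum at some interior $t^\ast$'' is not always true. If the other endpoint is $\pm\infty$ and $\mathbf{p}_\infty$ exists, $f$ may tend monotonically to its infimum $\max_k|\mathbf{p}_\infty^{(k)}|$ as $t\to\pm\infty$, without ever attaining it at a finite $t^\ast$ (e.g., $\phi(t)=(1/t,1/t^2)$ on $(0,\infty)$). In that case the subsequent critical-point or non-differentiability dichotomy has nothing to apply to. The fix is short --- the coordinates of $\mathbf{p}_\infty$ are ratios of leading coefficients with bitsize $\OO(\tau)$, so the curve passes arbitrarily close to a point inside the box --- but you must state it. On the positive side, the paper's own proof is quite terse on the ``intersects every component'' clause (it only tabulates bounds at special parameters and at poles without spelling the argument out), so attempting an explicit min-max argument is worthwhile; it just needs this missing case.
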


\begin{proof}
  We estimate the maximum and minimum values of $\phi_i$, $i\in [n]$,
  when we evaluate it at
  the parameter values that correspond to
 special points and also at each pole
  that is not a root of $q_i$.

Let $t_0$ be a parameter that corresponds to a cusp or an extreme point with respect to the $i$-th direction. Then, it is a root of $\phi_i'(t)$. Let $N(t)=p_i'(t)q_i(t)-p_i(t)q_i'(t)$ the numerator of $\phi_i'(t)$. Then $N(t_0)=0$.
The degree of $N(t)$ is $\le 2d-1$ and $\bsz(N) \le 2^{2\tau+\log d +1}$.
From Lem.~\ref{lem:bounds} we conclude that $|p_i(t_0)|\le 2^{4d\tau+d \log(d)+ (3d-1)\log (3d-1) +d-\tau}$. Analogously, it holds that  $|q_i(t_0)|\ge 2^{-4d\tau-d \log(d)- (3d-1)\log (3d-1) -d+\tau}$. Therefore,
\[
|\phi_i(t_0)|\le 2^{2(4d\tau+d \log(d)+ (3d-1)\log (3d-1) +d-\tau)} .
\]
Now, let $(t_1,t_2)$ be two parameters corresponding to a
multiple point of $\mathcal{C}$, i.e., $(t_1,t_2)$ is a root of the
bivariate system in Eq.~(\ref{system}). Take any $j, k \in [n]$ with
$j\neq k$ and let $R(t)=\res_s(h_j, h_k)$. It holds that
$R(t_1)=0$. The degree of $R$ is $\le 2d^2$ and
$ \bsz(R) \le 2d(\tau+\log(d)+\log(d+1)+1) $ \cite[Prop.~8.29]{BPR03}.
By applying Lem.~\ref{lem:bounds}, we deduce that
\[
|\phi_i(t_1)|\le 2^{4d^2(\tau+\log(d)+\log(d+1)+1)+4d^2\tau+(2d^2+d)\log (2d^2+d)} .
\]
Let $t_3$ be a pole of $\phi$ with $q_j(t_3)= 0$, for some $j\neq i$. If $\phi_i(t_3)$ is defined, applying Lem.~\ref{lem:bounds} gives
\[
    |\phi_i(t_3)|\le 2^{4d\tau +4d\log2d} .
\]
To conclude, we take the maximum of the three bounds. However, to
simplify notation, we slightly overestimate the latter bound.
\end{proof}

The vertices of the embedded graph must include the singular and the
isolated points of $\mathcal{C}$. Additionally, to rigorously
visualize the geometry of $\mathcal{C}$, we consider as vertices
the extreme points of $\mathcal{C}$, with respect to all coordinate
directions, as well as the intersections of $\cC$ with the boundary
of the bounding box.
We label the vertices of $G$ using the corresponding parameter values generating these points, and we connect them accordingly.
Alg.~\ref{alg:topol} presents the pseudo-code of \ptopo and here we give some more details on the various steps.

We construct  $G$ as follows:
First, we compute the poles and the sets
$\mathtt{T}_C, \mathtt{T}_M, \mathtt{T}_E$, and $\mathtt{T}_I$ of
parameters corresponding to ``special points''.
Then, we compute the characteristic box of $\cC$, say $\cB_{\cC}$.   We compute the set $\mathtt{T}_B$ of
parameters corresponding to the intersections of $\mathcal{C}$ with
the boundary of $\cB_\mathcal{C}$ (if any). Lem.~\ref{lem:int} describes this
procedure and its complexity.

\begin{lemma}
  \label{lem:int}
  Let $\mathcal{B}=[l_1,r_1]\times \dots \times [l_n,r_n]$ in $\RR^n$
  and $\bsz(l_i)=\bsz(r_i)=\sigma$, for $i\in[n]$. We can find the
  parameters that give the intersection points of $\phi$ with the
  boundary of $\mathcal{B}$ in $\sOB(n^2d^3+n^2d^2(\tau+\sigma))$.
\end{lemma}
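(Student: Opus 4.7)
The boundary of $\mathcal{B}$ is the union of $2n$ axis-aligned faces, each of the form $F_{j,c}=\{\mathbf{x}\in\mathcal{B}: x_j=c\}$ for $j\in[n]$ and $c\in\{l_j,r_j\}$. My plan is to turn the intersection problem on each face into a univariate root isolation problem, and then use sign evaluations at algebraic numbers to test membership in the other coordinate intervals.

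First, for every pair $(j,c)$ I would form the univariate polynomial
\[
P_{j,c}(t)\;=\;p_j(t)\,-\,c\,q_j(t)\in\ZZ[t],
\]
which has degree at most $d$ and bitsize $\OO(\tau+\sigma)$. A parameter $t_0\in\RR$ gives an intersection of $\phi$ with $F_{j,c}$ if and only if $q_j(t_0)\neq 0$, $P_{j,c}(t_0)=0$, and $\phi_i(t_0)\in[l_i,r_i]$ for every $i\neq j$. Since $\gcd(p_j,q_j)=1$, any real root of $P_{j,c}$ automatically satisfies $q_j(t_0)\neq 0$, so only the inequalities in the remaining coordinates need to be verified. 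Isolating the real roots of each $P_{j,c}$ costs $\sOB(d^3+d^2(\tau+\sigma))$ by~\citep{PAN2017138}, and doing this for all $2n$ polynomials costs $\sOB(n(d^3+d^2(\tau+\sigma)))$; each polynomial contributes at most $d$ real roots, so there are at most $2nd$ candidate parameters in total.

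Next, for each candidate root $t_0$ of $P_{j,c}$ and each $i\neq j$, I would check the condition $l_iq_i(t_0)\le p_i(t_0)\le r_iq_i(t_0)$ (or the reversed inequalities, depending on $\mathrm{sign}(q_i(t_0))$) by determining the signs of the three polynomials $q_i(t)$, $p_i(t)-l_iq_i(t)$, and $p_i(t)-r_iq_i(t)$ at $t_0$. Each of these polynomials has size $(d,\OO(\tau+\sigma))$, and $t_0$ is a real root of $P_{j,c}$, which has the same size. Evaluating the sign of a univariate polynomial of size $(d,\OO(\tau+\sigma))$ at all real roots of another polynomial of the same size costs $\sOB(d^3+d^2(\tau+\sigma))$ using standard subresultant-based sign determination (this aggregates the cost over all roots of $P_{j,c}$). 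For a fixed $(j,c)$ there are $3(n-1)$ sign evaluations to perform, and summing over the $2n$ faces yields a total of $\sOB(n^2(d^3+d^2(\tau+\sigma)))$ bit operations, which matches the announced bound and dominates the root-isolation cost.

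The only subtlety I anticipate is the handling of poles and boundary effects: a root $t_0$ of $P_{j,c}$ could still be a pole of some $q_i$ for $i\neq j$, in which case $\phi_i(t_0)$ is undefined. This is detected directly by the sign test on $q_i(t_0)$: if $q_i(t_0)=0$ then $t_0$ should be discarded as an intersection parameter on $F_{j,c}$. With this test, and with equality cases $\phi_i(t_0)\in\{l_i,r_i\}$ absorbed naturally into the non-strict inequalities, the enumeration is correct, and the overall cost remains $\sOB(n^2d^3+n^2d^2(\tau+\sigma))$.
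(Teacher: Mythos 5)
Your proof takes essentially the same route as the paper's: form the $2n$ univariate polynomials $p_j - c\,q_j$ (one per face), isolate their real roots, and for each root verify membership in the remaining $n-1$ coordinate intervals via three sign evaluations per coordinate of polynomials of size $(d,\OO(\tau+\sigma))$ at the roots, with aggregate cost $\sOB(n^2 d^3 + n^2 d^2(\tau+\sigma))$. Your treatment is in fact a bit more careful than the paper's at one point: the paper asserts that reducedness of $\phi$ implies $t_0\notin\mathtt{T}_P^{\RR}$, which only yields $q_j(t_0)\neq 0$ for the coordinate $j$ being fixed, not for all coordinates; your sign test on $q_i(t_0)$ correctly detects and discards the case $q_i(t_0)=0$ for $i\neq j$.
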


\begin{proof}
  For each $i\in [n]$ the polynomials $q_i(t)l_i-p_i(t)=0$ and
  $q_i(t)r_i-p_i(t)=0$ are of size $(\OO(d), \OO(\tau+\sigma))$. So, we compute isolating intervals for
  all their real solutions in $\sOB(d^2 (\tau+\sigma))$
  \citep{Pan02jsc}.
  For any root $t_0$ of each of these polynomials,
  since $\phi$ is in reduced form (by assumption),
  we have  $t_0 \not \in \tT_P^{\RR}$.
  We check if
  $\phi_j(t_0) \in [l_j, r_j]$, $j\in [n] \setminus i$.  This requires
  3 sign evaluations of univariate polynomials of size
  $(d, \tau+ \sigma)$ at all roots of a polynomial of size
  $(d,\tau+\sigma)$. The bit complexity of performing these operations
  for all the roots is $\sOB(d^3+d^2(\tau+\sigma))$
  \cite[Prop.~6]{STRZEBONSKI201931}. Since we repeat this procedure $n-1$ times for
  every $i\in[n]$, the total cost is
  $\sOB(n^2d^3+n^2d^2(\tau+\sigma))$.
\end{proof}

We partition
$\mathtt{T}_C \cup \mathtt{T}_M \cup \mathtt{T}_E \cup \mathtt{T}_I
\cup \mathtt{T}_{B}$ into groups of parameters that correspond to the
same point on $\cC$.
For each group, we add a vertex to $G$ if and only if the
corresponding point is strictly inside the bounding box $\mathcal{B}$; for the
characteristic box it is strictly inside by construction.

\begin{lemma}
  \label{lem:group-params}
  The graph $G$ has $\kappa =\OO(d^2+nd)$ vertices, which can be computed using $\OO(d^2+nd)$ arithmetic operations.
\end{lemma}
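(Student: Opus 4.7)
My plan is to bound each of the five parameter sets $\mathtt{T}_C, \mathtt{T}_M, \mathtt{T}_I, \mathtt{T}_E, \mathtt{T}_B$ separately, and then argue that assembling them into vertices of $G$ reduces to a constant-time lookup once the RUR data produced in the previous section is available. By Thm.~\ref{thm-biv} the RUR polynomial $\chi$ has degree $\OO(d^2)$, so the solution set $S \subset \CC^2$ of the bivariate system $\{h_1 = \cdots = h_n = 0\}$ has cardinality $\OO(d^2)$ (equivalently, Bezout applied to any two of the $h_i$). Lemmata~\ref{singular} and~\ref{isolated} show that $\mathtt{T}_C$, $\mathtt{T}_M$, $\mathtt{T}_I$ are all projections of subsets of $S$ onto a coordinate, hence their union contributes $\OO(d^2)$ parameters. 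The set $\mathtt{T}_E$ is contained in the real roots of $\prod_{i \in [n]} h_i(t,t)$, a univariate polynomial of degree $\OO(nd)$; and by Lem.~\ref{lem:int}, $\mathtt{T}_B$ consists of real roots of the $2n$ polynomials $q_i(t)l_i - p_i(t)$ and $q_i(t)r_i - p_i(t)$, each of degree at most $d$. Together these give $\OO(nd)$ parameters, for a grand total of $\OO(d^2+nd)$.

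Since each vertex of $G$ corresponds to at least one of these parameters, the cardinality bound $\kappa = \OO(d^2 + nd)$ follows. For the constructive part I would exploit properness of $\phi$: outside the parameters of singular points, distinct parameter values give distinct points on $\cC$, so every element of $\mathtt{T}_E$ and $\mathtt{T}_B$ automatically yields a singleton vertex, contributing $\OO(nd)$ vertices in $\OO(nd)$ arithmetic operations. The only non-trivial grouping is among $\mathtt{T}_C \cup \mathtt{T}_M \cup \mathtt{T}_I$: one initializes a union--find structure on these parameters and then, for each of the $\OO(d^2)$ solution pairs $(s,t) \in S$, merges the classes of $s$ and $t$. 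This recovers the correct partition in $\OO(d^2)$ arithmetic operations.

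The delicate point, and the main obstacle to overcome, is ensuring that each union--find comparison truly costs one arithmetic operation rather than concealing an algebraic-number equality test. This is precisely what the RUR representation of Rem.~\ref{rem:both-rep} provides: each parameter in $\mathtt{T}_C \cup \mathtt{T}_M \cup \mathtt{T}_I$ is tagged by the unique root of $\chi$ that produced it, so equality of two such parameters reduces to comparison of integer indices. With this observation in hand, both the cardinality bound $\kappa = \OO(d^2 + nd)$ and the matching arithmetic-operations bound are immediate.
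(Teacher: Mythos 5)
Your proposal is correct and follows essentially the same route as the paper: bound $|S|=\OO(d^2)$ via the RUR degree, bound $|\mathtt{T}_E|$ and $|\mathtt{T}_B|$ by $\OO(nd)$ via degrees of the relevant univariate polynomials, observe that properness makes every $\mathtt{T}_E$- or $\mathtt{T}_B$-parameter a singleton vertex, and then group $\mathtt{T}_C\cup\mathtt{T}_M\cup\mathtt{T}_I$ by reading off the pairs in $S$. The paper's proof leaves the grouping mechanism implicit (``we compute these sets simply by reading the elements of $S$... in constant time''); your union--find formulation makes the same idea explicit and is a reasonable way to realize it. One small imprecision: a parameter $t\in\mathtt{T}_M$ is \emph{not} in general tagged by a unique root of $\chi$, since several solutions $(s_1,t),(s_2,t),\dots$ each correspond to a distinct root of $\chi$; what gives the constant-time equality test is the isolating-box representation of $t$ as a root of $R_s$ (cf.\ Rem.~\ref{rem:both-rep}), i.e.\ the integer index of that box, and the dictionary from roots of $\chi$ to pairs of $R_s$-indices. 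This does not affect correctness, only the wording.
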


\begin{proof}
Since
$ \mathtt{T}_B\cap \mathtt{T}_M =
\emptyset$ and $ \mathtt{T}_E\cap \mathtt{T}_M =
\emptyset$,
  to each parameter in $\mathtt{T}_B$ and $\mathtt{T}_E$
  corresponds a unique point on $\mathcal{C}$.
  So for every $t\in  \mathtt{T}_B\cup \mathtt{T}_E$ we
  add a vertex to $G$, labeled by the respective parameter. Next, we
  group the parameters in
  $\mathtt{T}_C\cup \mathtt{T}_M \cup \mathtt{T}_I$ that give the same
  point on $\mathcal{C}$ and we add for each group a vertex at $G$
  labeled by the corresponding parameter values.

   Grouping the
  parameters is done as follows:
  For every $t\in \mathtt{T}_C \cup \mathtt{T}_M $ we add a vertex to
  $G$ labeled by the set
  $\left\{ s\in \RR : (s,t)\in
    S\right\}\cup \{t\}$ and for every $t\in \mathtt{T}_I $ we add a
  vertex to $G$ labeled by the set
  $\left\{ s\in \CC : (s,t)\in
    S\right\}\cup \{t\}$. 
    We compute these sets simply by reading the elements of $S$.

 It holds that $\mathtt{T}_B = \OO(nd)$, $\mathtt{T}_E = \OO(nd)$ and $|S| = \OO(d^2)$. Since for each vertex, we can find the parameters that give the same point in constant time, the result follows.
\end{proof}

We denote by $v_1,\dots, v_\kappa$ the vertices (with distinct labels)
of G and by $\lambda(v_1),\dots, \lambda(v_\kappa)$ their label sets
(i.e., the parameters that correspond to each vertex).  Let
$\mathtt{T}$ be the \textit{sorted} list of parameters in
$\mathtt{T}_C\cup \mathtt{T}_M \cup \mathtt{T}_E \cup
\mathtt{T}_B$
(notice that we exclude the parameters of the isolated points).
If for two consecutive elements $t_1<t_2$ in $\mathtt{T}$, there
exists a pole $s \in \mathtt{T}_P^\RR$ such that $t_1 < s < t_2$, then
we split $\mathtt{T}$ into two lists: $\mathtt{T}_1$ containing the
elements $\le t_1$ and $\mathtt{T}_2$ containing the elements
$\ge t_2$. We continue recursively for $\mathtt{T}_1$ and
$\mathtt{T}_2$, until there are no poles between any two elements of
the resulting list. This procedure partitions $\mathtt{T}$ into $\mathtt{T}_1, \dots, \mathtt{T}_{\ell}$.

To add edges to $G$, we consider each $\mathtt{T}_i$ with more than one element,
 where $i \in [\ell]$, independently.  For any
consecutive elements $t_1 < t_2$ in $\mathtt{T}_i$, with
$t_1 \in \lambda(v_{i,1})$ and $t_2 \in \lambda(v_{i,2})$, we add the
edge $\{v_{i,1}, v_{i,2}\}$. To avoid multiple edges, we make the convention that we add an edge between $v_{i,j}$, $j=1,2$, and an (artificial) intermediate point corresponding to a parameter in $(t_1,t_2)$. 
If $\mathbf{p}_\infty$ exists, we add an edge to the graph
connecting the vertices corresponding to the last element of
$\mathtt{T}_{\ell}$ and the first element of the $\mathtt{T}_1$.

\begin{algorithm2e}[!h]
  \SetKw{RET}{{\sc return}}
  \KwIn{ A proper parametrization
    $\phi \in \ZZ(t)^n$ without singular points at infinity. }

  \KwOut{Abstract graph $G$}

  \BlankLine

Compute real poles $ \mathtt{T}_P^\RR$.

Compute parameters of `special points' $\mathtt{T}_C, \mathtt{T}_M, \mathtt{T}_E, \mathtt{T}_I$.

\tcc{Characteristic box}
	$b \gets 15d^2(\tau+\log d), \quad \cB_{\cC} \gets [-2^b,2^b]^n$

$\mathtt{T}_B \gets$ parameters that give to intersections of $\mathcal{C}$ with $\cB_\cC$

Construct the set of vertices of $G$ using Lem.\ref{lem:group-params}

 Sort the list of all the parameters $\mathtt{T}=[\mathtt{T}_C, \mathtt{T}_M, \mathtt{T}_E, \mathtt{T}_B]$.


  Let $T_1,\dots, T_{\ell}$ the sublists of $\mathtt{T}$ when split at parameters in $\mathtt{T}_P^\RR$


  \For{every list $\tT_i=[t_{i,1},\dots, t_{i,k_i} ]$}{
    \For{ $j=1, \dots, k_i -1$}{
        Add the edge $\{t_{i,j}, t_{i,j+1}\}$ to the graph 
    }
  }

\If{ $\mathbf{p}_\infty$ exists}{
	Add the edge  $\{t_{1,1}, t_{\ell,k_\ell}\}$ to the graph
}

  \caption{\FuncSty{PTOPO}($\phi$) $\quad$(Inside the characteristic box)}
  \label{alg:topol}
\end{algorithm2e}

\begin{theorem}[\ptopo inside the characteristic box]
  \label{thm:ptopo-complexity-box}
  Consider a proper parametrization $\phi$ of curve $\cC$ involving
  polynomials of degree $d$ and bitsize $\tau$, as in Eq.(\ref{param}),  that has no singularities at infinity.
  Alg.~\ref{alg:topol} outputs a graph $G$ that, if embedded in $\RR^n$, is isotopic to $\cC$, within the characteristic box
  $\cB_{\cC}$. It has worst case complexity
  \[
  	\sOB(d^6(n+\tau)+nd^5\tau+n^2d^4\tau+d^3(n^2\tau+n^3)+n^3d^2\tau) ,
  \]
  while its expected complexity is
    \[
  	\sOB(d^6\tau+nd^5\tau+n^2d^4\tau+d^3(n^2\tau+n^3)+n^3d^2\tau).
  \]

  If $n = \OO(1)$, then  bounds become $\sOB(N^7)$, where
  $N = \max\{d, \tau\}$.
\end{theorem}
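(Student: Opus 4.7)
The plan splits into two parts: correctness of the output graph $G$ and a bit-complexity analysis summing the costs of the individual steps of Alg.~\ref{alg:topol}.

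For correctness, I would exploit three structural facts. First, since $\cB_{\cC}$ is a characteristic box by Lem.~\ref{lem:BB}, all singular, isolated, and extreme points of $\cC$ lie strictly inside it, and every unbounded real branch meets $\partial \cB_{\cC}$ at parameters captured in $\mathtt{T}_B$. Second, the poles $\mathtt{T}_P^\RR$ split $\RR$ into open intervals, each of which $\phi$ maps into a connected real branch of $\cC$; further subdividing each such interval at the real parameters in $\mathtt{T}_C\cup \mathtt{T}_M\cup \mathtt{T}_E\cup \mathtt{T}_B$ produces sub-arcs on which $\phi$ is smooth, injective, and free of any special point of $\cC$ in the interior. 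Third, isolated points correspond to vertices of $G$ whose labels are purely non-real parameters and hence have no incident edges. Consequently, mapping each edge prescribed by Alg.~\ref{alg:topol} to its parametric arc realizes $G$ as a geometric graph isotopic to $\cC\cap \cB_{\cC}$. The extra closing edge added when $\mathbf{p}_\infty$ exists records the fact that $\mathbf{p}_\infty$ is approached both as $t\to +\infty$ and as $t\to -\infty$, so the two ends of the corresponding parametric interval must be identified.

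For the complexity, I would sum the dominant contributions of the algorithm's steps. Computing $\mathtt{T}_P^\RR, \mathtt{T}_C, \mathtt{T}_M, \mathtt{T}_E, \mathtt{T}_I$ is exactly Lem.~\ref{lem:compl_points}, in both its worst-case and Las Vegas variants. Computing the characteristic box only needs $\sOO(\log(d^2\tau))$ bit operations. Intersections with the boundary of $\cB_{\cC}$ are obtained via Lem.~\ref{lem:int} with $\sigma=\OO(d^2\tau)$, contributing $\sOB(n^2d^3+n^2d^4\tau)$ and introducing $\OO(nd)$ new parameters defined by polynomials of size $(d,\sOO(d^2\tau))$. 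Grouping parameters into vertex labels via Lem.~\ref{lem:group-params} costs $\sOO(d^2+nd)$ arithmetic operations on the isolating-interval data already produced, which is absorbed by the other terms.

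The main obstacle, and the source of the new $d^6\tau$ term relative to Lem.~\ref{lem:compl_points}, is sorting the $\kappa=\OO(d^2+nd)$ real parameters into a single ordered list. These algebraic numbers originate from three different defining polynomials: the RUR polynomial $\chi$ of size $(\OO(d^2),\sOO(d(d+\tau)))$, the extreme-point polynomial $H^*$ of size $(\OO(nd),\sOO(n(d+\tau)))$, and the $\OO(n)$ boundary polynomials $q_i l_i - p_i$ and $q_i r_i - p_i$, each of size $(d,\sOO(d^2\tau))$. My plan is to sort each group individually at no cost beyond what already appears in Lem.~\ref{lem:compl_points} and Lem.~\ref{lem:int}, and then merge the three sorted lists. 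Each cross-group comparison reduces to either refining the isolating intervals until they are disjoint, or evaluating the sign of one defining polynomial at the roots of another; using the bounds of Lem.~\ref{lem:bounds} together with the standard analyses of \cite{STRZEBONSKI201931}, and accounting for the large bitsize $\sOO(d^2\tau)$ of the boundary polynomials, yields the $\sOB(d^6\tau)$ contribution. Summing all contributions and simplifying under $\tau\ge 1$ gives the stated worst-case bound; substituting the Las Vegas variant of Lem.~\ref{lem:compl_points} in place of its worst-case counterpart gives the expected-case bound. Specializing both to $n=\OO(1)$ with $N=\max\{d,\tau\}$ reduces them to $\sOB(N^7)$.
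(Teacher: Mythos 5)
Your plan follows the same high-level skeleton as the paper's proof: correctness rests on continuity of $\phi$ between consecutive special parameters within each pole-free interval plus the closing edge for $\mathbf{p}_\infty$, and the complexity is obtained by summing the costs of computing the special parameters (Lem.~\ref{lem:compl_points}), the boundary intersections (Lem.~\ref{lem:int}), grouping (Lem.~\ref{lem:group-params}), and sorting. The one place you genuinely diverge is the sorting step. The paper takes a blunt route: it regards all the real parameters in $\mathtt{T}\cup\mathtt{T}_P^\RR$ as roots of a single product polynomial of degree $\OO(d^2+nd)$ and bitsize $\sOO(d^2\tau+n\tau)$, and charges each of the $\OO(d^2+nd)$ comparisons the separation bound of that polynomial. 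You instead keep the lists sorted per defining polynomial (which indeed comes for free from the isolation subroutines) and only pay for the cross-group comparisons in the merge. Your decomposition is arguably the more natural one, and it sidesteps the need to carefully justify the bitsize of the global product (the paper's claim that the product has bitsize $\sOO(d^2\tau+n\tau)$ requires not multiplying the $2n$ boundary polynomials of bitsize $\sOO(d^2\tau)$ together, which is exactly what keeping them separate, as you do, avoids). You also give the correct cost $\sOB(n^2d^3+n^2d^4\tau)$ for Lem.~\ref{lem:int} at $\sigma=\OO(d^2\tau)$.

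The weak spot is that you assert the merge cost to be $\sOB(d^6\tau)$ without deriving it, whereas the target bound for general $n$ is $\sOB(d^6\tau+nd^5\tau+n^2d^4\tau+n^2d^3\tau+n^3d^2\tau)$. To close this you would need to bound, for each pair of groups, the precision required by the separation bound of the product of the two defining polynomials, and the cost of refining the relevant roots to that precision; the dominant pair is the RUR polynomial of size $(\OO(d^2),\sOO(d^2+d\tau))$ against a boundary polynomial of size $(d,\sOO(d^2\tau))$, which already yields $\sOB(d^6\tau)$, but the remaining pairs involving the extreme-point polynomial $(\OO(nd),\sOO(n(d+\tau)))$ and the pole polynomial $(\OO(nd),\sOO(n\tau))$ (which you omit from the accounting) supply the $n$-dependent terms. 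This is a gap of detail rather than of idea; working it out would complete a valid alternative proof.
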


\begin{proof}
  We count on the fact that $\phi$ is continuous in
  $\RR \setminus \tT_P^\RR$. Thus, for each real interval $[s,t]$
  with $[s,t]\cap \tT_P^\RR = \emptyset$, there is a parametric arc
  connecting the points $\phi(s)$ and $\phi(t)$.
Since for any (sorted) list $\tT_i$, for $i \in [\ell]$, the interval defined by the minimum and maximum value of its elements
has empty intersection with $\mathtt{T}_P^\RR$,
 then for any $s, t \in \tT_i$ there
  exists a parametric arc connecting $\phi(s)$ and
  $\phi(t)$ and it is entirely contained in $\mathcal{B}_\cC$.
  If $\mathbf{p}_\infty$ exists, then $\mathbf{p}_\infty$ is inside
  $\cB_{\cC}$. Let $t_{1,1}, t_{\ell,k_\ell}$ be the first element of the first list
   and the last element of the last list.
 There is a parametric arc
  connecting $\phi(t_{1,1})$ with $\mathbf{p}_\infty$  and
  $\mathbf{p_\infty}$ with $\phi(t_{\ell,k_\ell})$. So we add the edge
  $\{t_{1,1}, t_{\ell,\kappa_\ell}\}$ to $G$.  Then,
  every edge of $G$ is embedded to a unique smooth parametric arc
and the embedding of $G$ can
   be trivially continuously deformed~to~$\cC$.

For the complexity analysis, we know from Lem.\ref{lem:compl_points} that steps 1-2 can be performed  in wost-case bit complexity
\[
	\sOB(nd^6+nd^5\tau+d^4(n^2+n\tau)+d^3(n^2\tau+ n^3)+n^3d^2\tau) ,
\]
and in expected bit complexity
\[
	\sOB(d^6+d^5(n+\tau)+d^4(n^2+n\tau)+d^3(n^2\tau + n^3)+n^3d^2\tau) ,
  \]
  using a Las Vegas algorithm. From Lemmata \ref{lem:BB},
  \ref{lem:int}, and \ref{lem:group-params} steps~4-5 cost $\sOB(n^2(d^3\tau))$.

To perform steps 6-7 we must sort all the parameters in $\mathtt{T}\cup \mathtt{T}_P^\RR$, i.e.,
we sort $\OO(d^2+nd)$ algebraic numbers. The parameters that correspond to cusps and extreme points with respect to the $i$-th coordinate direction can be expressed as roots of $\prod_{i\in[n]}h_i(t,t)$, which is of size $(nd,n\tau)$. The poles are roots of $ \prod_{i\in[n]}q_i(t)$, which has size $(nd,n\tau)$. The parameters that correspond to multiple points are roots of $R_s$ which has size $(d^2,d\tau)$. At last, parameters in $\mathtt{T}_B$ are roots of a polynomial of size $(d,d^2\tau)$.


We can consider all these algebraic numbers together as roots of a single univariate polynomial (the product of all the corresponding polynomials).
It has degree $\OO(d^2 + nd)$ and bitsize $\sOO(d^2\tau + n \tau)$.
Hence, its separation bound is $\sOO(d^4\tau + nd^3\tau + n d^2\tau + n^2d\tau)$.
To sort the list of all the algebraic numbers, we have to perform $\OO(d^2 + nd)$ comparisons
and each costs $\sOO(d^4\tau + nd^3\tau + n d^2\tau + n^2d\tau)$.
Thus, the overall cost for sorting is  $\sOB(d^6\tau +nd^5\tau+n^2d^4\tau+n^2d^3\tau+n^3d^2\tau)$.
The overall bit complexities in the worst and expected case follow by summing the previous bounds.
\end{proof}

Following the proof of Thm.~\ref{thm:ptopo-complexity-box} we notice
that the term $d^6\tau$ in the worst case bound is due to the
introduction of the intersection points of $\mathcal{C}$ with
$\mathcal{B}_\cC$.
For visualizing the curve within $\mathcal{B}_\cC$, these points are
essential and we cannot avoid them.  However, if we are interested
only in the topology of $\cC$, i.e., the abstract graph
$G$, these points are not important any more. We sketch a procedure to
avoid them and gain a factor of $d$ in the complexity bound:

Assume that we have not computed the points on
$\mathcal{C}\cap \mathcal{B}_\cC$. We split again the sorted list
$\mathtt{T}=[\mathtt{T}_C, \mathtt{T}_M, \mathtt{T}_E]$ at the real
poles, and we add an artificial parameter at the beginning and at the
end of each sublist.  The rest of the procedure remains unaltered.

To verify the correctness of this approach, it suffices to prove
that the graph that we obtain by this procedure, is
isomorphic to the graph $G$. It is immediate to see that the latter
holds, possibly up to the dissolution of the
vertices corresponding to the first and last artificial vertices.
Adding these artificial parameters does not affect the overall
complexity, since we do not perform any algebraic operations.
Therefore, the bit complexity of the algorithm is determined by the
complexity of computing the parameters of the special points
(Lem.\ref{lem:compl_points}), and so we have the following theorem:

\begin{figure}[t!]
  \begin{center}
    \includegraphics[scale=0.29]{./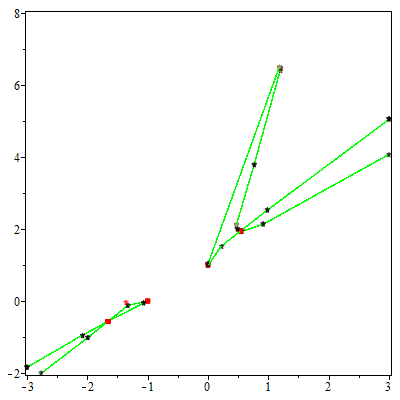}
    \includegraphics[scale=0.29]{./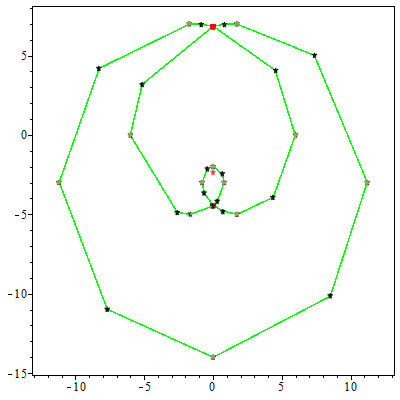}
    \caption{\emph{\textmd{The left figure is the output of \textsc{ptopo} for the  parametric curve
        $(\frac {3\,t^{2}+3\,t+1}{t^{6}-2\,t^{4}-3\,t-1},\frac {
            ( t^{4}-2\,t+2 ) t^{2}}{t^{6}-2\,t^{4}-3\,t-1} )$,
        while the right figure is the output for the curve
        $\small (\tfrac {6\,t^{8}-756\,t^{6}+3456\,t^{5}-31104\,t^{3}+61236\,{
              t}^{2}-39366}{t^{8}+36\,t^{6}+486\,t^{4}+2916\,t^{2}+6561}$,
        $\tfrac { -18( 6\,t^{6}-16\,t^{5}-126\,t^{4}+864\,t^{3}-
1134\,t^{2}-1296\,t+4374 ) t}{t^{8}+36\,t^{6}+486\,t^{4}
+2916\,t^{2}+6561})$. Multiple points are indicated by red squares and isolated points by red stars.
      }}}
    \label{fig:ex-12}
  \end{center}
\end{figure}

\begin{figure}[t!]
\vspace{-0.5cm}
  \begin{center}
    \includegraphics[scale=0.29]{./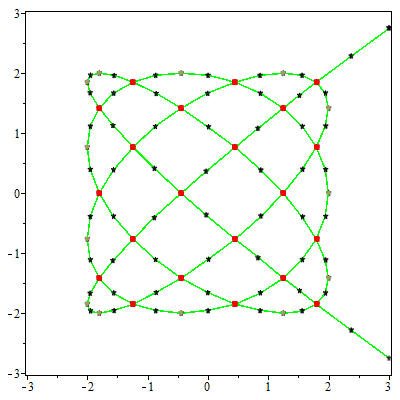}
    \includegraphics[scale=0.29]{./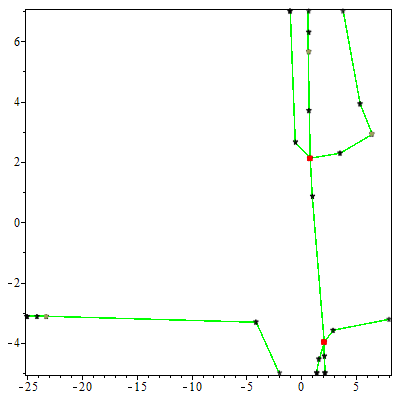}
    \vspace{-.2cm}
        \caption{\emph{\textmd{The left figure is the output of \textsc{ptopo} for the  parametric curve
        $({t}^{8}-8\,{t}^{6}+20\,{t}^{4}-16\,{t}^{2}+2,{t}^{7}-7\,{t}^{5}+14\,{
t}^{3}-7\,t)$
        while the right figure is the output for the curve
        $(\frac {37\,{t}^{3}-23\,{t}^{2}+87\,t+44}{29\,{t}^{3}+98\,{t}^{2}-23
\,t+10},\frac {-61\,{t}^{3}-8\,{t}^{2}-29\,t+95}{11\,{t}^{3}-49\,{t}
^{2}-47\,t+40})
$. Multiple points are indicated by red squares. \vspace{-1cm}
      }}}.
    \label{fig:ex-34}
  \end{center}
\end{figure}

\begin{theorem}[\ptopo and an abstract graph]
  \label{thm:ptopo-complexity-noBB}
  Consider a proper parametrization $\phi$ of curve $\cC$ involving
  polynomials of degree $d$ and bitsize $\tau$, as in Eq.~(\ref{param}),  that has no singularities at infinity.
  Alg.~\ref{alg:topol} outputs a graph $G$ that, if we embed it in
  $\RR^n$, then it is isotopic to $\cC$.
  It has worst case complexity
  \[
    \sOB(nd^6+nd^5\tau+d^4(n^2+n\tau)+d^3(n^2\tau+ n^3)+n^3d^2\tau),
  \]
  while its expected complexity is
  \[
	\sOB(d^6+d^5(n+\tau)+d^4(n^2+n\tau)+d^3(n^2\tau + n^3)+n^3d^2\tau),
  \]
  If $n = \OO(1)$, then  bounds become $\sOB(N^6)$, where
  $N = \max\{d, \tau\}$.
\end{theorem}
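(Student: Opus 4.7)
The plan is to run the variant of Algorithm \ref{alg:topol} sketched in the discussion preceding the statement: skip the computation of the characteristic box $\cB_{\cC}$ and of the boundary intersection parameters $\mathtt{T}_B$, sort only $\mathtt{T} = [\mathtt{T}_C, \mathtt{T}_M, \mathtt{T}_E]$, split it at the real poles $\mathtt{T}_P^{\RR}$ into sublists $\mathtt{T}_1, \ldots, \mathtt{T}_\ell$, and prepend, respectively append, an artificial sentinel parameter to each sublist before adding edges exactly as in the unmodified algorithm (including the closing edge between the first and last sublists when $\mathbf{p}_\infty$ exists).

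For correctness I would argue that the resulting abstract graph $G'$ is isomorphic to the graph $G$ produced by the bounded variant of Theorem \ref{thm:ptopo-complexity-box}, up to dissolving degree-two unlabeled vertices. Indeed, $G$ and $G'$ share the same labeled vertex set coming from $\mathtt{T}_C \cup \mathtt{T}_M \cup \mathtt{T}_E \cup \mathtt{T}_I$; all remaining vertices (the intersection vertices with $\partial \cB_{\cC}$ in $G$, the sentinels in $G'$) sit on the same pole-free parametric arcs, are unlabeled, and have degree one or two. Smoothing such vertices concatenates the two incident edges into one and preserves both the abstract isomorphism type and the isotopy class of any embedding of the graph along the underlying smooth parametric arcs. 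Since Theorem \ref{thm:ptopo-complexity-box} gives an embedding of $G$ isotopic to $\cC$ using those same arcs, the same embedding applied to $G'$ is also isotopic to $\cC$ in $\RR^n$.

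For the complexity, the sentinels introduce no algebraic computation, so the cost is that of computing $\mathtt{T}_P^{\RR}, \mathtt{T}_C, \mathtt{T}_M, \mathtt{T}_E, \mathtt{T}_I$ via Lemma \ref{lem:compl_points} plus the sort of $\mathtt{T} \cup \mathtt{T}_P^{\RR}$. The main obstacle, and the reason the exponent in $d$ drops from $7$ to $6$ when $n = \OO(1)$, is the sort bound: without $\mathtt{T}_B$ the parameters are roots of the single polynomial $\bigl(\prod_i q_i\bigr)\bigl(\prod_i h_i(t,t)\bigr) R_s$, whose degree is $\OO(d^2+nd)$ and whose bitsize is $\sOO(d^2+d\tau+n\tau)$; the associated separation bound and $\OO(d^2+nd)$ comparisons are absorbed by the bounds in Lemma \ref{lem:compl_points}, whereas the corresponding product in the bounded case inherited the $\sOO(d^2\tau)$ bitsize of the box-corner polynomial, which was the source of the $d^6\tau$ term. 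Summing with Lemma \ref{lem:compl_points} in the worst case and the Las Vegas case respectively yields the two announced estimates, and the $n=\OO(1)$ specialization gives $\sOB(N^6)$.
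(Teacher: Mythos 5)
Your proposal matches the paper's proof essentially step for step: both skip $\mathcal{B}_\cC$ and $\mathtt{T}_B$, insert artificial sentinel parameters at the ends of the pole-delimited sublists, argue that the resulting abstract graph agrees with the one from Theorem~\ref{thm:ptopo-complexity-box} up to dissolving the artificial end vertices, and observe that the cost is then dominated by Lemma~\ref{lem:compl_points}. You are more explicit than the paper on two points the paper leaves implicit — why the isomorphism holds up to trivial vertex dissolutions, and why the sort of $\mathtt{T}\cup\mathtt{T}_P^\RR$ (now on a product polynomial of degree $\OO(d^2+nd)$ and bitsize $\sOO(d^2+d\tau+n\tau)$ rather than $\sOO(d^2\tau+n\tau)$) is absorbed — but these are confirmations of the same argument, not a different route.
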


\begin{remark}
  If we are given a box $\mathcal{B} \subset \RR^n$ at the input,
  we slightly modify \ptopo, as follows:
  We
  discard the parameter values in
  $\mathtt{T}_C \cup \mathtt{T}_M \cup \mathtt{T}_E \cup \mathtt{T}_I$
  that correspond~to~points~not~contained~in~$\mathcal{B}$. The set of $G$'s
  vertices is constructed similarly.  To connect the vertices,
  we follow the same method with a minor modification: For~any~consecutive elements $t_1 < t_2$ in a list $\mathtt{T}_i$~with~more~than~two~elements, such that $t_1 \in \lambda(v_{i,1})$ and
  $t_2 \in \lambda(v_{i,2})$, we add the edge $\{v_{i,1}, v_{i,2}\}$
  if and only if $\phi(t_1)$, $\phi(t_2)$ are not both on the boundary
  of $\mathcal{B}$; or in other words $t_1$ and $t_2$ are not
  both in $\mathtt{T}_B$.
\end{remark}

\section{Isotopic embedding for the special cases of plane and space curves}
\label{Sembed}

In this section we elaborate on the isotopic embedding of the output graph $G$
of Thm.~\ref{thm:ptopo-complexity-noBB} for the case of plane and space parametric curves $\cC$.
We embed every edge of the abstract graph $G$ in the corresponding parametric
arc by sampling many parameter values in the associated parametric interval and
then connecting the corresponding points accordingly, in $\RR^2$ or $\RR^3$. The
larger the number of sampled parameters, the more likely it is for the
embedding to be isotopic to $\cC$. However, we might need a prohibitive large
number of points to sample; their number is related to the distance
between two branches of the curve.
We show that by introducing a few
additional points, we can replace the parametric arcs of the embedded graph with
straight line segments and count on it being isotopic to $\cC$.
Following \cite{ALCAZAR2020101830} closely,
if $X, Y \subset \RR^n$ are one-dimensional, then being isotopic implies that \textit{one of them can be
  deformed into the other without removing or introducing self-intersections}.



For plane curves, there is no need to take intermediate points on each parametric arc. We consider the embedding of the abstract graph $G$ in $\RR^2$ as a straight-line graph $\tilde{G}$, i.e., with straight lines for edges,
 whose vertices
are mapped to the corresponding points of the curve. The vertices of $\tilde{G}$ are all the singular and extreme points with respect to the $x$- and $y$- directions. Therefore, the edges of $\tilde{G}$ correspond to smooth and monotonous parametric arcs and so they cannot intersect but at their endpoints. The embedding $\tilde{G}$ is then trivially continuously deformed to $\mathcal{C}$. The above discussion summarizes as follows:

\begin{corollary}[\ptopo and isotopic embedding for plane curves]
\label{cor:embed2}
 Consider a proper parametrization $\phi$ of a curve $\cC$ in $\RR^2$ involving
  polynomials of degree $d$ and bitsize $\tau$, as in Eq.~(\ref{param}),  that has no singularities at infinity.
Alg.~\ref{alg:topol} computes an abstract graph whose straight-line embedding in $\RR^2$ is isotopic to $\cC$ in worst case complexity $\sOB(d^6+d^5\tau)$.
\end{corollary}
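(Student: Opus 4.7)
The statement consists of a complexity bound and an isotopy assertion, which I would address separately.

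The complexity bound follows by specializing Thm.~\ref{thm:ptopo-complexity-noBB} to $n=2$: each term of the worst-case expression $\sOB(nd^6+nd^5\tau+d^4(n^2+n\tau)+d^3(n^2\tau+n^3)+n^3d^2\tau)$ collapses to $\sOB(d^6+d^5\tau)$. No additional computation beyond the abstract graph $G$ produced by Alg.~\ref{alg:topol} is required to describe the straight-line embedding $\tilde G$.

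For the isotopy, I would rely on two structural properties. First, because $\mathtt{T}_E$ captures every real zero of $h_i(t,t)=\phi_i'(t)q_i(t)^2$ in $\RR\setminus\mathtt{T}_P^\RR$ (Rem.~\ref{rem:deriv} and Lem.~\ref{extreme}), and this set together with $\mathtt{T}_C$, $\mathtt{T}_M$, and $\mathtt{T}_I$ labels the vertices of $G$, between any two consecutive parameter values of a sublist $\mathtt{T}_i$ built by Alg.~\ref{alg:topol} neither $\phi_1'$ nor $\phi_2'$ vanishes and no pole intervenes. Consequently, each edge of $G$ corresponds to a smooth parametric arc $A$ that is strictly monotone in both $x$ and $y$, i.e.\ the graph of a strictly monotone function on the axis-aligned rectangle $R$ spanned by its two endpoints. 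Second, by Lem.~\ref{singular}, every multiple point of $\cC$ is a vertex of $G$, so the relative interiors of distinct arcs are pairwise disjoint.

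Combining these observations, I would construct the isotopy between $\cC$ and $\tilde G$ by realizing, simultaneously inside each rectangle $R$, the linear homotopy $H_s=(1-s)A+sL$ from the arc $A$ to its chord $L$. Each $H_s$ restricted to a single edge remains a strictly monotone graph inside $R$, hence a smooth simple arc with fixed endpoints, so the single-edge deformation poses no difficulty. The main technical obstacle of the plan is to rule out new crossings between distinct edges during the simultaneous deformation: one would argue, using the monotonicity preserved at every time $s$, the pairwise non-crossing of the original arcs, and the fact that every multiple point of $\cC$ already appears as a vertex of $G$, that the intermediate graphs remain pairwise non-crossing along the common $x$-range of any two arcs. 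Once this is settled, the resulting continuous family of embeddings of $G$ into $\RR^2$ realizes the desired planar isotopy between $\tilde G$ and $\cC$.
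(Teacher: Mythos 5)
Your proposal follows the same route as the paper: the complexity bound is the $n=2$ specialization of Thm.~\ref{thm:ptopo-complexity-noBB}, and the isotopy claim rests on the observation that, because $\mathtt{T}_E$ (together with $\mathtt{T}_C,\mathtt{T}_M$) places a vertex at every real zero of $h_1(t,t)$ and $h_2(t,t)$, each edge of $G$ is an arc that is strictly monotone in both coordinates and whose relative interior is disjoint from all other arcs. Up to that point you and the paper agree, and you are actually \emph{more} explicit than the paper in flagging the remaining step (the paper dismisses it with ``the embedding $\tilde G$ is then trivially continuously deformed to $\cC$''), which you honestly name as ``the main technical obstacle.''

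The issue is that the ingredients you list — monotonicity of each $H_s$, pairwise disjointness of the original arcs, and multiple points being vertices — do not by themselves imply that the intermediate graphs (in particular the final straight-line graph at $s=1$) stay non-crossing, so the sketch cannot be completed as written. Two pairwise disjoint, strictly $x$- and $y$-monotone arcs can have crossing chords: take a long, strongly concave arc $A_1$ from $(0,0)$ to $(c,d)$, whose chord dips well below $A_1$ over the middle of its $x$-range, and a short, nearly straight arc $A_2$ strictly below $A_1$ but above $A_1$'s chord at the left end of the shared $x$-range and below it at the right end; then the two arcs are disjoint and monotone but their chords intersect, so the linear homotopy already introduces a crossing at $s=1$. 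Closing this gap requires an extra structural ingredient that neither you nor the paper supplies — for instance, additional vertices so that within each $x$-slab between consecutive event abscissas all arcs are present and their vertical order (which is preserved slab-by-slab by linear interpolation) is controlled. As it stands your proof and the paper's are incomplete in the same place; the difference is that you identify the hole while the paper papers over it.
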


For  space curves, a straight-line embedding of $G$ is not guaranteed to be isotopic to $\cC$ for knots may be present.
To overcome this issue, we need to segment some  edges of $G$  into two or more edges. To find the extra vertices that we need to add to the graph, we follow a common approach \citep{ALCAZAR2010483,ALCAZAR2020101830,ELKAHOUI2008235,10.1145/1390768.1390778,CHENG201318} that projects the space curve to a plane one.
For a projection defined by the  map $\pi: \cC \rightarrow \RR^2$,
we write  $\tilde{\cC} = \pi(\cC)$.
We will ensure in the sequel that the following two conditions are satisfied:
\begin{enumerate}
\item[(C1)] $\cC$ has no asymptotes parallel to the direction of the projection.
\item[(C2)] The map $\pi$ is birational \cite[Def.~2.37]{SenWinPer-book-08}.
\end{enumerate}

The first condition is to ensure that the point at infinity $\mathbf{p}_\infty$ of $\cC$ exists if and only if the point at infinity of $\tilde{\cC}$ exists and is  equal to $\pi(\mathbf{p}_\infty)$ \cite[Lem.~10]{ALCAZAR2010483}; see Fig.\ref{fig:apparentB} for an instance where this condition is violated.
The second condition ensures that only a finite number of points on $\tilde{\cC}$  have more than one point as a preimage. We call these  points   \textit{apparent singularities} \citep{10.1145/1390768.1390778}; see Fig.~\ref{fig:apparentA}.
Thus, with this condition we avoid the "bad" cases where  two branches of $\cC$  project  to the same branch of $\tilde{\cC}$.

\begin{figure}
\label{fig:apparent}
\begin{center}
\begin{subfigure}[t]{0.4\textwidth}
\includegraphics[scale =0.6]{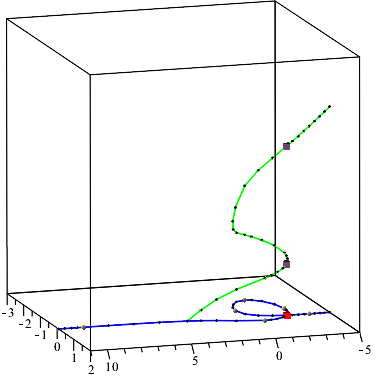}
\caption{\label{fig:apparentA}}
\end{subfigure}
\quad
\begin{subfigure}[t]{0.4\textwidth}
\includegraphics[scale =0.6]{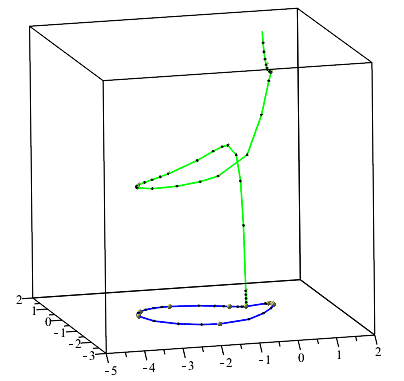}
\caption{\label{fig:apparentB}}
\end{subfigure}
\caption{\emph{(a) Graph of the curve parametrized by $\phi(t)=\big(\frac{7t^4-22t^3+55t^2+94t-87}{56t^4+62t^2-97t+73}, \frac{88t^5+4t^4-83t^3+10t^2-62t-82}{56t^4+62t^2-97t+73}, -\frac{95t^5-4t^4+83t^3-10t^2+62t+82}{56t^4+73}\big)$ (green) and its orthogonal projection on the $xy$-plane (blue). In red, a double point in the projected curve with two points in its preimage, i.e., apparent singularities (purple).
(b) Graph of the curve parametrized by $\phi(t)=\big( \frac{-7t^4+22t^3-55t^2-94t+87}{-56t^4-62t^2+97t-73}, \frac{-4t^4+83t^3-10t^2+62t+82}{-56t^4-62t^2+97t-73}, \frac{t^7-4t^4+83t^3-10t^2+62t+82}{-56t^4-73} \big)$ (green) and its orthogonal projection on the $xy$-plane (blue). The space curve does not have a point at infinity, whereas the plane curve has.  }}
\end{center}
\end{figure}

\begin{lemma}
\label{lem:projection}
 Consider a proper parametrization $\phi$ of curve $\cC$ in $\RR^3$ involving
  polynomials of degree $d$ and bitsize $\tau$, as in Eq.~(\ref{param}),  that has no singularities at infinity.
We compute a map $\pi : \cC \rightarrow \RR^2$ satisfying conditions (C1), (C2) in worst case complexity $\sOB(d^5+d^4\tau)$ and using a Las Vegas algorithm in expected complexity $\sOB(d^3+d^2\tau)$.
\end{lemma}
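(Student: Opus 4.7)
The plan is to take $\pi$ to be a generic random linear projection $\RR^3\to\RR^2$ and verify both conditions a posteriori. Sample $\vec{a},\vec{b}\in\ZZ^3$ independently and uniformly at random with entries in $\{1,\dots,Kd^2\}$ for an absolute constant $K$, and set
\[
  \pi(x,y,z)=\bigl(\langle\vec{a},(x,y,z)\rangle,\ \langle\vec{b},(x,y,z)\rangle\bigr).
\]
Then $\tilde{\phi}:=\pi\circ\phi$ parametrizes $\tilde{\cC}=\pi(\cC)$, and its two components are rational functions of degree $O(d)$ with bitsize $\sOO(\tau)$, computable in $\sOB(d\tau)$. The kernel of $\pi$ is the line spanned by $\vec{a}\times\vec{b}$, which is therefore the projection direction.

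For (C1), $\cC$ has at most $O(d)$ asymptotic directions $\vec{w}_1,\dots,\vec{w}_L$: one per real pole of $\phi$ and, if $\mathbf{p}_\infty$ does not exist, one at $t\to\pm\infty$. For any fixed $\vec{w}_\ell$, the probability that $\vec{a}\times\vec{b}$ is parallel to $\vec{w}_\ell$ is at most $O(1/d^2)$ by the Schwartz--Zippel lemma applied to the degree-$2$ polynomial $\det[\vec{a}\ \vec{b}\ \vec{w}_\ell]$ in the random entries, so a union bound bounds the failure probability by $O(1/d)$. For (C2), the set of non-birational linear projections is contained in a proper Zariski-closed subset of the space of projections, so a random choice satisfies (C2) with positive probability. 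Both conditions therefore hold simultaneously with constant probability.

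To turn this into a Las Vegas procedure we verify the two conditions. Condition (C2) is equivalent to $\tilde{\phi}$ being a proper parametrization of $\tilde{\cC}$; we apply Lem.~\ref{lem:check-proper} with $n=2$ to $\tilde{\phi}$, which has degree at most $O(d)$ and bitsize $\sOO(\tau)$, obtaining worst-case bit complexity $\sOB(d^5+d^4\tau)$ and expected bit complexity $\sOB(d^3+d^2\tau)$. Condition (C1) is verified by inspecting the asymptotic directions of $\cC$ encoded in the leading Laurent coefficients of $\phi$ at each real pole and at infinity, and testing each for non-parallelism with the rational vector $\vec{a}\times\vec{b}$; these tests reduce to sign evaluations of small-degree, small-bitsize polynomials at the real roots of $\prod_i q_i$, a standard subproblem whose cost is dominated by the properness test. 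On failure we resample, incurring $O(1)$ expected retries, which yields the claimed complexity.

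The main technical hurdle is the verification of (C1), because asymptotic directions depend on the algebraic real poles of $\phi$ and computing them explicitly would be costly. The key observation is that we do not need the directions themselves: parallelism with the small-bitsize rational vector $\vec{a}\times\vec{b}$ is encoded as the vanishing of a single univariate polynomial with small rational coefficients at the pole under consideration, and sign evaluation of such a polynomial at an algebraic number given in isolating interval representation fits comfortably within the bounds of Lem.~\ref{lem:check-proper}.
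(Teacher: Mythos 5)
Your approach is sound in spirit and genuinely different from the paper's: you use a fully random $2\times 3$ linear projection (six random entries) whereas the paper uses a structured shear $\pi(x,y,z)=(x,\, y+a z)$ with a single random integer $a$. Both are reasonable, and your Schwartz--Zippel argument for (C1) is clean. However, there are two substantive gaps.

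First, the worst-case bound $\sOB(d^5+d^4\tau)$ is not actually established by your argument. You bound the cost of a \emph{single} Las Vegas iteration (check (C1) and (C2) once) and then say ``on failure we resample, incurring $O(1)$ expected retries.'' That yields only the expected bound; a pure Las Vegas procedure has no deterministic worst-case guarantee, since in principle it could fail any bounded number of times. The paper closes this gap by explicitly deriving the polynomial (in the parameter $a$) whose zero set consists of the ``bad'' choices: it writes $R(s)=\res_t(\tilde h_1,\tilde h_2)$ as a polynomial in $\ZZ[a][s]$, forms $\sum_i c_i^2(a)$, bounds its degree by $\OO(d^2)$, and this gives both an explicit count of bad values and a deterministic way to avoid them. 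You never establish an analogue of this — your claim for (C2) that ``the set of non-birational linear projections is contained in a proper Zariski-closed subset'' lacks a degree bound, which is exactly what you would need both to calibrate the constant $K$ in $\{1,\dots,Kd^2\}$ (so the success probability is a constant, not just ``positive'') and to get a deterministic worst-case algorithm by enumerating candidates. Without that degree bound, neither the worst-case claim nor the ``constant probability'' assertion is substantiated.

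Second, your verification of (C1) is underspecified. You propose ``inspecting the asymptotic directions … encoded in the leading Laurent coefficients of $\phi$ at each real pole,'' but computing Laurent expansions at algebraic poles and testing parallelism there is not obviously ``comfortably within the bounds of Lem.~\ref{lem:check-proper}''; you assert this without an argument. The paper instead invokes a ready-made criterion \cite[Lem.~9]{ALCAZAR2010483} which reduces the (C1) test to solving a single univariate polynomial of size $(\OO(d),\sOO(\tau))$, giving a concrete $\sOB(d^3+d^2\tau)$ bound. If you want to keep your approach, you should spell out how the parallelism test is expressed as a single univariate polynomial and why its size is controlled; alternatively, cite the same criterion.
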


\begin{proof}
By \cite[Thm.~6.5, pg.~146]{Walker}, any space curve can be birationally projected to a plane curve.
We choose an integer $a$ uniformly at random from the set $\{1, \dots Kd^2\}$, where $K = \OO(1)$; we explain later in the proof  about the size of this set.
We define the mapping:
\begin{align}
	\pi :\quad & \CC^3 \rightarrow \CC^2 \nonumber\\
	 &(x,y,z) \mapsto (x, y+a z)
\end{align}

It will be useful in the sequel to regard the application of $\pi(\cdot)$ to $\phi(t)$ as being performed in two steps:
\begin{enumerate}
  \item Change of orthogonal basis of $\RR^3$: Let \(\{ \begin{pmatrix} 1 \\ 0\\ 0 \end{pmatrix}, \begin{pmatrix} 0 \\ 1\\-a \end{pmatrix}, \begin{pmatrix} 0 \\ a\\ 1 \end{pmatrix} \}\) be an orthogonal basis of $\RR^3$ and 
  $$
A=	\begin{pmatrix}
1 & 0 & 0\\
0 & 1 & a \\
0 & -a & 1\\
\end{pmatrix}.$$
In the new basis the curve is parametrized by:
$$(
\phi_1(t),
\phi_2(t) ,
\phi_3(t)
)
\cdot
	A
= 
(
\phi_1(t),
\phi_2(t) + a \phi_3(t) ,
-a\phi_2(t)+\phi_3(t)
)
$$
\item Orthogonal projection onto the first two coordinates: This yields the plane curve parametrized by $(
\phi_1(t),
\phi_2(t) + a \phi_3(t))$.
\end{enumerate}


For a given choice of $a$, we check if conditions (C1) and (C2) are satisfied:

For (C1): The direction of the projection is defined by the vector $(0,a,1)$.
So, $\cC$ has no asymptotes parallel to $(0,a,1)$ if and only if the curve with parametrization $\phi(t)\cdot A$
has no asymptotes parallel to $(0,a,1)\cdot A^{-1} = (0,0,1)$. We can check if this is the case by employing \cite[Lem.~9]{ALCAZAR2010483};
this is no more costly that solving a univariate polynomial of size $(\OO(d), \sOO(\tau))$, which costs $\sOB(d^3+d^2\tau)$ worst case.

Morover, there are $\OO(d)$ bad values of $a$ for whom (C1) does not hold: for any asymptote of $\cC$, there is a unique value of $a$ that maps it to an asymptote parallel to $(0,0,1)$ in the new basis. The asymptotes of $\cC$ are $\OO(d)$ since they occur at the poles of $\phi(t)$ and at the branches that extend to infinity. 

For (C2): Since $\phi(t)$ is proper, $\pi(\cdot)$ is birational if and only if $\pi(\phi(t))$ is also proper. We check the properness of this parametrization of $\tilde{\cC}$ in  $\sOB(d^3+d^2\tau)$ expected time, using Lem.~\ref{lem:check-proper}.

To find the values of $a$ that result in a `bad' map:
Let $\tilde{h_1}(s,t), \tilde{h_2}(s,t)$ be the polynomials of Eq.~(\ref{param}) associated to $\pi(\phi(t))$. The parametrization $\pi(\phi(t))$ is  proper if and only if $\gcd(\tilde{h_1}(s,t), \tilde{h_2}(s,t))=1$. If $\gcd(\tilde{h_1}(s,t), \tilde{h_2}(s,t)) \neq1$ then, by letting $R(s)=\res_t(\tilde{h_1}(s,t), \tilde{h_2}(s,t)) $, we have that $R(s)=0$. 
Notice that $R(s)$ is not always identically zero (e.g., for $\tilde{h_1}(s,t)=t+s,\tilde{h_2}(s,t)= t+s-1$ we get $R(s)=1$). We consider $R(s)$ as a polynomial in $\ZZ(a)[s]$:
\[
	R(s)= c_{d^2}(a)s^{d^2} +\dots +  c_{1}(a)s+ c_{0}(a) ,
\]
where $c_i \in \ZZ[a]$ is of size $(d,\sOO(d\tau))$ for $0 \le i \le d^2$ . The bad values of $a\in \RR$ satisfy then the equation:
 \[
 	 c_{d^2}^2(a)+\dots +  c_{1}^2(a)+ c_{0}^2(a) =0.
 \]
 The polynomial has degree $\OO(d^2)$ and so there are  $\OO(d^2)$ bad values to avoid. This points to the worst case complexity $\sOB(d^5+d^4\tau)$. 
 \end{proof}

Given a map $\pi$ computed through Lem.~\ref{lem:projection}, we find the
parameters that give the real multiple points of $\tilde{\cC}$ and its extreme
points with respect to the two coordinate axes. We add the corresponding
vertices to $G$ and we obtain an augmented graph, say $G'$. The straight-line
embedding of $G'$ in $\RR^2$ is isotopic to $\tilde{\cC}$ by
Cor.~\ref{cor:embed2}, possibly up to the isolated points \cite[Thm.~13 and
Lem.~15]{ALCAZAR2010483}. Then, by lifting this embedding to the corresponding
straight-line graph in $\RR^3$, we obtain a graph isotopic to $\cC$
\cite[Thm.~13 and Thm.~14]{ALCAZAR2010483}. The following theorem summarizes the
previous discussion and states its complexity:





\begin{theorem}[\ptopo and isotopic embedding for 3D space curves]
  \label{thm:ptopo-complexity-embedding3}
  Consider a proper parametrization $\phi$ of curve $\cC$ in $\RR^3$ involving
  polynomials of degree $d$ and bitsize $\tau$, as in Eq.~(\ref{param}), that has no singularities at infinity.
There is an algorithm that computes an abstract graph whose straight-line embedding in $\RR^3$ is isotopic to $\cC$  in worst case complexity $\sOB(d^6+d^5\tau)$.

\end{theorem}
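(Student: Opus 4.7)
The plan is to implement the recipe outlined just above the theorem: project $\cC$ to a plane curve $\tilde{\cC}$, compute the plane topology with the straight-line guarantee of Cor.~\ref{cor:embed2}, and then lift back to $\RR^3$ by replacing each planar vertex $\tilde{\phi}(t)$ by $\phi(t)$. Correctness upstairs will follow from the isotopy theorems of \cite{ALCAZAR2010483}, so the bulk of the work is to verify that every algorithmic substep fits within $\sOB(d^6+d^5\tau)$.

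First I would run Lem.~\ref{lem:projection} to obtain a linear map $\pi(x,y,z)=(x,y+az)$ satisfying conditions (C1) and (C2), at cost $\sOB(d^5+d^4\tau)$ in the worst case. The composition $\tilde{\phi}=\pi\circ\phi$ is then a proper parametrization of $\tilde{\cC}$ of size $(d,\sOO(\tau))$. Next I would run Alg.~\ref{alg:topol} on $\phi$ with $n=3$ to obtain the abstract graph $G$ of $\cC$; by Thm.~\ref{thm:ptopo-complexity-noBB} this costs $\sOB(d^6+d^5\tau)$. Finally I would run Lem.~\ref{lem:compl_points} on $\tilde{\phi}$ with $n=2$ to obtain the parameters generating the multiple points and the $x$- and $y$-extreme points of $\tilde{\cC}$, also in $\sOB(d^6+d^5\tau)$. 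Merging these new parameters with those already labelling the vertices of $G$ and splitting each edge of $G$ whose parameter interval contains a new parameter yields the augmented abstract graph $G'$; this is a purely combinatorial update on $\OO(d^2)$ sorted elements.

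For the isotopy, I would embed $G'$ in $\RR^2$ by mapping each vertex $t$ to $\tilde{\phi}(t)$ and every edge to the straight segment joining its endpoints. Because the vertex set of $G'$ contains all singular and coordinate-extreme parameters of $\tilde{\cC}$, Cor.~\ref{cor:embed2} gives that this plane embedding is isotopic to $\tilde{\cC}$. Now lift to $\RR^3$ by using $\phi(t)$ in place of $\tilde{\phi}(t)$ while keeping the edges straight. The orthogonal projection of the lifted graph along $(0,a,1)$ coincides with the isotopic planar embedding. Condition (C2) (birationality of the restricted map $\pi:\cC\to\tilde{\cC}$) implies that two distinct smooth branches of $\cC$ cannot collapse onto the same branch of $\tilde{\cC}$, while condition (C1) ensures that $\mathbf{p}_\infty$ is correctly tracked. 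Invoking \cite[Thm.~13, Thm.~14]{ALCAZAR2010483} then guarantees that the lifted straight-line graph is isotopic to $\cC$.

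The main obstacle is arguing that adding \emph{only} the preimages under $\pi$ of the planar singularities and coordinate-extreme points is enough to rule out spurious crossings or knot-like features appearing when we lift straight edges from $\RR^2$ to $\RR^3$. I sidestep doing this geometric argument from scratch by appealing to the isotopy theorems of \cite{ALCAZAR2010483}, which were proved precisely under the hypotheses (C1) and (C2) enforced in the projection stage. Summing the three algorithmic costs gives the announced worst-case bit complexity $\sOB(d^6+d^5\tau)$.
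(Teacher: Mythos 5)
Your proposal is correct and mirrors the paper's own argument: both compute a birational projection via Lem.~\ref{lem:projection}, invoke Alg.~\ref{alg:compute-points} / Lem.~\ref{lem:compl_points} on the projected plane curve to obtain the apparent singularities and coordinate-extreme parameters (an $\OO(d^2)$-size set), augment the vertex list of the graph accordingly, and then conclude isotopy of the straight-line lift by combining Cor.~\ref{cor:embed2} with \cite[Thm.~13, Thm.~14]{ALCAZAR2010483}, with the total cost dominated by the $\sOB(d^6+d^5\tau)$ steps. The only omission is the paper's explicit caveat that the planar isotopy holds ``possibly up to the isolated points'' (cf.~\cite[Lem.~15]{ALCAZAR2010483}), a minor bookkeeping point that does not change the argument.
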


\begin{proof}
Given a projection map $\pi(\cdot)$ such that  (C1) and (C2) hold, correctness follows from the previous discussion.
From Lem.~\ref{lem:projection}, we find such a map in expected complexity $\sOB(d^3+d^2\tau)$ and $\sOB(d^5+d^4\tau)$.
Using Alg.~\ref{alg:compute-points} for $\pi(\phi(t))$ we find the parameters of the extreme (in the coordinate directions) and real multiple points of $\tilde{\cC}$, say $\tilde{\mathtt{T}}$, in $\sOB(d^6+d^5\tau)$ (Lem.~\ref{lem:compl_points}).
Then, we employ Alg.~\ref{alg:topol} for $\phi(t)$, by augmenting the list of parameters that are treated with $\tilde{\mathtt{T}}$.
The last step dominates the complexity and is not affected by the addition of extra parameters to the list since $|\tilde{\mathtt{T}}|= \OO(d^2)$. The complexity result in Thm.~\ref{thm:ptopo-complexity-noBB} allows us to conclude.
\end{proof}

\begin{remark}
The complexity in Thm.~\ref{thm:ptopo-complexity-embedding3} is no higher than the complexity in Thm.~\ref{thm:ptopo-complexity-noBB}, i.e., no costlier than the computation of the topology itself.
\end{remark}



\begin{figure}

\begin{center}
\begin{subfigure}[t]{0.4\textwidth}
\begin{center}
  \includegraphics[scale=0.5]{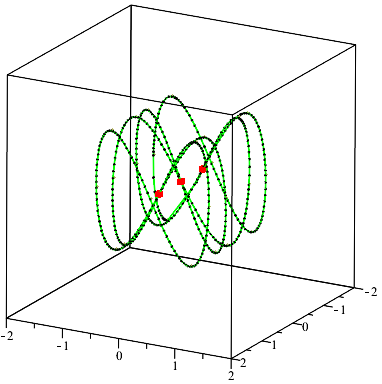}
    \caption{\label{fig:liscurve2}}
    \end{center}
\end{subfigure}
\quad
\begin{subfigure}[t]{0.4\textwidth}
\begin{center}
 \includegraphics[scale=0.5]{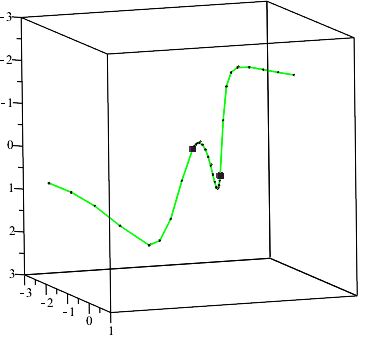}

     \caption{\label{fig:curve3d}}
     \end{center}
\end{subfigure}
\caption{\emph{Output of \ptopo for (a) a Lissajous curve parametrized by $\big( \frac{(-t^2-1)}{t^2+1}, \frac{-8t(t^2-1)(t^4-6t^2+1)}{t^8+4t^6+6t^4+4t^2+1}, \frac{-(16(t^4-6t^2+1))(t^2-1)t(t^8-28t^6+70t^4-28t^2+1)}{(t^16+8t^14+28t^12+56t^10+70t^8+56t^6+28t^4+8t^2+1)}\big)$. The multiple points are indicated by the red squares. (b) a space curve parametrized by $\big( \frac{-2t^2}{t^2+1}, \frac{-2t(3t^5-19t^4+8t^3-9t^2-t+2}{(t^2+1)(10t^4-6t^3+3t^2+1)}, t\big)$. The apparent singularities are indicated by the black squares.}}
\label{fig:ex3d1}
\end{center}
\end{figure}

\section{Multiplicities and characterization of singular points}
\label{Smult}





We say that a singularity is ordinary if it is at the intersection of smooth
branches only and the tangents to all branches are distinct \cite[p.54]{Walker}.
In all the other cases, we call it a \textit{non-ordinary} singularity.
The \textit{character} of a singular point is either ordinary or non-ordinary.
To determine the multiplicity and the character of each real singular point in $\CC$ we follow the method presented in the series of papers \cite{PEREZDIAZ2007835,Blasco2017ResultantsAS,BlaPer-rspc-19}.
They provide a complete characterization using resultant computations  that applies
to curves of any dimension. In the sequel, we present the basic ingredients of their approach
and we estimate the bit complexity of the algorithm.

Let $n=2$ and $H_i(s,t) = p_i(s)q_i(t)-p_i(t)q_i(s)$, for $i\in [2]$.
Consider  a point $\mathbf{p}\in \mathcal{C}$ given by the parameter values $\{s_1, \dots, s_k\}$, $k \ge 1$;
that is $\mathbf{p} = \phi(s_i)$, for all $i \in [k]$.
The \textit{fiber function} at $\mathbf{p}$ \cite[Def.~2]{BlaPer-rspc-19} is 
\[
	F_\mathbf{p}(t) = \gcd(H_1(s_j, t), H_2(s_j, t)) ,
\]
for any $j\in [k]$. It is a univariate polynomial, the real roots of which are the parameter values
that correspond to the point $\mathbf{p}$.
When the parametrization of the curve is proper, for any point $\mathbf{p}$ on $\mathcal{C}$ other than $\mathbf{p}_\infty$ it holds that $\deg(F_\mathbf{p}(t))= \text{mult}_\mathbf{p}(\mathcal{C})$ \cite[Cor.~1]{BlaPer-rspc-19}.
Also, when the parameters in $\{s_1, \dots, s_k\}$ are all real, $k$ equals the \textit{number of real branches} that go through $\mathbf{p}$.

To classify ordinary and non-ordinary singularities we proceed as follows:
For a point $\mathbf{p}\in \mathcal{C}$ the \textit{delta invariant}
$\delta_\mathbf{p}$ is a nonnegative integer that measures the number of double
points concentrated around $\mathbf{p}$. We can compute it by taking into account
the multiplicities of $\mathbf{p}$ and  its neighboring
singularities. \cite{BlaPer-rspc-19}  consider three different types of
non-ordinary singularities and use the delta invariant to distinguish them.
In particular:
\begin{enumerate}
  \item If $k = \text{mult}_\mathbf{p}(\cC)$ and
  $2 \delta_{\mathbf{p}} = \text{mult}_\mathbf{p}(\cC)  (\text{mult}_\mathbf{p}(\cC)  -1) $,
  then $\mathbf{p}$ is an ordinary singularity.
  \item If $k < \text{mult}_\mathbf{p}(\cC)$ and
  $2 \delta_{\mathbf{p}} =\text{mult}_\mathbf{p}(\cC)  (\text{mult}_\mathbf{p}(\cC)  -1)$,
  then $\mathbf{p}$ is a type $I$ non-ordinary singularity.
  \item If $k = \text{mult}_\mathbf{p}(\cC)$ and
  $2 \delta_{\mathbf{p}} >\text{mult}_\mathbf{p}(\cC)  (\text{mult}_\mathbf{p}(\cC)  -1)$,
  then $\mathbf{p}$ is a type $II$ non-ordinary singularity.
  \item If $k < \text{mult}_\mathbf{p}(\cC)$ and
  $2 \delta_{\mathbf{p}} > \text{mult}_\mathbf{p}(\cC)  (\text{mult}_\mathbf{p}(\cC)  -1)$,
  then $\mathbf{p}$ is a type $III$ non-ordinary singularity.
\end{enumerate}

\cite{BlaPer-rspc-19}
compute the delta invariant, $\delta_\mathbf{p}$, using the formula
\begin{equation}
\label{eq:delta1}
	\delta_{\mathbf{p}} = \frac{1}{2}\sum_{j=1}^k \sum_{t\text{ s.t. } h_1(s_j,t)=h_2(s_j,t)=0} \text{Int}_{h_1,h_2}(s_j,t), 
\end{equation}
where $\text{Int}_{h_1,h_2}(\alpha, \beta)$ is the \textit{intersection
  multiplicity} of the coprime polynomials $h_1(s,t)$ and $h_2(s,t)$ at a point
$(\alpha,\beta)$. Using a  well known result,  e.g., \cite[1.6]{FulInter} as
stated in \cite[Prop.~5]{BuseIntersection}, we can compute the intersection multiplicities
using resultant computations. Let $R(s)=\res_t(h_1(s,t),h_2(s,t))$.
For a root $\alpha$ of $R(s)$, its multiplicity $\mu(a)$ is equal to the sum of
the intersection multiplicities of solutions of the system of $\{h_1,h_2\}$ in
the form $(\alpha, t)$, and so
\begin{equation}
\label{eq:intersection}
	\mu(\alpha) =  \sum_{t\text{ s.t. } h_1(\alpha,t)=h_2(\alpha,t)=0} \text{Int}_{h_1,h_2}(\alpha,t) .
\end{equation}
Therefore, from Eq.~(\ref{eq:delta1}) and Eq.~(\ref{eq:intersection}), we conclude that:
\begin{equation}
\label{eq:delta2}
	\delta_{\mathbf{p}} = \frac{1}{2}\sum_{j=1}^k \mu(s_j) .
\end{equation}




For space curves, we birationally project $\cC$ to a plane curve \cite[Thm.~6.5, pg.~146]{Walker}. 
The multiplicities of the singular points of $\cC$ and their delta invariant are preserved under the birational map realizing the projection \cite[Prop.~1 and Cor.~4]{BlaPer-rspc-19}.
The pseudo code of the algorithm appears in  Alg.~\ref{alg:multiplicities}.

  \begin{algorithm2e}[!h]
   \SetKw{RET}{{\sc return}}
   \KwIn{Proper parametrization
    $\phi \in \ZZ(t)^n$ without singularity at infinity, as in Eq.~(\ref{param})}

   \KwOut{Multiplicities and characterization of points}

   \BlankLine












 %
 
  $\mathcal{S} \gets$ \FuncSty{Special\_Points}($\phi$)

 \If{n=2}{
   Compute polynomials $H_1(s,t), H_2(s,t)$ for $\phi$
 }
 \Else{
 	\Repeat{$\tilde{\phi}(t) $ is proper}{
   Choose integers $a_3, \dots, a_n$ at random from $\{1,\dots, Kd^n\}$, where $K = \OO(1)$.

   $\tilde{\phi}(t) \gets (\phi_1(t), \phi_2(t) + a_3 \phi_3(t) + \dots+ a_n\phi_n(t))$}

  Compute polynomials $H_1(s,t), H_2(s,t)$ for $\tilde{\phi}$
 }

 \For{$\mathbf{p} \in \mathcal{S}$}{

	$ M_\mathbf{p} \gets \left\{ s\in \RR : \phi(s)=\mathbf{p}\right\}$ \tcp{parameters that give the same point $\mathbf{p}$}

	$k \gets |M_\mathbf{p}|$ 	\tcp{number of real branches that go through $\mathbf{p}$}

	Take $s_0 \in M_\mathbf{p}$


	$\text{mult}_\mathbf{p}(\mathcal{C}) \gets \deg(\gcd(H_1(s_0, t),H_2(s_0, t)))$ 	\tcp{multiplicity}

	\tcp{compute delta invariant}

	$\delta_\mathbf{p} \gets 0$

	\For{$s_j\in  M_\mathbf{p}$}{

		$\mu(s_j) \gets$ multiplicity of $s_j$ as a root of $\res_t(H_1(s,t)/(s-t), H_2(s,t)/(s-t))$ 

		$\delta_\mathbf{p} \gets \delta_\mathbf{p} + \mu(s_j)$

	}
	$\delta_\mathbf{p} \gets \delta_\mathbf{p} / 2$

	    \If{$k = \text{mult}_\mathbf{p}(\mathcal{C})$ and $2\delta_\mathbf{p} =\text{mult}_\mathbf{p}(\mathcal{C})(\text{mult}_\mathbf{p}(\mathcal{C})-1)$ }{
                 \Return  $\mathbf{p}$ is an ordinary singularity
         }
          \ElseIf{$k < \text{mult}_\mathbf{p}(\mathcal{C})$ and $2\delta_\mathbf{p} =  \text{mult}_\mathbf{p}(\mathcal{C})( \text{mult}_\mathbf{p}(\mathcal{C})-1)$}{
                  \Return  $\mathbf{p}$ is a type $I$ non-ordinary singularity
         }
          \ElseIf{$k = \text{mult}_\mathbf{p}(\mathcal{C})$ and $2\delta_\mathbf{p} > \text{mult}_\mathbf{p}(\mathcal{C})( \text{mult}_\mathbf{p}(\mathcal{C})-1)$}{
          	\Return  $\mathbf{p}$ is a type $II$ non-ordinary singularity
          }
          \ElseIf{$k < \text{mult}_\mathbf{p}(\mathcal{C})$ and $2\delta_\mathbf{p} >  \text{mult}_\mathbf{p}(\mathcal{C})( \text{mult}_\mathbf{p}(\mathcal{C})-1)$}{
                 \Return  $\mathbf{p}$ is a type $III$ non-ordinary singularity

          }

}

    \caption{\FuncSty{CharacterizeSingularPoints}($\phi, \mathcal{S}$)}
    \label{alg:multiplicities}
  \end{algorithm2e}

\begin{theorem}
\label{thm:multiplicities}
Let $\mathcal{C}$ be a curve with a proper parametrization $\phi(t)$ as in Eq.~(\ref{param}), that has no singularities at infinity. Alg.~\ref{alg:multiplicities} computes the singular points of $\cC$, their multiplicity and character (ordinary/non-ordinary) in
\[
	\sOB(d^6+d^5\tau)
\]
worst-case complexity when $n=2$ and for $n>2$ in expected complexity
\[
  \sOB(d^6+d^5(n+\tau)+d^4(n^2+n\tau)+d^3(n^2\tau + n^3)+n^3d^2\tau).
\]

\end{theorem}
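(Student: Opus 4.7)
For $n=2$, correctness is an immediate consequence of the identification $\text{mult}_\mathbf{p}(\cC) = \deg(F_\mathbf{p}(t))$ of \cite[Cor.~1]{BlaPer-rspc-19}, together with the delta invariant formula obtained by combining Eq.~(\ref{eq:delta1}) with Eq.~(\ref{eq:intersection}), and with the four-way classification of singular points recalled in the text. For $n>2$, the plan is to birationally project $\cC$ onto a plane curve $\tilde{\cC}$: a generic linear combination $\tilde{\phi}(t) = (\phi_1(t), \phi_2(t) + a_3\phi_3(t) + \cdots + a_n\phi_n(t))$ with integers $a_i$ drawn at random from a sufficiently large set yields a proper, hence birational, parametrization of $\tilde{\cC}$ with probability bounded away from zero, which can be verified a posteriori via Lem.~\ref{lem:check-proper}. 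Since birational maps preserve both multiplicities and delta invariants of singular points \cite[Prop.~1 and Cor.~4]{BlaPer-rspc-19}, applying the plane-curve procedure to $\tilde{\phi}$ returns the correct multiplicity and character for each singular point of $\cC$.

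\textbf{Complexity for $n=2$.} The call to \texttt{Special\_Points}$(\phi)$ costs $\sOB(d^6 + d^5\tau)$ in the worst case (Lem.~\ref{lem:compl_points} with $n=2$) and already produces, via the RUR of Thm.~\ref{thm-biv}, isolating representations for the $\OO(d^2)$ parameters of the singular points. The polynomials $H_1, H_2 \in \ZZ[s,t]$ have bi-degree $(d,d)$ and bitsize $\OO(\tau)$, and are computable in $\OB(d^2\tau)$. I would then precompute two objects once: the resultant $R(s) = \res_t(h_1, h_2)$, of size $(\OO(d^2),\sOO(d\tau))$, together with its square-free decomposition, from which each $\mu(s_j)$ is recovered by locating the algebraic parameter $s_j$ among the square-free factors; and the sequence of principal subresultant coefficients $\mathrm{sr}_0(s), \dots, \mathrm{sr}_d(s) \in \ZZ[s]$ of $H_1, H_2$ viewed as polynomials in $t$, each of size $(\OO(d^2), \sOO(d\tau))$. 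Then $\text{mult}_\mathbf{p}(\cC)$ equals the smallest $k$ with $\mathrm{sr}_k(s_0) \neq 0$, determined by sign evaluation at the algebraic number $s_0$. The polynomial-level computations fit in $\sOB(d^5 + d^4\tau)$, and the $\OO(d^2)$ sign evaluations at algebraic numbers of degree $\OO(d^2)$ and bitsize $\sOO(d\tau)$ fit in $\sOB(d^6 + d^5\tau)$; the final grouping of parameters and case distinction are purely combinatorial. Summing yields the claimed $\sOB(d^6 + d^5\tau)$ bound.

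\textbf{Complexity for $n>2$ and main obstacle.} The birational projection (extending Lem.~\ref{lem:projection} to arbitrary $n$) needs $\sOB(d^3 + d^2\tau)$ per random attempt, and $\OO(1)$ attempts suffice in expectation. The call to \texttt{Special\_Points}$(\phi)$ on the $n$-dimensional curve contributes $\sOB(d^6 + d^5(n+\tau) + d^4(n^2 + n\tau) + d^3(n^2\tau+n^3) + n^3d^2\tau)$ (Lem.~\ref{lem:compl_points}), while the subsequent multiplicity and character analysis on $\tilde{\cC}$ costs $\sOB(d^6 + d^5\tau)$ by the $n=2$ case. Adding these dominant terms gives the stated expected bound. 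The main obstacle is the efficient computation of $\text{mult}_\mathbf{p}$ when $s_0$ is an algebraic number of degree up to $\OO(d^2)$: a naive gcd of $H_1(s_0,t), H_2(s_0,t)$ inside $\QQ(s_0)[t]$ would blow up the complexity well beyond the target. The subresultant-evaluation strategy sidesteps this by keeping every heavy polynomial computation over $\ZZ$ and reducing each per-point query to the already budgeted cost of sign evaluation of a univariate polynomial at an algebraic root.
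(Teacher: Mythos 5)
Your correctness argument matches the paper's: both use $\deg(F_{\mathbf{p}}) = \mathrm{mult}_{\mathbf{p}}(\mathcal{C})$, the delta-invariant formula obtained by combining Eq.~(\ref{eq:delta1}) with Eq.~(\ref{eq:intersection}), the four-way classification, and a random linear birational projection for $n>2$ verified a posteriori via Lem.~\ref{lem:check-proper}. The complexity analysis, however, diverges at the key step. The paper computes a \emph{triangular decomposition} $\{(A_i(s), B_i(s,t))\}_{i\in\mathcal{I}}$ of $\{h_1 = h_2 = 0\}$ (citing [Prop.~16, Bouz15], cost $\sOB(d^6 + d^5\tau)$), where by construction $\deg_t \gcd(h_1(\alpha,t), h_2(\alpha,t)) = i$ for every root $\alpha$ of $A_i$, and then locates each $s_0$ among the roots of the $A_i$ by refining isolating intervals; since the $A_i$ are all factors of $R_s = \res_t(h_1,h_2)$, that refinement amortizes to the cost of isolating the roots of $R_s$, which is $\sOB(d^6 + d^5\tau)$.

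Your replacement — evaluate the principal subresultant coefficients $\mathrm{sr}_k(s)$ of $H_1, H_2$ at $s_0$ and take the smallest $k$ with $\mathrm{sr}_k(s_0) \neq 0$ — is essentially what the triangular decomposition is built from, but stated naively it has two gaps. First, the specialization rule $\mathrm{sr}_k(H_1,H_2)(s_0) = \mathrm{sr}_k(H_1(s_0,\cdot), H_2(s_0,\cdot))$, hence the identity $\deg\gcd = $ smallest nonvanishing index, can fail when the leading coefficient of $H_1$ or $H_2$ in $t$ vanishes at $s_0$; this degenerate case is exactly what the triangular decomposition of [Bouz15] is engineered to handle, and you do not address it. Second, your count ``$\OO(d^2)$ sign evaluations'' is plausible (one can argue $\sum_{\mathbf{p}} m_{\mathbf{p}}^2 = \OO(d^2)$), but each evaluation is of a \emph{different} polynomial $\mathrm{sr}_k$ at a \emph{different} root $s_j$, so the usual amortization (evaluate one polynomial at all roots of another in $\sOB(d^6+d^5\tau)$, cf.\ [Prop.~6, STRZEBONSKI201931]) does not apply directly; a naive per-evaluation cost of $\sOB(d^5\tau)$ times $\OO(d^2)$ evaluations would overshoot. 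The paper avoids this entirely by reducing the per-point query to root isolation of factors of $R_s$. To close the gap you would either have to (i) invoke the triangular decomposition explicitly as the paper does, or (ii) carefully justify both the specialization of subresultants (e.g.\ via the gap structure / defective subresultants) and an amortized sign-evaluation bound. Your $n>2$ reduction and the remaining bookkeeping are the same as the paper's.
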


\begin{proof}
We compute the parameters that give the singular points of $\cC$ using Alg.~\ref{alg:compute-points} in $ \sOB(d^6+d^5(n+\tau)+d^4(n^2+n\tau)+d^3(n^2\tau + n^3)+n^3d^2\tau)$ when $n>2$, which becomes worst-case when $n=2$ (Lem.~\ref{lem:compl_points}).

When $n>2$, lines 5-8 compute a birational projection of $\cC$ to a plane curve parametrized by $\tilde{\phi}(t)$ (note that the projection is birational if and only if $\tilde{\phi}(t)$ is proper since $\phi(t)$ is proper).
The expected complexity of this is $\sOB(d^3+nd^2\tau)$ by slightly adapting the proof of  Lem.~\ref{lem:projection}.

 We can group the parameter values that give singular points using Lem.~\ref{lem:group-params} in $\OO(d^2+nd)$ arithmetic operations.
We have that $\gcd(H_1(s,t), H_2(s,t))= s-t$.
For $h_1(s,t) = H_1(s,t)/(s-t)$, $h_2(s,t) = H_2(s,t)/(s-t)$, we compute a triangular decomposition of the system $\{h_1(s,t)=h_2(s,t)=0\}$ which consists of the systems $\{(A_i(s), B_i(s,t))\}_{i\in \mathcal{I}}$.
For any root $\alpha$ of $A_i$, $B_i(\alpha, t)$ is of degree $i$ and equals $\gcd(h_1(\alpha, t), h_2(\alpha, t))$ (up to a constant
factor).
By \cite[Prop.~16]{Bouz15}, the triangular decomposition is computed in $\sOB(d^6+d^5\tau)$ worst case\footnote{In a Las Vegas setting this computation could be reduced to $\sOB(d^4+d^3\tau)$, but since it does not affect the total complexity we chose not to expand onto this.}.

 For a singular point $\mathbf{p}$ and $s_0 \in M_\mathbf{p}$:
To compute the degree of $\gcd(H_1(s_0), H_2(s_0))$, since $s_0$ is a root of $\res_t(h_1(s,t), h_2(s,t))$, it suffices to find for which $i\in \mathcal{I}$ $A_i(s_0)=0$. The latter is not immediate when $s_0$ is given in isolated interval representation as a root of  $\res_t(h_1(s,t), h_2(s,t))$. However, we can isolate the roots of all $A_i$ by taking care that the isolating invervals are small enough so that the isolating interval of $s_0$ intersects only one of them. Because they are roots of the same polynomial this cannot exceed the complexity of isolating the roots of $\res_t(h_1(s,t), h_2(s,t))$.
This is can be done in $\sOB(d^6+d^5\tau)$.
In the same bit complexity, we have the multiplicity $\mu(s_j)$ of $s_j$ as a root of $\res_t(h_1,h_2)$.
  \end{proof}





\section{Implementation and examples}

\label{sec:implementation}

\begin{figure}

\begin{center}
\begin{subfigure}[t]{0.4\textwidth}
\includegraphics[scale =0.6]{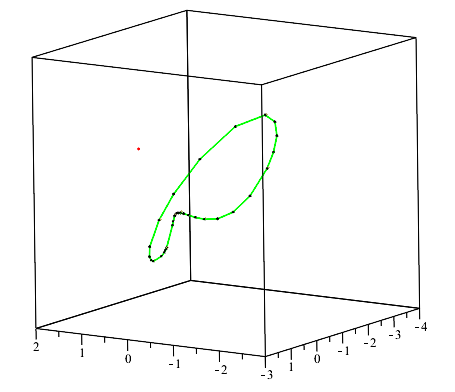}
\caption{\label{fig:ex3d1}}
\end{subfigure}
\quad
\begin{subfigure}[t]{0.4\textwidth}
\includegraphics[scale =0.49]{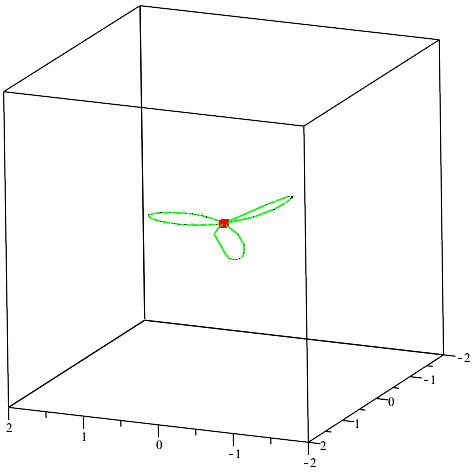}
\caption{\label{fig:ex3d2}}
\end{subfigure}

\caption{\emph{Output of \ptopo for a curve parametrized by (a) $\big(\frac{-7t^4+22t^3-55t^2-94t+87}{-56t^4-62t^2+97t-73}, \frac{-4t^4*83t^3-10t^2-62t-82}{-56t^4-62t^2+97t-73}, \frac{-4t^4*83t^3-10t^2-62t-82}{-56t^4-62t^2+97t-73} \big)$
(b) $\big( \frac{-3t^2+1}{(t^2+1)^2}, \frac{(-3t^2+1)t}{(t^2+1)^2}, \frac{(-3t^2+1)t^3}{(t^2+1)^4\big)}$. The red star in (a) corresponds to an isolated point, whereas the red square in (b) corresponds to a multiple point. Examples are taken from \cite{ALCAZAR2010483}.}}

\label{fig:ex3d2}
\end{center}
\end{figure}

\ptopo is implemented in
\maple\footnote{\url{https://gitlab.inria.fr/ckatsama/ptopo}}
 for plane and 3D curves.
A typical output appears in Figures~\ref{fig:ex-12}, \ref{fig:ex-34}, \ref{fig:ex3d1},
and \ref{fig:ex3d2}. Besides the visualization the software computes all the points of interest of curve (singular, extreme, etc) in isolating interval representation
as well as in suitable floating point approximations.

We build upon the real root isolation routines of \maple's
\texttt{RootFinding} library and the
\textsc{slv}
package \citep{det-jsc-2009},
to use a certified implementation of general purpose exact computations
with one and two real algebraic numbers, like comparison and sign evaluations,
as well as exact (bivariate) polynomial solving.

\ptopo computes the topology and visualizes parametric curves in two
(and in the near future in three) dimensions.
For a given parametric representation of a curve, \ptopo computes
the special points on the curve, the characteristic box, the
corresponding graph, and then it visualizes the curve (inside the box).
The computation, in all examples from literature we tested,
takes less than a second in a MacBook laptop, running \maple 2020.
We refer the reader
to the website of the software and to \citep{krtz-ptopo-soft} for further details.

\freefootnote{{
\begin{wrapfigure}{l}{0.05\textwidth}
\begin{center}
\vspace{-0.85cm}
\includegraphics[scale=0.04]{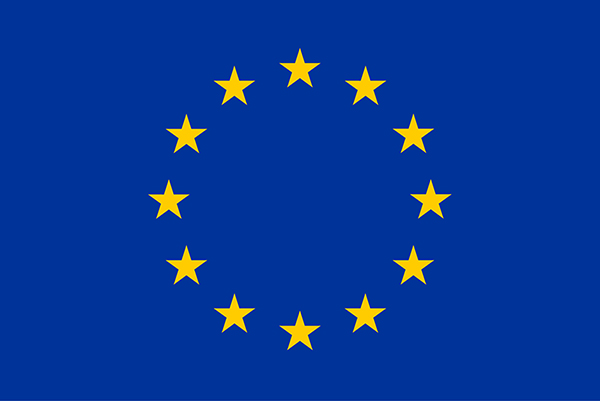}
\end{center}
\end{wrapfigure}
{\footnotesize This project has received funding from the European Union's Horizon 2020 research and innovation program under the
Marie Sk\l odowska-Curie grant agreement No 754362.}
}}

\paragraph*{Acknowledgments}
We acknowledge partial support by the
Fondation Math\'ematique Jacques Hadamard through the PGMO grand ALMA,
by ANR JCJC GALOP (ANR-17-CE40-0009), by the  PHC GRAPE,
by the projects 118F321 under the program 2509, 118C240 under the program 2232,
and 117F100 under the program 3501 of the Scientific and Technological Research Council of Turkey.

\bibliographystyle{elsarticle-harv}
\bibliography{param}

\begin{thebibliography}{59}
\expandafter\ifx\csname natexlab\endcsname\relax\def\natexlab#1{#1}\fi
\expandafter\ifx\csname url\endcsname\relax
  \def\url#1{\texttt{#1}}\fi
\expandafter\ifx\csname urlprefix\endcsname\relax\def\urlprefix{URL }\fi

\bibitem[{Abhyankar and Bajaj(1989)}]{10.1145/77269.77273}
Abhyankar, S.~S., Bajaj, C.~J., 1989. Automatic parameterization of rational
  curves and surfaces {IV}: Algebraic space curves. ACM Trans. Graph. 8~(4),
  325--334.
\newline\urlprefix\url{https://doi.org/10.1145/77269.77273}

\bibitem[{Alberti et~al.(2008)Alberti, Mourrain, and Wintz}]{ALBERTI2008631}
Alberti, L., Mourrain, B., Wintz, J., 2008. Topology and arrangement
  computation of semi-algebraic planar curves. CAGD 25~(8), 631 -- 651.
\newline\urlprefix\url{http://www.sciencedirect.com/science/article/pii/S0167839608000460}

\bibitem[{Alc\'azar et~al.(2020)Alc\'azar, Caravantes, D\'iaz, and
  Tsigaridas}]{ALCAZAR2020101830}
Alc\'azar, J.~G., Caravantes, J., D\'iaz, G.~M., Tsigaridas, E., 2020.
  Computing the topology of a plane or space hyperelliptic curve. Computer
  Aided Geometric Design 78, 101830.
\newline\urlprefix\url{http://www.sciencedirect.com/science/article/pii/S0167839620300170}

\bibitem[{Alc\'azar and D\'iaz-Toca(2010)}]{ALCAZAR2010483}
Alc\'azar, J.~G., D\'iaz-Toca, G.~M., 2010. Topology of 2{D} and 3{D} rational
  curves. CAGD 27~(7), 483 -- 502.
\newline\urlprefix\url{http://www.sciencedirect.com/science/article/pii/S0167839610000737}

\bibitem[{Basu et~al.(2003)Basu, Pollack, and {M-F.}Roy}]{BPR03}
Basu, S., Pollack, R., {M-F.}Roy, 2003. Algorithms in Real Algebraic Geometry.
  Vol.~10 of Algorithms and Computation in Mathematics. Springer-Verlag.

\bibitem[{Bernardi et~al.(2016)Bernardi, Gimigliano, and Id\`a}]{bernardi}
Bernardi, A., Gimigliano, A., Id\`a, M., 09 2016. Singularities of plane
  rational curves via projections. J. Symb. Comput.

\bibitem[{Blasco and P{\'e}rez-D{\'i}az(2017)}]{Blasco2017ResultantsAS}
Blasco, A., P{\'e}rez-D{\'i}az, S., 2017. Resultants and singularities of
  parametric curves. arXiv.
\newline\urlprefix\url{https://arxiv.org/abs/1706.08430s}

\bibitem[{Blasco and P{\'e}rez-D{\'\i}az(2019)}]{BlaPer-rspc-19}
Blasco, A., P{\'e}rez-D{\'\i}az, S., 2019. An in depth analysis, via
  resultants, of the singularities of a parametric curve. CAGD 68, 22--47.

\bibitem[{Boissonnat and Teillaud(2006)}]{bt-ecgcs-06}
Boissonnat, J.-D., Teillaud, M. (Eds.), 2006. Effective Computational Geometry
  for Curves and Surfaces. Springer-Verlag, Mathematics and Visualization.

\bibitem[{Bouzidi et~al.(2015{\natexlab{a}})Bouzidi, Lazard, Moroz, Pouget,
  Rouillier, and Sagraloff}]{Bouz15}
Bouzidi, Y., Lazard, S., Moroz, G., Pouget, M., Rouillier, F., Sagraloff, M.,
  02 2015{\natexlab{a}}. Improved algorithms for solving bivariate systems via
  rational univariate representations.

\bibitem[{Bouzidi et~al.(2016)Bouzidi, Lazard, Moroz, Pouget, Rouillier, and
  Sagraloff}]{bouzidi:hal-01342211}
Bouzidi, Y., Lazard, S., Moroz, G., Pouget, M., Rouillier, F., Sagraloff, M.,
  2016. {Solving bivariate systems using Rational Univariate Representations}.
  {J. Complexity} 37, 34--75.
\newline\urlprefix\url{https://hal.inria.fr/hal-01342211}

\bibitem[{Bouzidi et~al.(2013)Bouzidi, Lazard, Pouget, and
  Rouillier}]{Bouzidi:2013:RUR:2465506.2465519}
Bouzidi, Y., Lazard, S., Pouget, M., Rouillier, F., 2013. Rational univariate
  representations of bivariate systems and applications. In: Proc. 38th Int'l
  Symp. on Symbolic and Algebraic Computation. ISSAC '13. ACM, NY, USA, pp.
  109--116.
\newline\urlprefix\url{http://doi.acm.org/10.1145/2465506.2465519}

\bibitem[{Bouzidi et~al.(2015{\natexlab{b}})Bouzidi, Lazard, Pouget, and
  Rouillier}]{blpr-jsc-2015}
Bouzidi, Y., Lazard, S., Pouget, M., Rouillier, F., 2015{\natexlab{b}}.
  Separating linear forms and rational univariate representations of bivariate
  systems. {J. Symb. Comput.}, 84--119.
\newline\urlprefix\url{https://hal.inria.fr/hal-00977671}

\bibitem[{Bus{\'e}(2014)}]{buse:hal-00847802}
Bus{\'e}, L., 2014. {Implicit matrix representations of rational B{\'e}zier
  curves and surfaces}. {Computer-Aided Design} 46, 14--24.
\newline\urlprefix\url{https://hal.inria.fr/hal-00847802}

\bibitem[{Bus{\'e} and D'Andrea(2012)}]{BusDAd-sing-plane-12}
Bus{\'e}, L., D'Andrea, C., 2012. Singular factors of rational plane curves. J.
  Algebra 357, 322--346.

\bibitem[{Bus{\'e} et~al.(2005)Bus{\'e}, Khalil, and
  Mourrain}]{BuseIntersection}
Bus{\'e}, L., Khalil, H., Mourrain, B., 2005. Resultant-based methods for plane
  curves intersection problems. In: Ganzha, V.~G., Mayr, E.~W., Vorozhtsov,
  E.~V. (Eds.), Computer Algebra in Scientific Computing. Springer Berlin
  Heidelberg, Berlin, Heidelberg, pp. 75--92.

\bibitem[{Bus{\'e} et~al.(2019)Bus{\'e}, Laroche, and
  Y{\i}ld{\i}r{\i}m}]{BLY0-spm-2019}
Bus{\'e}, L., Laroche, C., Y{\i}ld{\i}r{\i}m, F., 2019. Implicitizing rational
  curves by the method of moving quadrics. Computer-Aided Design 114, 101--111.

\bibitem[{Bus{\'e} and Luu~Ba(2010)}]{buse:inria-00468964}
Bus{\'e}, L., Luu~Ba, T., Oct. 2010. {Matrix-based Implicit Representations of
  Rational Algebraic Curves and Applications}. {CAGD} 27~(9), 681--699.
\newline\urlprefix\url{https://hal.inria.fr/inria-00468964}

\bibitem[{Caravantes et~al.(2014)Caravantes, Fioravanti, Gonzalez-Vega, and
  Necula}]{Caravantes2014ComputingTT}
Caravantes, J., Fioravanti, M., Gonzalez-Vega, L., Necula, I., 2014. Computing
  the topology of an arrangement of implicit and parametric curves given by
  values. In: Gerdt, V.~P., Koepf, W., Seiler, W.~M., Vorozhtsov, E.~V. (Eds.),
  Computer Algebra in Scientific Computing. Springer, Cham, pp. 59--73.

\bibitem[{Cheng et~al.(2013)Cheng, Jin, and Lazard}]{CHENG201318}
Cheng, J.-S., Jin, K., Lazard, D., 2013. Certified rational parametric
  approximation of real algebraic space curves with local generic position
  method. Journal of Symbolic Computation 58, 18 -- 40.
\newline\urlprefix\url{http://www.sciencedirect.com/science/article/pii/S0747717113000953}

\bibitem[{Chionh and Sederberg(2001)}]{CHIONH200121}
Chionh, E.-W., Sederberg, T.~W., 2001. On the minors of the implicitization
  b\'ezout matrix for a rational plane curve. CAGD 18~(1), 21 -- 36.

\bibitem[{Cox et~al.(2011)Cox, Kustin, Polini, and Ulrich}]{Cox-syz}
Cox, D., Kustin, A., Polini, C., Ulrich, B., 02 2011. A study of singularities
  on rational curves via syzygies. Memoirs of the American Mathematical Society
  222.

\bibitem[{Diatta et~al.(2018)Diatta, Diatta, Rouillier, Roy, and
  Sagraloff}]{DDRRS-topo-2018}
Diatta, D.~N., Diatta, S., Rouillier, F., Roy, M.-F., Sagraloff, M., 2018.
  Bounds for polynomials on algebraic numbers and application to curve
  topology. arXiv preprint arXiv:1807.10622 (To appear in Discrete \&
  Computational Geometry).

\bibitem[{Diatta et~al.(2008)Diatta, Mourrain, and
  Ruatta}]{10.1145/1390768.1390778}
Diatta, D.~N., Mourrain, B., Ruatta, O., 2008. On the computation of the
  topology of a non-reduced implicit space curve. In: Proceedings of the
  Twenty-First International Symposium on Symbolic and Algebraic Computation.
  ISSAC '08. Association for Computing Machinery, New York, NY, USA, p.
  47–54.
\newline\urlprefix\url{https://doi.org/10.1145/1390768.1390778}

\bibitem[{Diochnos et~al.(2009)Diochnos, Emiris, and Tsigaridas}]{det-jsc-2009}
Diochnos, D.~I., Emiris, I.~Z., Tsigaridas, E.~P., 2009. On the asymptotic and
  practical complexity of solving bivariate systems over the reals. J. Symb.
  Comput. 44~(7), 818--835, (Special issue on ISSAC 2007).

\bibitem[{Farouki et~al.(2010)Farouki, Giannelli, and
  Sestini}]{10.1007/978-3-642-11620-9_13}
Farouki, R.~T., Giannelli, C., Sestini, A., 2010. Geometric design using space
  curves with rational rotation-minimizing frames. In: D{\ae}hlen, M., Floater,
  M., Lyche, T., Merrien, J.-L., M{\o}rken, K., Schumaker, L.~L. (Eds.),
  Mathematical Methods for Curves and Surfaces. Springer, pp. 194--208.

\bibitem[{Fulton(1969)}]{Fulton}
Fulton, W., 1969. Algebraic Curves. An Introduction to Algebraic Geometry.
  Addison~Wesley.

\bibitem[{Fulton(1984)}]{FulInter}
Fulton, W., 1984. Introduction to Intersection Theory in Algebraic Geometry.
  Cbms Regional Conference Series in Mathematics. American Mathematical
  Society.

\bibitem[{Gao and Chou(1992)}]{GAO1992459}
Gao, X.-S., Chou, S.-C., 1992. Implicitization of rational parametric
  equations. J. Symb. Comput. 14~(5), 459 -- 470.
\newline\urlprefix\url{http://www.sciencedirect.com/science/article/pii/074771719290017X}

\bibitem[{Gutierrez et~al.(2002{\natexlab{a}})Gutierrez, Rubio, and
  Sevilla}]{GUTIERREZ2002545}
Gutierrez, J., Rubio, R., Sevilla, D., 2002{\natexlab{a}}. On multivariate
  rational function decomposition. J. Symb. Comput. 33~(5), 545 -- 562.
\newline\urlprefix\url{http://www.sciencedirect.com/science/article/pii/S0747717100905297}

\bibitem[{Gutierrez et~al.(2002{\natexlab{b}})Gutierrez, Rubio, and Yu}]{DRes}
Gutierrez, J., Rubio, R., Yu, J.-T., 08 2002{\natexlab{b}}. D-resultant for
  rational functions. Proc. American Mathematical Society 130.

\bibitem[{Jia et~al.(2018)Jia, Shi, and Chen}]{JIA20182}
Jia, X., Shi, X., Chen, F., 2018. Survey on the theory and applications of
  $\mu$-bases for rational curves and surfaces. J. Comput. Appl. Math. 329,
  2--23.

\bibitem[{Kahoui(2008)}]{ELKAHOUI2008235}
Kahoui, M.~E., 2008. Topology of real algebraic space curves. J. Symb. Comput.
  43~(4), 235 -- 258.
\newline\urlprefix\url{http://www.sciencedirect.com/science/article/pii/S0747717107001265}

\bibitem[{Katsamaki et~al.(2020{\natexlab{a}})Katsamaki, Rouillier, Tsigaridas,
  and Zafeirakopoulos}]{krtz-ptopo-20}
Katsamaki, C., Rouillier, F., Tsigaridas, E., Zafeirakopoulos, Z.,
  2020{\natexlab{a}}. On the geometry and the topology of parametric curves.
  In: Proc. 45th Int'l Symposium on Symbolic and Algebraic Computation. ISSAC
  '20. ACM, NY, USA, p. 281–288.

\bibitem[{Katsamaki et~al.(2020{\natexlab{b}})Katsamaki, Rouillier, Tsigaridas,
  and Zafeirakopoulos}]{krtz-ptopo-soft}
Katsamaki, C., Rouillier, F., Tsigaridas, E.~P., Zafeirakopoulos, Z.,
  2020{\natexlab{b}}. {PTOPO: A M}aple package for the topology of parametric
  curves. {ACM} Commun. Comput. Algebra 54~(2), 49--52.

\bibitem[{Kobel and Sagraloff(2014)}]{Kobel-complexity}
Kobel, A., Sagraloff, M., 08 2014. On the complexity of computing with planar
  algebraic curves. J. Complexity 31.

\bibitem[{Li and Cripps(1997)}]{LI1997491}
Li, Y.-M., Cripps, R.~J., 1997. Identification of inflection points and cusps
  on rational curves. CAGD 14~(5), 491 -- 497.
\newline\urlprefix\url{http://www.sciencedirect.com/science/article/pii/S0167839696000416}

\bibitem[{Lickteig and Roy(2001)}]{LicRoy-sub-res-01}
Lickteig, T., Roy, M.-F., Mar. 2001. Sylvester{\textendash}{H}abicht sequences
  and fast {C}auchy index computation. J. Symb. Comput. 31~(3), 315--341.
\newline\urlprefix\url{https://doi.org/10.1006/jsco.2000.0427}

\bibitem[{Mahler(1962)}]{Mahler-ineq-mpoly-62}
Mahler, K., 1962. On some inequalities for polynomials in several variables. J.
  London Mathematical Society 1~(1), 341--344.

\bibitem[{Manocha and Canny(1992)}]{MANOCHA19921}
Manocha, D., Canny, J.~F., 1992. Detecting cusps and inflection points in
  curves. CAGD 9~(1), 1 -- 24.
\newline\urlprefix\url{http://www.sciencedirect.com/science/article/pii/016783969290050Y}

\bibitem[{Pan(2002)}]{Pan02jsc}
Pan, V., 2002. Univariate polynomials: Nearly optimal algorithms for numerical
  factorization and rootfinding. J. Symb. Comput. 33~(5), 701--733.

\bibitem[{Pan and Tsigaridas(2017)}]{PAN2017138}
Pan, V., Tsigaridas, E., 2017. Accelerated approximation of the complex roots
  and factors of a univariate polynomial. Theor. Computer Science 681, 138 --
  145.
\newline\urlprefix\url{http://www.sciencedirect.com/science/article/pii/S0304397517302529}

\bibitem[{Park(2002)}]{park02}
Park, H., 08 2002. Effective computation of singularities of parametric affine
  curves. J. Pure and Applied Algebra 173, 49--58.

\bibitem[{P{\'e}rez-D{\'\i}az(2006)}]{Perez-proper-06}
P{\'e}rez-D{\'\i}az, S., 2006. On the problem of proper reparametrization for
  rational curves and surfaces. CAGD 23~(4), 307--323.

\bibitem[{P\'erez-D\'iaz(2007)}]{PEREZDIAZ2007835}
P\'erez-D\'iaz, S., 2007. Computation of the singularities of parametric plane
  curves. J. Symb. Comput. 42~(8), 835 -- 857.
\newline\urlprefix\url{http://www.sciencedirect.com/science/article/pii/S0747717107000636}

\bibitem[{Recio(2007)}]{Carlo-2007}
Recio, C. A.~T., 02 2007. Plotting missing points and branches of real
  parametric curves. Applicable Algebra in Engineering, Communication and
  Computing 18.

\bibitem[{Rouillier(1999)}]{Rouillier1999SolvingZS}
Rouillier, F., 1999. Solving zero-dimensional systems through the rational
  univariate representation. Applicable Algebra in Engineering, Communication
  and Computing 9, 433--461.

\bibitem[{Rubio et~al.(2009)Rubio, Serradilla, and V\'elez}]{RUBIO2009490}
Rubio, R., Serradilla, J., V\'elez, M., 2009. Detecting real singularities of a
  space curve from a real rational parametrization. J. Symb. Comput. 44~(5),
  490 -- 498.
\newline\urlprefix\url{http://www.sciencedirect.com/science/article/pii/S0747717108001405}

\bibitem[{Sch{\"o}nhage(1988)}]{SCHONHAGE1988365}
Sch{\"o}nhage, A., 1988. Probabilistic computation of integer polynomial gcds.
  J. Algorithms 9~(3), 365 -- 371.
\newline\urlprefix\url{http://www.sciencedirect.com/science/article/pii/0196677488900272}

\bibitem[{Sederberg(1986)}]{Sed-improper-param-86}
Sederberg, T.~W., May 1986. Improperly parametrized rational curves. CAGD
  3~(1), 67--75.
\newline\urlprefix\url{http://www.sciencedirect.com/science/article/pii/0167839686900257}

\bibitem[{Sederberg and Chen(1995)}]{SedChen-implicit-95}
Sederberg, T.~W., Chen, F., 1995. Implicitization using moving curves and
  surfaces. In: Proc. of the 22nd Annual Conference on Computer Graphics and
  Interactive Techniques. SIGGRAPH '95. NY, USA, pp. 301--308.
\newline\urlprefix\url{https://doi.org/10.1145/218380.218460}

\bibitem[{Sendra(2002)}]{Sendra2002NormalPO}
Sendra, J.~R., 2002. Normal parametrizations of algebraic plane curves. J.
  Symb. Comput. 33, 863--885.

\bibitem[{Sendra and Winkler(1999)}]{Sendra:1999:ARR:2378083.2378093}
Sendra, J.~R., Winkler, F., Apr. 1999. Algorithms for rational real algebraic
  curves. Fundam. Inf. 39~(1,2), 211--228.
\newline\urlprefix\url{http://dl.acm.org/citation.cfm?id=2378083.2378093}

\bibitem[{Sendra et~al.(2008)Sendra, Winkler, and
  P{\'e}rez-D{\'\i}az}]{SenWinPer-book-08}
Sendra, J.~R., Winkler, F., P{\'e}rez-D{\'\i}az, S., 2008. Rational algebraic
  curves. Algorithms and Computation in Mathematics 22.

\bibitem[{Strzebonski and Tsigaridas(2019)}]{STRZEBONSKI201931}
Strzebonski, A., Tsigaridas, E., 2019. Univariate real root isolation in an
  extension field and applications. J. Symb. Comput. 92, 31 -- 51.
\newline\urlprefix\url{http://www.sciencedirect.com/science/article/pii/S0747717117301256}

\bibitem[{van~den Essen and YU(1997)}]{Essen}
van~den Essen, A., YU, J.-T., 01 1997. The {D}-resultant, singularities and the
  degree of unfaithfulness. Proc. American Mathematical Society 125.

\bibitem[{von~zur Gathen and Gerhard(2013)}]{gg-mca-13}
von~zur Gathen, J., Gerhard, J., 2013. Modern computer algebra, 3rd Edition.
  Cambridge University Press.

\bibitem[{Walker(1978)}]{Walker}
Walker, R.~J., 1978. Algebraic curves. Springer-Verlag.

\bibitem[{Yap(1999)}]{10.5555/328435}
Yap, C.~K., 1999. Fundamental Problems of Algorithmic Algebra. Oxford
  University Press, Inc., USA.

\end{thebibliography}

\end{document}